\documentclass[a4paper,UKenglish,cleveref,autoref,thm-restate]{lipics-v2021}

\pdfoutput=1 
\hideLIPIcs  

\usepackage{tikzit}
\lstdefinestyle{sql}{
  language=SQL,
  basicstyle=\small\ttfamily,
  keywordstyle=\color{mutedblue}\bfseries,
  commentstyle=\color{grey}\itshape,
  showstringspaces=false,
  morekeywords={JOIN, ON},
}

\tikzstyle{dot}=[fill=black, draw=black, shape=circle, tikzit shape=circle, minimum size=3pt, inner sep=0]
\tikzstyle{not dot}=[fill=white, draw={rgb,255: red,50; green,60; blue,87}, shape=circle, inner sep=0, minimum size=4pt]
\tikzstyle{light blue dot}=[fill={rgb,255: red,225; green,230; blue,248}, draw={rgb,255: red,48; green,60; blue,89}, shape=circle]
\tikzstyle{large dot}=[fill=black, draw=black, shape=circle, inner sep=0, minimum size=5pt]
\tikzstyle{op dot}=[fill=white, draw=black, shape=circle]
\tikzstyle{op turquoise}=[fill={rgb,255: red,229; green,255; blue,250}, draw=black, shape=circle]
\tikzstyle{op red}=[fill={rgb,255: red,255; green,228; blue,229}, draw=black, shape=circle]
\tikzstyle{op orange}=[fill={rgb,255: red,255; green,225; blue,183}, draw=black, shape=circle]
\tikzstyle{op brightblue}=[fill={rgb,255: red,216; green,237; blue,255}, draw=black, shape=circle]
\tikzstyle{box}=[fill=white, draw=black, shape=rectangle]
\tikzstyle{light greengrey box}=[fill=white, draw={rgb,255: red,191; green,204; blue,187}, shape=rectangle]
\tikzstyle{magenta box}=[fill=white, draw={rgb,255: red,217; green,22; blue,133}, shape=rectangle]
\tikzstyle{brightblue box}=[fill=white, draw={rgb,255: red,5; green,123; blue,249}, shape=rectangle]
\tikzstyle{truquoise box}=[fill=white, draw={rgb,255: red,60; green,183; blue,169}, shape=rectangle]
\tikzstyle{orange box}=[fill=white, draw={rgb,255: red,229; green,146; blue,12}, shape=rectangle]
\tikzstyle{green box}=[fill=white, draw={rgb,255: red,129; green,150; blue,22}, shape=rectangle]

\tikzstyle{dark blue solid}=[-, fill={rgb,255: red,225; green,230; blue,248}, draw={rgb,255: red,50; green,60; blue,87}, tikzit draw={rgb,255: red,50; green,60; blue,87}]
\tikzstyle{black solid}=[-]
\tikzstyle{dashed dark blue}=[-, draw={rgb,255: red,50; green,60; blue,87}, dashed]
\tikzstyle{dashed arrow}=[draw={rgb,255: red,50; green,60; blue,87}, ->, dashed]
\tikzstyle{red dashed}=[-, draw={rgb,255: red,188; green,49; blue,49}, dashed, line width=0.3mm]
\tikzstyle{green dashed}=[-, draw={rgb,255: red,129; green,150; blue,22}, dashed, line width=0.3mm]
\tikzstyle{orange dashed}=[-, draw={rgb,255: red,229; green,146; blue,12}, dashed, line width=0.3mm]
\tikzstyle{purple dashed}=[-, draw={rgb,255: red,147; green,83; blue,164}, dashed, line width=0.3mm]
\tikzstyle{brightblue dashed}=[-, draw={rgb,255: red,5; green,123; blue,249}, line width=0.3mm, dashed]
\tikzstyle{magenta dashed}=[-, draw={rgb,255: red,217; green,22; blue,133}, dashed, line width=0.3mm]
\tikzstyle{turquoise dahsed}=[-, draw={rgb,255: red,60; green,183; blue,169}, line width=0.3mm, dashed]
\tikzstyle{yellow dashed}=[-, draw={rgb,255: red,236; green,222; blue,27}, dashed, line width=0.5mm]
\tikzstyle{brown dashed}=[-, draw={rgb,255: red,127; green,84; blue,42}, dashed, line width=0.3mm]
\tikzstyle{black arrow}=[->]
\tikzstyle{lightgrey line}=[-, draw={rgb,255: red,219; green,219; blue,219}]
\tikzstyle{solid bluegrey}=[-, draw={rgb,255: red,138; green,146; blue,170}, line width=0.4mm]

\newtheorem{problem}[theorem]{Problem}

\newtheorem{mainresult}[theorem]{Main result}

\crefalias{example}{example}
\crefname{example}{example}{examples}
\Crefname{example}{Example}{Examples}

\definecolor{mutedblue}{RGB}{76, 128, 143}
\definecolor{darkblue}{RGB}{50, 60, 87}
\definecolor{grey}{RGB}{150, 150, 150}

\newcommand{\ass}[2]{\mathsf{A}_{#1}(#2)}

\newcommand{\lett}{\textcolor{mutedblue}{\mathsf{let}}\;}
\newcommand{\func}[1]{\textcolor{darkblue}{\mathtt{#1}}}

\title{Fixed-parameter tractable inference
for discrete probabilistic programs,
via string diagram algebraisation}

\titlerunning{FPT inference for discrete probabilistic programs}

\author{Benedikt Peterseim}{Universiteit Twente, Enschede, the Netherlands}{benedikt.peterseim@utwente.nl}{0009-0004-3510-4325}{}
\author{Milan Lopuha\"a-Zwakenberg}{Universiteit Twente, Enschede, the Netherlands}{m.a.lopuhaa@utwente.nl}{0000-0001-5687-854X}{}

\authorrunning{B. Peterseim and M. Lopuha\"a-Zwakenberg} 

\Copyright{Benedikt Peterseim and Milan Lopuha\"a-Zwakenberg} 

\ccsdesc[500]{Mathematics of computing~Probabilistic inference problems}
\ccsdesc[400]{Theory of computation~Fixed parameter tractability}
\ccsdesc[400]{Theory of computation~Categorical semantics}

\keywords{
probabilistic programming, 
string diagrams, 
treewidth, 
fixed-parameter tractability, 
probabilistic inference, 
parameterised complexity,
hypergraph categories
}

\relatedversion{} 

\nolinenumbers

\EventEditors{John Q. Open and Joan R. Access}
\EventNoEds{2}
\EventLongTitle{42nd Conference on Very Important Topics (CVIT 2016)}
\EventShortTitle{CVIT 2016}
\EventAcronym{CVIT}
\EventYear{2016}
\EventDate{December 24--27, 2016}
\EventLocation{Little Whinging, United Kingdom}
\EventLogo{}
\SeriesVolume{42}
\ArticleNo{23}

\begin{document}

\maketitle

\begin{abstract}
    Discrete probabilistic programs (DPPs) provide a highly expressive formalism for compactly defining arbitrary finite probabilistic models.
    This expressivity comes at a price: DPP
    inference is PSPACE-hard.
    In this work, we show that DPP inference only takes polynomial time for programs that are `structurally simple'.
    More precisely, inference can be performed in polynomial time when the \emph{primal graph} of each function appearing in the probabilistic program has bounded treewidth,
    and the inverse \emph{acceptance probability} is at most exponential in the size of the probabilistic program. Existing  algorithms do not achieve this performance guarantee.

    Our method relies on finding suitable decompositions, \emph{algebraisations}, of the string diagrams underlying DPPs, employing existing algorithms for tree decompositions. This is independent of the probabilistic setting of DPPs and has direct applications to many problems, such as evaluating queries on relational databases and cybersecurity risk assessment via attack trees.
\end{abstract}

\section{Introduction}

\subsection{Discrete probabilistic programs (DPPs)} Probabilistic programming languages are declarative formal languages for specifying probabilistic models. 
We focus on \emph{discrete} probabilistic programs (DPPs)
as introduced in \cite{holtzen2020scaling}, with Boolean variables. 
Thus we allow exactly the following constructs:
\begin{enumerate}
\item $\lett\!\!$ expressions, or \emph{assignments};
\item the basic Boolean connectives $\land, \lor, \lnot$;
\item a biased coin flip function $\func{flip}(p)$, for each $p\in [0,1]\cap \mathbb{Q}$;
\item $\func{observe}$ expressions, conditioning on an event being true;
\item function definitions.
\end{enumerate}
Extensions to more general finite types and constructs (such as bounded iteration) can be reduced to this Boolean core \cite{holtzen2020scaling}.
An example of a DPP is given in Figure~\ref{fig:code-and-diagram}. 
DPPs are highly expressive and include, for example, \emph{fault trees}, which are ubiquitous in reliability engineering \cite{ruijters2015fault,stamatelatos2002fault}, as well as \emph{Bayesian networks} \cite{pearl1985bayesian,pearl1988probabilistic} which are used for probabilistic reasoning in many fields including healthcare \cite{mclachlan2020bayesian} and cybersecurity \cite{xie2010using}. 
Further applications of discrete probabilistic programming, 
and its role for probabilistic programming in general, 
are described in detail in \cite{holtzen2020scaling}.


\begin{figure}[!t]
\centering
\vspace{-6mm}
\hspace{5mm}
\begin{minipage}[t]{0.36\linewidth}
\raggedright
\small
\vspace{-7mm}
\hrule height 0.4pt
\vspace{2mm}
\(
\begin{aligned}[t]
&\quad\func{test}(z) := \\
&\quad\quad\lett t_{\top|\top} = \func{flip}(0.99);\\
&\quad\quad\lett t_{\top|\bot} = \func{flip}(0.02);\\
&\quad\quad\lett t_{\func{TP}} = z \land t_{\top|\top};\\
&\quad\quad\lett t_{\func{FP}} = \lnot z \land t_{\top|\bot};\\
&\quad\quad\lett t = t_{\func{TP}} \lor t_{\func{FP}};\\
&\quad\quad t \\
\\
&\quad\func{hasDisease}(\,) := \\
&\quad\quad\lett x = \func{flip}(10^{-4});\\
&\quad\quad\lett t_1 = \func{test}(x);\\
&\quad\quad\lett t_2 = \func{test}(x);\\
&\quad\quad\lett y = t_1 \land \lnot t_2;\\
&\quad\quad \func{observe}(y);\\
&\quad\quad x
\end{aligned}
\)
\vspace{2mm}
\hrule height 0.4pt
\end{minipage}%
\hspace{-0.05\linewidth}%
\begin{minipage}[t]{0.54\linewidth}
\centering
\scalebox{0.60}{
\begin{tikzpicture}
	\begin{pgfonlayer}{nodelayer}
		\node [style=none] (14) at (-8.25, -17.75) {};
		\node [style=none] (15) at (-7.25, -16) {};
		\node [style=none] (16) at (-6.25, -17.75) {};
		\node [style=none] (18) at (-7.25, -16) {};
		\node [style=none] (19) at (-6.75, -17.75) {};
		\node [style=none] (20) at (-7.75, -17.75) {};
		\node [style=none] (21) at (-7.75, -18.5) {};
		\node [style=none] (22) at (-6.75, -18.5) {};
		\node [style=none] (23) at (-6.75, -14.5) {};
		\node [style=none] (24) at (-5.75, -12.5) {};
		\node [style=none] (25) at (-4.75, -14.5) {};
		\node [style=none] (26) at (-5.75, -13.75) {};
		\node [style=none] (27) at (-5.75, -11.25) {};
		\node [style=none] (28) at (-5.25, -14) {};
		\node [style=none] (29) at (-6.25, -14) {};
		\node [style=none] (30) at (-6.25, -14.25) {};
		\node [style=none] (31) at (-5.25, -14.25) {};
		\node [style=none] (32) at (-5.25, -17.75) {};
		\node [style=none] (33) at (-4.25, -16) {};
		\node [style=none] (34) at (-3.25, -17.75) {};
		\node [style=none] (35) at (-4.25, -16) {};
		\node [style=none] (36) at (-3.75, -17.75) {};
		\node [style=none] (37) at (-4.75, -17.75) {};
		\node [style=none] (38) at (-4.75, -18) {};
		\node [style=none] (39) at (-3.75, -18.25) {};
		\node [style=none] (40) at (-5.5, -20.75) {};
		\node [style=none] (41) at (-4.75, -19.5) {};
		\node [style=none] (42) at (-4, -20.75) {};
		\node [style=dot] (43) at (-4.75, -18.5) {};
		\node [style=none] (44) at (-4.75, -20.75) {};
		\node [style=light blue dot] (52) at (-6.25, -22.25) {$0.99$};
		\node [style=light blue dot] (57) at (-2.75, -22) {$0.02$};
		\node [style=none] (58) at (-7.75, -22.75) {};
		\node [style=none] (60) at (-4.75, -22.75) {};
		\node [style=none] (61) at (-6.25, -25.5) {};
		\node [style=dot] (62) at (-6.25, -24) {};
		\node [style=none] (64) at (-1.75, -20.5) {{\Large $t_{\top|\bot}$}};
		\node [style=none] (67) at (-5.75, -24.5) {{\Large $z$}};
		\node [style=none] (68) at (-5.8, -19.5) {{\Large $t_{\top|\top}$}};
		\node [style=none] (69) at (-7.75, -15) {{\Large $t_{\func{TP}}$}};
		\node [style=none] (70) at (-3.75, -15) {{\Large $t_{\func{FP}}$}};
		\node [style=none] (71) at (-5.25, -12) {{\Large $t$}};
		\node [style=not dot] (72) at (-4.75, -19.25) {};
		\node [style=none] (73) at (-9.75, -10.75) {};
		\node [style=none] (74) at (-0.75, -10.75) {};
		\node [style=none] (75) at (-9.75, -26.5) {};
		\node [style=none] (76) at (-0.75, -26.5) {};
		\node [style=dot] (78) at (-7, -15.25) {};
		\node [style=dot] (79) at (-4.5, -15.25) {};
		\node [style=dot] (80) at (-2.75, -20.5) {};
		\node [style=dot] (81) at (-6.75, -19.5) {};
		\node [style=dot] (82) at (-5.75, -12) {};
		\node [style=light blue dot] (83) at (-5.5, -7.75) {$10^{-4}$};
		\node [style=none] (84) at (-7, -5) {};
		\node [style=none] (85) at (-4.25, -5.25) {};
		\node [style=dot] (86) at (-5.5, -6.25) {};
		\node [style=none] (87) at (-5.5, -5.75) {{\Large $x$}};
		\node [style=none] (88) at (-7.5, -3.25) {};
		\node [style=none] (89) at (-6.5, -3.25) {};
		\node [style=none] (90) at (-7.5, -5) {};
		\node [style=none] (91) at (-6.5, -5) {};
		\node [style=none] (92) at (-4.25, -5.25) {};
		\node [style=none] (93) at (-4.75, -3.5) {};
		\node [style=none] (94) at (-3.75, -3.5) {};
		\node [style=none] (95) at (-4.75, -5.25) {};
		\node [style=none] (96) at (-3.75, -5.25) {};
		\node [style=none] (97) at (-6.25, 0.5) {};
		\node [style=none] (98) at (-5.25, 2.25) {};
		\node [style=none] (99) at (-4.25, 0.5) {};
		\node [style=dot] (100) at (-5.25, 3) {};
		\node [style=none] (101) at (-4.75, 0.5) {};
		\node [style=none] (102) at (-5.75, 0.5) {};
		\node [style=none] (103) at (-5.75, 0.25) {};
		\node [style=none] (104) at (-4.75, 0) {};
		\node [style=none] (105) at (-5.5, -2) {};
		\node [style=none] (106) at (-4.75, -0.75) {};
		\node [style=none] (107) at (-4, -2) {};
		\node [style=dot] (108) at (-4.75, 0) {};
		\node [style=none] (109) at (-4.75, -2) {};
		\node [style=not dot] (110) at (-4.75, -0.5) {};
		\node [style=none] (111) at (-4.25, -3.5) {};
		\node [style=dot] (112) at (-4.25, -3) {};
		\node [style=dot] (113) at (-7, -2) {};
		\node [style=none] (114) at (-7, -3.25) {};
		\node [style=none] (115) at (-7.5, -2) {{\Large $t_1$}};
		\node [style=none] (116) at (-3.5, -2.75) {{\Large $t_2$}};
		\node [style=none] (117) at (-5.25, 3.5) {};
		\node [style=none] (118) at (-4.75, 3) {{\Large $y$}};
		\node [style=none] (119) at (-6.25, 4) {};
		\node [style=none] (120) at (-4.25, 4) {};
		\node [style=large dot] (121) at (-5.25, 4) {};
		\node [style=none] (122) at (-5.25, 4.5) {};
		\node [style=none] (123) at (-5.75, 4) {};
		\node [style=none] (124) at (-4.75, 4) {};
		\node [style=none] (125) at (-5.75, 4) {};
		\node [style=none] (127) at (-3.75, -4.25) {};
		\node [style=none] (128) at (-7.5, -4) {};
		\node [style=none] (130) at (-7.75, -8.75) {};
		\node [style=none] (131) at (-3.5, -9) {};
		\node [style=none] (132) at (-5.25, -10.75) {};
		\node [style=none] (133) at (-2.25, -3) {};
		\node [style=none] (134) at (-3.25, 5.25) {};
		\node [style=none] (135) at (-8.75, -11.25) {{\Large $\func{test}$}};
		\node [style=none] (137) at (-3.5, -4.5) {};
	\end{pgfonlayer}
	\begin{pgfonlayer}{edgelayer}
		\draw [style=dark blue solid] (16.center)
			 to (14.center)
			 to [in=-180, out=90] (15.center)
			 to [in=90, out=0] cycle;
		\draw [style=dark blue solid] (18.center) to (15.center);
		\draw [style=dark blue solid] (20.center) to (21.center);
		\draw [style=dark blue solid] (19.center) to (22.center);
		\draw [style=dark blue solid] (25.center)
			 to [in=-30, out=90] (24.center)
			 to [in=90, out=210] (23.center)
			 to [in=195, out=60, looseness=0.75] (26.center)
			 to [in=120, out=-15, looseness=0.75] cycle;
		\draw [style=dark blue solid] (27.center) to (24.center);
		\draw [style=dark blue solid] (29.center) to (30.center);
		\draw [style=dark blue solid] (28.center) to (31.center);
		\draw [style=dark blue solid] (32.center)
			 to [in=-180, out=90] (33.center)
			 to [in=90, out=0] (34.center)
			 to cycle;
		\draw [style=dark blue solid] (35.center) to (33.center);
		\draw [style=dark blue solid] (37.center) to (38.center);
		\draw [style=dark blue solid] (36.center) to (39.center);
		\draw [style=dark blue solid] (42.center)
			 to (40.center)
			 to (41.center)
			 to cycle;
		\draw [style=dark blue solid] (43) to (41.center);
		\draw [style=dark blue solid] (43) to (38.center);
		\draw [style=black solid] (58.center) to (21.center);
		\draw [style=black solid, in=-90, out=180] (62) to (58.center);
		\draw [style=black solid, in=-90, out=0] (62) to (60.center);
		\draw [style=dark blue solid] (62) to (61.center);
		\draw [style=dashed dark blue] (75.center) to (73.center);
		\draw [style=dashed dark blue] (74.center) to (73.center);
		\draw [style=dashed dark blue] (74.center) to (76.center);
		\draw [style=dashed dark blue] (75.center) to (76.center);
		\draw [style=black solid, in=90, out=-90, looseness=1.50] (39.center) to (80);
		\draw [style=black solid] (80) to (57);
		\draw [style=black solid] (22.center) to (81);
		\draw [style=black solid, in=90, out=-90] (81) to (52);
		\draw [style=black solid] (44.center) to (60.center);
		\draw [style=black solid, in=60, out=-90] (30.center) to (78);
		\draw [style=black solid, in=90, out=-135] (78) to (18.center);
		\draw [style=black solid, in=90, out=-45] (79) to (35.center);
		\draw [style=black solid, in=-90, out=135] (79) to (31.center);
		\draw [style=black solid, in=-90, out=180] (86) to (84.center);
		\draw [style=black solid, in=-90, out=0] (86) to (85.center);
		\draw [style=black solid] (83) to (86);
		\draw [style=dark blue solid] (97.center)
			 to [in=-180, out=90] (98.center)
			 to [in=90, out=0] (99.center)
			 to cycle;
		\draw [style=dark blue solid] (100) to (98.center);
		\draw [style=dark blue solid] (102.center) to (103.center);
		\draw [style=dark blue solid] (101.center) to (104.center);
		\draw [style=dark blue solid] (107.center)
			 to (105.center)
			 to (106.center)
			 to cycle;
		\draw [style=black solid] (111.center) to (112);
		\draw [style=black solid, in=-90, out=90] (112) to (109.center);
		\draw [style=black solid, in=-90, out=90] (113) to (103.center);
		\draw [style=black solid] (114.center) to (113);
		\draw [style=black solid] (106.center) to (110);
		\draw [style=black solid] (110) to (108);
		\draw [style=black solid] (100) to (117.center);
		\draw [style=dark blue solid] (120.center)
			 to [bend left=45, looseness=1.25] (119.center)
			 to [bend left=45, looseness=1.25] cycle;
		\draw [style=black solid, in=90, out=-180] (122.center) to (125.center);
		\draw [style=black solid, in=180, out=-90, looseness=0.75] (125.center) to (117.center);
		\draw [style=black solid, in=-90, out=0] (117.center) to (124.center);
		\draw [style=black solid, in=0, out=90] (124.center) to (122.center);
		\draw [style=dashed dark blue] (132.center)
			 to [in=-165, out=90, looseness=1.25] (131.center)
			 to [in=0, out=15, looseness=1.25] (127.center);
		\draw [style=dashed dark blue] (132.center)
			 to [in=-15, out=90, looseness=1.25] (130.center)
			 to [in=-180, out=165, looseness=1.50] (128.center);
		\draw [style=black solid, in=-90, out=0, looseness=1.25] (86) to (133.center);
		\draw [style=black solid, in=270, out=90] (133.center) to (134.center);
		\draw [style=dashed dark blue] (93.center) to (95.center);
		\draw [style=dashed dark blue] (95.center) to (96.center);
		\draw [style=dashed dark blue] (96.center) to (94.center);
		\draw [style=dashed dark blue] (93.center) to (94.center);
		\draw [style=dashed dark blue] (88.center) to (90.center);
		\draw [style=dashed dark blue] (90.center) to (91.center);
		\draw [style=dashed dark blue] (91.center) to (89.center);
		\draw [style=dashed dark blue] (88.center) to (89.center);
	\end{pgfonlayer}
\end{tikzpicture}
}
\end{minipage}
\caption{The event that a patient has an infectious disease given one positive and one negative test result, as a discrete probabilistic program $\func{hasDisease}$ (left) and a hierarchical string diagram (right). The disease occurs in $10^{-4}$ of the population; the test has a 99\% true positive rate (TP) and a 2\% false positive rate (FP).}
\label{fig:code-and-diagram}
\end{figure}

\subsection{The Boolean probabilistic inference problem}

The main subject of this paper is probabilistic inference: given a DPP with no inputs and a single output, what is the probability that this DPP returns \emph{true}? We cannot hope to efficiently compute this exactly: there exist programs of length $O(n)$ whose probability of returning \emph{true} is $2^{-2^{n}}$, which requires exponentially many digits to express; see \Cref{fig:counterexample}. Instead, we look at the approximate problem:

\begin{problem}\label{problem:boolean-probabilistic-inference}
	Given a discrete probabilistic program $\func{f}$ with no inputs and a single Boolean output, 
	and a natural number $d$, 
	compute the probability that $\func{f}$ returns \emph{true}, up to $d$ fractional binary digits.
\end{problem}

\Cref{problem:boolean-probabilistic-inference} is PSPACE-hard \cite{holtzen2020scaling}, 
even when restricting to deterministic DPPs,
showing that this specific hardness result has little to do with probability and instead stems from the expressive power of function definitions. 
However, without function definitions, probabilistic inference remains NP-hard even when restricted to fault trees \cite{lopuhaa2025fault} and Bayesian networks. 

For Bayesian networks, however, inference is known to be \emph{fixed-parameter tractable} \cite{lauritzen1988local}, 
i.e.~there are inference algorithms with polynomial time complexity given that a certain parameter is bounded. The relevant parameter is the \emph{treewidth} of the so-called moral graph of the Bayesian network, which measures how far away the network is from being `tree-like'. 
Similar results are known for fault trees \cite{lopuhaa2025fault} and the related problem of weighted model counting \cite{darwiche2002knowledge,darwiche2011sdd,dudek2020parallel}. 
However, such results do not exist for DPPs in general.

\begin{figure}[!t]
    \centering
    \vspace{-4mm}
    
    \begin{subfigure}[b]{0.38\linewidth}
        \centering
        \begin{minipage}[c]{0.35\linewidth}
            \raggedright
            \vspace{3mm}
            \rule{1.5\linewidth}{0.2pt} 
            \vspace{2mm}
            \scalebox{0.7}{
            \(
            \begin{aligned}[t]
            &\quad\func{f}_0(\,) := \\
            &\quad\qquad\lett x = \func{flip}(0.5);\\
            &\quad\qquad\lett y = \func{flip}(0.5);\\
            &\quad\qquad\lett z = x \land y;\\
            &\quad\qquad z \\
            \\
            &\quad\cdots \\
            \\
            &\quad\func{f}_{n+1}(\,) := \\
            &\quad\qquad\lett x = \func{f}_{n}(\,);\\
            &\quad\qquad\lett y = \func{f}_{n}(\,);\\
            &\quad\qquad\lett z = x \land y;\\
            &\quad\qquad z \\
            \end{aligned}
            \)
            }
            \vspace{2mm}
            \rule{1.5\linewidth}{0.2pt}
        \end{minipage}%
        \hfill 
        \begin{minipage}[c]{0.55\linewidth}
            \centering
            \scalebox{0.45}{
            \begin{tikzpicture}
	\begin{pgfonlayer}{nodelayer}
		\node [style=none] (0) at (-5.5, -0.5) {};
		\node [style=none] (1) at (-4.5, 1.25) {};
		\node [style=none] (2) at (-3.5, -0.5) {};
		\node [style=none] (3) at (-4.5, 1.25) {};
		\node [style=none] (4) at (-4, -0.5) {};
		\node [style=none] (5) at (-5, -0.5) {};
		\node [style=dot] (6) at (-5.25, -1.25) {};
		\node [style=dot] (7) at (-4.5, 1.75) {};
		\node [style=none] (8) at (-3.25, -2) {};
		\node [style=dot] (9) at (-3.75, -1.25) {};
		\node [style=none] (10) at (-4.5, 2.25) {};
		\node [style=none] (11) at (-5.75, -2) {};
		\node [style=none] (12) at (-6.25, -2) {};
		\node [style=none] (13) at (-5.25, -2) {};
		\node [style=none] (14) at (-3.75, -2) {};
		\node [style=none] (15) at (-2.75, -2) {};
		\node [style=none] (16) at (-3.75, -3.5) {};
		\node [style=none] (17) at (-2.75, -3.5) {};
		\node [style=none] (18) at (-6.25, -3.5) {};
		\node [style=none] (19) at (-5.25, -3.5) {};
		\node [style=none] (20) at (-5.5, -7.75) {};
		\node [style=none] (21) at (-4.5, -6) {};
		\node [style=none] (22) at (-3.5, -7.75) {};
		\node [style=none] (23) at (-4.5, -6) {};
		\node [style=none] (24) at (-4, -7.75) {};
		\node [style=none] (25) at (-5, -7.75) {};
		\node [style=dot] (26) at (-5.25, -8.5) {};
		\node [style=dot] (27) at (-4.5, -5.5) {};
		\node [style=none] (28) at (-3.25, -9.25) {};
		\node [style=dot] (29) at (-3.75, -8.5) {};
		\node [style=none] (30) at (-4.5, -5) {};
		\node [style=none] (31) at (-5.75, -9.25) {};
		\node [style=none] (32) at (-6.25, -9.25) {};
		\node [style=none] (33) at (-5.25, -9.25) {};
		\node [style=none] (34) at (-3.75, -9.25) {};
		\node [style=none] (35) at (-2.75, -9.25) {};
		\node [style=none] (36) at (-3.75, -10.75) {};
		\node [style=none] (37) at (-2.75, -10.75) {};
		\node [style=none] (38) at (-6.25, -10.75) {};
		\node [style=none] (39) at (-5.25, -10.75) {};
		\node [style=none] (41) at (-7.25, -4.5) {};
		\node [style=none] (42) at (-1.75, -4.5) {};
		\node [style=none] (43) at (-7.25, -11.5) {};
		\node [style=none] (44) at (-1.75, -11.5) {};
		\node [style=none] (45) at (-4.5, -4.5) {};
		\node [style=none] (46) at (-5.75, -3.5) {};
		\node [style=none] (47) at (-3.25, -3.5) {};
		\node [style=none] (48) at (-5.75, -10.75) {};
		\node [style=none] (49) at (-3.25, -10.75) {};
		\node [style=none] (50) at (-4.5, -12.25) {};
		\node [style=none] (51) at (-7.25, -12.25) {};
		\node [style=none] (52) at (-1.75, -12.25) {};
		\node [style=none] (53) at (-1.75, -13) {};
		\node [style=none] (54) at (-7.25, -13) {};
		\node [style=dot] (55) at (-4.5, -13.5) {};
		\node [style=dot] (56) at (-4.5, -14.25) {};
		\node [style=dot] (57) at (-4.5, -15) {};
		\node [style=none] (58) at (-7.25, -15.5) {};
		\node [style=none] (59) at (-7.25, -16.25) {};
		\node [style=none] (60) at (-1.75, -15.5) {};
		\node [style=none] (61) at (-1.75, -16.25) {};
		\node [style=none] (62) at (-7.25, -16.25) {};
		\node [style=none] (63) at (-5.75, -15.5) {};
		\node [style=none] (64) at (-3.25, -15.5) {};
		\node [style=none] (65) at (-4.5, -17.25) {};
		\node [style=none] (66) at (-5.5, -20.5) {};
		\node [style=none] (67) at (-4.5, -18.75) {};
		\node [style=none] (68) at (-3.5, -20.5) {};
		\node [style=none] (69) at (-4.5, -18.75) {};
		\node [style=none] (70) at (-4, -20.5) {};
		\node [style=none] (71) at (-5, -20.5) {};
		\node [style=dot] (72) at (-5.25, -21.25) {};
		\node [style=dot] (73) at (-4.5, -18.25) {};
		\node [style=light blue dot] (74) at (-3.25, -22.5) {0.5};
		\node [style=dot] (75) at (-3.75, -21.25) {};
		\node [style=none] (76) at (-4.5, -17.5) {};
		\node [style=light blue dot] (77) at (-5.75, -22.5) {0.5};
		\node [style=none] (86) at (-7.25, -17.25) {};
		\node [style=none] (87) at (-1.75, -17.25) {};
		\node [style=none] (88) at (-7.25, -23.75) {};
		\node [style=none] (89) at (-1.75, -23.75) {};
		\node [style=none] (90) at (-4.5, -17.25) {};
		\node [style=none] (91) at (-3, -8.25) {{\Large $y$}};
		\node [style=none] (92) at (-5.75, -8.25) {{\Large $x$}};
		\node [style=none] (93) at (-3, -1) {};
		\node [style=none] (94) at (-6, -1) {};
		\node [style=none] (95) at (-3, -21) {{\Large $y$}};
		\node [style=none] (96) at (-6, -21) {{\Large $x$}};
		\node [style=none] (97) at (-3.75, -18.25) {{\Large $z$}};
		\node [style=none] (98) at (-3.75, -5.5) {{\Large $z$}};
		\node [style=none] (99) at (-3.75, 1.75) {{\Large $z$}};
		\node [style=none] (100) at (-6.5, -5.25) {$\func{f}_{n}$};
		\node [style=none] (101) at (-6.25, -13) {$\func{f}_{n-1}$};
		\node [style=none] (102) at (-6.5, -18) {$\func{f}_0$};
	\end{pgfonlayer}
	\begin{pgfonlayer}{edgelayer}
		\draw [style=dark blue solid] (0.center)
			 to [in=-180, out=90] (1.center)
			 to [in=90, out=0] (2.center)
			 to cycle;
		\draw [style=dark blue solid] (3.center) to (1.center);
		\draw [style=dark blue solid, in=60, out=-90] (5.center) to (6);
		\draw [style=black solid] (7) to (3.center);
		\draw [style=black solid, in=90, out=-60] (9) to (8.center);
		\draw [style=black solid, in=120, out=-90] (4.center) to (9);
		\draw [style=black solid] (7) to (10.center);
		\draw [style=black solid, in=90, out=-120, looseness=1.25] (6) to (11.center);
		\draw [style=red dashed] (12.center) to (18.center);
		\draw [style=red dashed] (12.center) to (13.center);
		\draw [style=red dashed] (13.center) to (19.center);
		\draw [style=red dashed] (18.center) to (19.center);
		\draw [style=red dashed] (14.center) to (16.center);
		\draw [style=red dashed] (14.center) to (15.center);
		\draw [style=red dashed] (15.center) to (17.center);
		\draw [style=red dashed] (17.center) to (16.center);
		\draw [style=dark blue solid] (22.center)
			 to (20.center)
			 to [in=-180, out=90] (21.center)
			 to [in=90, out=0] cycle;
		\draw [style=dark blue solid] (23.center) to (21.center);
		\draw [style=dark blue solid, in=60, out=-90] (25.center) to (26);
		\draw [style=black solid] (27) to (23.center);
		\draw [style=black solid, in=90, out=-60] (29) to (28.center);
		\draw [style=black solid, in=120, out=-90] (24.center) to (29);
		\draw [style=black solid] (27) to (30.center);
		\draw [style=black solid, in=90, out=-120, looseness=1.25] (26) to (31.center);
		\draw [style=brightblue dashed] (32.center) to (38.center);
		\draw [style=brightblue dashed] (32.center) to (33.center);
		\draw [style=brightblue dashed] (33.center) to (39.center);
		\draw [style=brightblue dashed] (38.center) to (39.center);
		\draw [style=brightblue dashed] (34.center) to (36.center);
		\draw [style=brightblue dashed] (34.center) to (35.center);
		\draw [style=brightblue dashed] (35.center) to (37.center);
		\draw [style=brightblue dashed] (37.center) to (36.center);
		\draw [style=red dashed] (41.center) to (43.center);
		\draw [style=red dashed] (41.center) to (42.center);
		\draw [style=red dashed] (42.center) to (44.center);
		\draw [style=red dashed] (44.center) to (43.center);
		\draw [style=red dashed, in=-90, out=105, looseness=1.25] (45.center) to (46.center);
		\draw [style=red dashed, in=-90, out=75, looseness=1.25] (45.center) to (47.center);
		\draw [style=brightblue dashed, in=-90, out=105, looseness=1.25] (50.center) to (48.center);
		\draw [style=brightblue dashed, in=-90, out=75, looseness=1.25] (50.center) to (49.center);
		\draw [style=brightblue dashed] (51.center) to (52.center);
		\draw [style=brightblue dashed] (51.center) to (54.center);
		\draw [style=brightblue dashed] (52.center) to (53.center);
		\draw [style=orange dashed] (58.center) to (62.center);
		\draw [style=orange dashed] (62.center) to (61.center);
		\draw [style=orange dashed] (61.center) to (60.center);
		\draw [style=dark blue solid] (68.center)
			 to (66.center)
			 to [in=-180, out=90] (67.center)
			 to [in=90, out=0] cycle;
		\draw [style=dark blue solid] (69.center) to (67.center);
		\draw [style=dark blue solid, in=60, out=-90] (71.center) to (72);
		\draw [style=black solid] (73) to (69.center);
		\draw [style=black solid, in=90, out=-60] (75) to (74);
		\draw [style=black solid, in=120, out=-90] (70.center) to (75);
		\draw [style=black solid] (73) to (76.center);
		\draw [style=black solid, in=90, out=-120, looseness=1.25] (72) to (77);
		\draw [style=purple dashed] (86.center) to (88.center);
		\draw [style=purple dashed] (86.center) to (87.center);
		\draw [style=purple dashed] (87.center) to (89.center);
		\draw [style=purple dashed] (89.center) to (88.center);
		\draw [style=purple dashed, in=105, out=-90, looseness=1.25] (63.center) to (90.center);
		\draw [style=purple dashed, in=-90, out=75, looseness=1.25] (90.center) to (64.center);
	\end{pgfonlayer}
\end{tikzpicture}
            }
        \end{minipage}
        
        \vspace{5mm}
        \centering(a)
        \phantomcaption\label{fig:counterexample}
    \end{subfigure}%
    \begin{subfigure}[b]{0.59\linewidth}
        \centering
        \scalebox{0.6}{
        \input{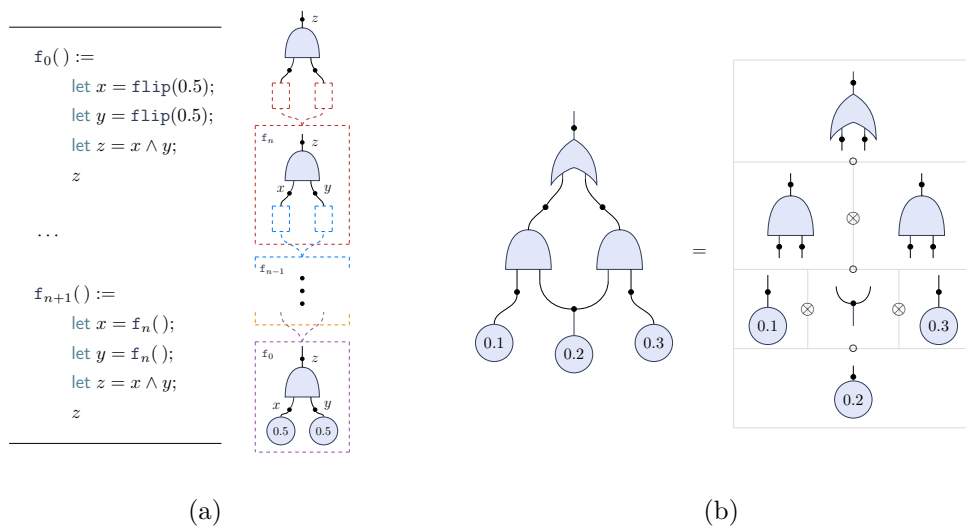}
        }
        
        \vspace{9mm} 
        \centering(b)
        \phantomcaption\label{fig:decomposition-example-sub}
    \end{subfigure}
    
    \caption{
    (a) Family $(\func{f}_n)$ of DPPs. 
        The probability that $\func{f}_n$ returns \emph{true} is doubly exponentially small in $n$.
        The weighted-model-counting-based inference algorithm of \texttt{Dice} requires $2^n$ 
        fresh variables to compile $\func{f}_n$ to a weighted Boolean formula \cite[Section~4.3]{holtzen2020scaling},
        causing exponential blowup before any weighted model count is performed.
    (b) 
        Decomposition of a dot diagram via parallel ($\otimes$) and sequential ($\circ$) composition, equivalently the term $\lor \circ (\land \otimes \land) \circ (\func{flip}(0.1) \otimes \func{copy} \otimes \func{flip}(0.3)) \circ \func{flip}(0.2)$.
        The operations $\circ$ and $\otimes$ correspond to matrix multiplication and Kronecker product; see \Cref{sec:semantics}.
    }
    \label{fig:decomposition-example}
\end{figure}

\subsection{Contributions}

Our main contribution is a fixed-parameter tractable algorithm for inference in DPPs. 
The key parameter is the maximal treewidth $k$ of the so-called \emph{primal graph} of the functions in the DPP.

\begin{mainresult}[Theorem \ref{thm:fixed-parameter-tractable-inference} paraphrased]
There exists an algorithm that computes the probability that a DPP returns \emph{true} up to $d$ digits in $O(2^{O(k)}d^2 n^3 \log^3(p_{\mathrm{acc}}^{-1}))$ time,
where $n$ measures the overall size of the DPP,\footnote{
More precisely, $n = LMN$, 
where $L$ is the maximum number of distinct functions called within any function, 
$M$ is the number of functions in the DPP, 
and $N$ is the maximum number of assignments in any function. 
In particular, if $l$ is the number of lines in the DPP, then $n$ is in $O(l^3)$, 
and hence, our algorithm takes polynomial time in the number of lines of the given DPP, 
for fixed $k$, $d$ and $p_{\mathrm{acc}}$. 
}
and $p_{\mathrm{acc}}$ is the \emph{acceptance probability}, 
the probability that all $\func{observe}$d values are true.\footnote{
The dependency on $p_{\mathrm{acc}}$ can be removed, for example, 
if the depth of the call graph of the DPP is bounded (or at most poly-logarithmic in $n$), 
or if there are only boundedly many $\func{observe}$d events.
} 
\end{mainresult}


Existing methods do not achieve these performance guarantees: the state-of-the-art inference algorithm of \texttt{Dice} \cite{holtzen2020scaling} handles DPPs by compiling them into a weighted propositional formula and applying weighted model counting to the result.
Through function calls this propositional formula can grow exponentially in size;
see \Cref{fig:counterexample}. 
By contrast, we essentially compute the semantics of each called function separately and then combine. 
Moreover, even when restricting to DPPs corresponding to Bayesian networks or fault trees, 
our algorithm does not simply reduce to established methods.
Instead, we rely on \emph{string diagram algebraisation}.

\subsection{Idea of proof}

Our algorithm operates on the string diagram -- or more precisely, \emph{dot diagram} \cite{kissinger2014finite} -- representing the abstract syntax of a DPP. 
Dot diagrams can be decomposed into a \emph{term} consisting of atomic parts via parallel and sequential decomposition; see Figure \ref{fig:decomposition-example}. 
Such a term is called an \emph{algebraisation}, 
and there are generally many algebraisations of a dot diagram. 
To perform inference, each subterm is translated into a substochastic matrix, 
and $\circ$ and $\otimes$ translate to matrix composition and Kronecker product, respectively. 
A subterm with $m$ inputs and $n$ outputs corresponds to a $2^m \times 2^n$-matrix; 
thus we need algebraisations with low \emph{width}, where the width is the maximum number of inputs or outputs over all subterms.

\begin{figure}[!t]
    \scalebox{0.53}{
    \input{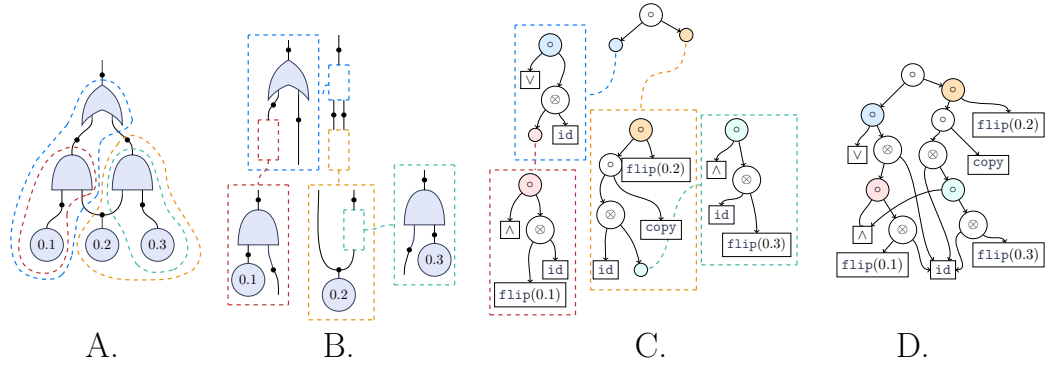}
    }
    \caption{The algorithm of Theorem \ref{thm:algebraisation}:
    A. Compute a branch decomposition of the dot diagram.
    B. Recursively extract sub-diagrams to obtain a hierarchical dot diagram.
    C. Algebrise each sub-diagram separately as a DAG.
    D. Compose DAGs via substitution.
    The resulting term has a \emph{width} of 3 and a DAG-size of 17.
    }
    \label{fig:algorithm}
\end{figure}

Unfortunately, existing approaches to string diagram algebraisation are not designed towards minimising algebraisation width: 
the Frobenius decomposition of \cite{wilson2023data} is aimed towards getting a decomposition \emph{fast}.
The algorithm used by the Python toolkit \emph{DisCoPy} \cite{de2021discopy,toumi2023discopy} to compute the semantics of a string diagram also essentially uses an algebraisation algorithm,
but only implicitly, and without optimising for low width. 
A generalisation of graph-theoretic width notions to monoidal categories, \emph{monoidal width},
has been studied in \cite{di2023monoidal}, 
but without exploiting this connection for algorithmic purposes.\footnote{A more detailed discussion of the similarities and differences between the width of an algebraisation and monoidal width is given in \Cref{sec:relation-to-monoidal-width} of the appendix.}
To address these shortcomings we develop an algorithm that finds algebraisations with low width. Specifically, we prove:

\begin{mainresult}[Theorem \ref{thm:algebraisation} paraphrased]
There is an algorithm that takes as input a string diagram of size $n$ whose primal graph has treewidth $k$, and outputs an algebraisation of width $O(k)$ and size $O(nk^2)$, in $2^{O(k)}n$ time.
\end{mainresult}

\noindent The idea of the proof is to take existing algorithms for finding a so-called \emph{tree decomposition} of the primal graph, convert it to a \emph{branch decomposition}, and translate this to an algebraisation (see Figure \ref{fig:algorithm}). 
The treewidth of a graph is the minimal width of its tree decompositions. 
Finding the optimal tree decomposition is NP-hard \cite{arnborg1987complexity}; 
instead, we employ approximation algorithms that find a sufficiently good decomposition in $2^{O(k)}n$ time \cite{korhonen2023single}.

\subsection{Further applications of algebraisation}

Since Theorem \ref{thm:algebraisation} works on the level of string diagrams, it has applications beyond DPPs. 
By taking values in matrices over arbitrary semirings rather than stochastic matrices, our results can be applied to \emph{algebraic model counting} \cite{kimmig2017algebraic}, which in turn covers Boolean satisfiability, weighted model counting, and shortest path problems. 
Our results also apply to other problems that can be phrased in terms of string diagrams, 
such as evaluating conjunctive queries on relational databases \cite{chekuri2000conjunctive}, 
or quantitative cybersecurity risk assessment via attack trees \cite{peterseim2025unified}. 
In a different direction, the term representation that our algorithm computes 
can also be used for visualising string diagrams in a recursive manner \cite{rubio2024rendering}.

\section{Syntax: dot diagrams}\label{sec:setup}

We first define the syntax of our DPPs, in terms of \emph{dot diagrams} \cite{kissinger2014finite}. 
On this level of generality, these are neither necessarily probabilistic or Boolean, but instead defined over an arbitrary \emph{monoidal signature}. 


\emph{Notations for lists.} 
If $X$ is a set, $X^*$ denotes the set of lists (i.e.~finite, ordered sequences) of elements of $X$.
We write lists as tuples  $\mathbf{x} = (x_1, \dots, x_n)$ and write $|\mathbf{x}|$ for its length $n$.
The set of all its elements is $\mathsf{set}(\mathbf{x})$.

\subsection{Monoidal signatures}

\emph{Monoidal signatures} record a number of \emph{sorts} and \emph{function symbols}. 
The difference to ordinary (multi-sorted) algebraic signatures is that function symbols may have multiple outputs.

\begin{definition}
    A \emph{monoidal signature} is a pair 
    $$\Sigma = \left(\Sigma_0, \Sigma(\,\cdot\,,\,\cdot\,)\right)$$ 
    consisting of 
    \begin{enumerate}
        \item a set $\Sigma_0$, the set of \emph{sorts};
        \item for all $\mathbf{S}, \mathbf{T}\in \Sigma_0^*$, a set $\Sigma(\mathbf{S}, \mathbf{T})$.
    \end{enumerate}
    The elements $f$ of $\Sigma(\mathbf{S}, \mathbf{T})$ are called \emph{function symbols} with \emph{domains} $\mathbf{S}$ and \emph{codomains} $\mathbf{T}$; this is denoted $f\colon \mathbf{S} \rightarrow \mathbf{T}$. Irrespective of domains and codomains we also write $f \in \Sigma$. 
    A \emph{morphism of monoidal signatures} $F\colon\Sigma \to \Sigma'$ is a map $F\colon \Sigma_0 \to \Sigma'_0$, together with a family of maps,
    $$\Sigma(\mathbf{S},\mathbf{T}) \to \Sigma'(F(\mathbf{S}), F(\mathbf{T})),$$ 
    each component of which we also denote by $F$.
\end{definition}

\begin{example}
    The \emph{signature of discrete probabilistic programs} is the signature $\mathsf{BoolStochSig}$, where
    \begin{enumerate}
        \item $\mathsf{BoolStochSig}$ has a single sort, denoted by $\mathbb{B}$, so $\mathsf{BoolStochSig}_0 = \{\mathbb{B}\}$.
        \item The function symbols of $\mathsf{BoolStochSig}$ are $(p\in [0,1]\cap \mathbb{Q})$:
        \begin{align*}
            \land\colon\: (\mathbb{B}, \mathbb{B}) &\to (\mathbb{B}), &
            \lor\colon\: (\mathbb{B}, \mathbb{B}) &\to (\mathbb{B}), \\
            \lnot\,\colon\: (\mathbb{B}) &\to (\mathbb{B}), &\func{observe}\colon (\mathbb{B}) &\to (\,),\\
            \func{flip}(p)\colon\: (\,) &\to (\mathbb{B}). 
        \end{align*}
    \end{enumerate}
\end{example}

\subsection{Variable sets and assignments}

Having defined our set of function symbols, we now define variables and variable assignments.

\begin{definition}[Variable set]
    Let $\Sigma$ be a monoidal signature. 
    A \emph{variable set} over $\Sigma$ is a pair $(X,\mathsf{sort}_X)$ consisting of a finite set $X$ and $\mathsf{sort}_X\colon X \rightarrow \Sigma_0$.  
    We write $\mathsf{sort}$ instead of $\mathsf{sort}_X$ when $X$ is clear, and we write $X$ instead of $(X,\mathsf{sort}_X)$ for the variable set. A \emph{morphism of variable sets} $X\to Y$ is a map $\phi\colon X\to Y$ such that $\mathsf{sort}_Y \circ \phi = \mathsf{sort}_X$.
\end{definition}

\begin{definition}[Assignment]
    Let $\Sigma$ be a monoidal signature and $X$ be a variable set over $\Sigma$. 
    An \emph{assignment} over $X$ is a triple $(\mathbf{y}, f, \mathbf{x})$, where $\mathbf{x},\mathbf{y} \in X^*$ and $f \in \Sigma(\mathsf{sort}_*(\mathbf{x}),\mathsf{sort}_*(\mathbf{y}))$. $\mathbf{x}$ and $\mathbf{y}$ are the \emph{input} and \emph{output} variables, respectively, and we will write 
        $$\lett y_1, \dots, y_n = f(x_1, \dots, x_m)$$
    to denote the assignment $ ((y_1, \dots, y_n), f, (x_1, \dots, x_m))$. This will be rendered as ``$f(x_1,\ldots,x_m)$'' when $n = 0$, and ``$\lett y_1,\ldots,y_n = f$'' when $m = 0$.
    Moreover, we write $\ass{\Sigma}{X}$ for the set of all assignments over the variable set $X$.

    The pushforward of a morphism of variable sets $\phi\colon X\to Y$ is the map
        $$ 
        \ass{\Sigma}{\phi}\colon\: \ass{\Sigma}{X} \to \ass{\Sigma}{Y}, \quad (\mathbf{y}, f, \mathbf{x}) \mapsto (\phi_*(\mathbf{y}), f, \phi_*(\mathbf{x})) ,
        $$
    where the $\phi_*\colon X^* \to Y^*$ is the pushforward of lists.
\end{definition}

\begin{example}
    Let $\Sigma := \mathsf{BoolSig}$, 
    let $X := \{x, y, z\}$ with $\mathsf{sort}_X(\,\cdot\,) := \mathbb{B}$.
    Then $((z),\wedge,(x,y))$ is an assignment over $X$, representing the `line of code' $\lett z = x \wedge y$.
\end{example}

Note that we do not make any assumptions about 
how often a variable may appear on either side of the assignment.
We will see how a variable occurring multiple times 
on the left hand side of the assignment can be interpreted, 
once we have defined the notion of a \emph{dot diagram} over $\Sigma$.

\subsection{Dot diagrams}

After function symbols, variables and assignments, we can now define programs, 
which for us are \emph{dot diagrams}. 
These were introduced in \cite{kissinger2014finite} as hypergraphs whose vertices are called \emph{boxes} and whose hyperedges are called \emph{dots}. 
Our definition looks more directly like one of `straight-line programs' but is equivalent; 
the original definition corresponds to passing to the `data flow graph'.


\begin{definition}[Dot diagram]
    Let $\Sigma$ be a monoidal signature. 
    A \emph{dot diagram} $\func{f}$ over $\Sigma$, 
    or $\Sigma$-diagram, is a tuple $(X_{\func{f}}, A_{\func{f}}, \mathsf{in}_{\func{f}}, \mathsf{out}_{\func{f}})$ where 
    \begin{enumerate}
        \item $X_{\func{f}}$ is a variable set;
        \item $A_{\func{f}} \in \ass{\Sigma}{X_{\func{f}}}^*$ is a list of assignments; and
        \item $\mathsf{in}_{\func{f}}, \mathsf{out}_{\func{f}} \in X_{\func{f}}^*$ are lists of \emph{input} and \emph{output} variables;
        \item for all $x\in X_{\func{f}}$, $x$ is either an in- or output variable, 
        or $x$ appears in some assignment from $A_{\func{f}}$, i.e.~there exists some $(\mathbf{y}, g, \mathbf{x})\in A_{\func{f}}$ 
        such that $x\in \mathsf{set}(\mathbf{y})$ or $x\in \mathsf{set}(\mathbf{x})$.
    \end{enumerate}
    In the following, we will write 
    \begin{align*}
        &\func{f}(x_1, \dots, x_n) :=\\
        &\qquad a_1; \\
        &\qquad \ldots\\
        &\qquad a_k; \\
        &\qquad y_1, \dots, y_m
    \end{align*}
    instead of $\func{f} := (X_{\func{f}}, (a_1, \dots, a_k), (x_1, \dots, x_n), (y_1, \dots, y_m))$ to define a dot diagram $\func{f}$. Dot diagrams over $\Sigma$ form a monoidal signature which we denote by $\mathrm{Dot}(\Sigma)$.

    Two dot diagrams $\func{f}, \mathsf{g} \in \mathrm{Dot}(\Sigma)$ are \emph{equivalent}, written $\func{f} \equiv \mathsf{g}$, if there exists an isomorphism of variable sets $\phi\colon X_{\func{f}} \to X_\mathsf{g}$ such that that 
    $\phi_*(\mathsf{in}_{\func{f}}) = \mathsf{in}_\mathsf{g}$,  $\phi_*(\mathsf{out}_{\func{f}}) = \mathsf{out}_\mathsf{g}$, 
    and $\mathsf{set}(\ass{\Sigma}{\phi}(A_{\func{f}})) =\mathsf{set}(A_\mathsf{g})$.
    We write $\underline{\func{f}}$ for the equivalence class of $\func{f}$, and $\underline{\mathrm{Dot}}(\Sigma)$ for the set of $\Sigma$-diagrams modulo equivalence.
\end{definition}

\begin{example}\label{example:dot-diagram}
    Let $\Sigma$ be the monoidal signature with  $\Sigma_0 = \{\mathbb{S}\}$ and two function symbols $\func{f}\colon \mathbb{S} \otimes \mathbb{S} \to \mathbb{S}$ and $\func{g}\colon \mathbb{S} \to \mathbb{S} \otimes \mathbb{S}$.
    Let $X := \{x, y, z, v, w\}$ be the variable set over $\Sigma$ with $\mathsf{sort}_X(\,\cdot\,) := \mathbb{S}$.
    Then below (left) is an example of a dot diagram $\func{h}$ over $\Sigma$ with variable set $X$, two input variables $x,y$, one output variable $w$, and three assignments. 
    On the right is its graphical representation where every assignment is a \emph{box} and every variable is a \emph{dot}.\newline
\hspace{1cm}
\begin{minipage}[t]{0.35\linewidth}
\raggedright
\setlength{\abovedisplayskip}{0pt}%
\setlength{\abovedisplayshortskip}{0pt}%
\begin{align*}
        &\func{h}(x, y) := \\
        &\quad \lett z = \func{f}(x, y); \\
        &\quad \lett z, v = \func{g}(y); \\
        &\quad \lett w = \func{f}(v, z); \\
        &\quad w
    \end{align*}
\end{minipage}%
\hspace{-0.06\linewidth}%
\begin{minipage}[t]{0.45\linewidth}
\raggedleft
\setlength{\abovedisplayskip}{0pt}%
\setlength{\abovedisplayshortskip}{0pt}%
\begin{equation*}
    \scalebox{0.9}{
        \begin{tikzpicture}
	\begin{pgfonlayer}{nodelayer}
		\node [style=none] (0) at (0, 0) {};
		\node [style=none] (1) at (0, -1.5) {};
		\node [style=none] (2) at (1, -1.5) {};
		\node [style=none] (3) at (1, 0) {};
		\node [style=none] (4) at (0.5, -0.75) {$\func{f}$};
		\node [style=dot] (5) at (0.25, -2) {};
		\node [style=dot] (6) at (1, 0.5) {};
		\node [style=dot] (7) at (1.5, -2) {};
		\node [style=none] (8) at (0.25, -1.5) {};
		\node [style=none] (9) at (0.75, -1.5) {};
		\node [style=none] (10) at (0.5, 0) {};
		\node [style=none] (11) at (0.25, -2.5) {};
		\node [style=none] (12) at (1.5, -2.5) {};
		\node [style=none] (13) at (1.75, 1.25) {};
		\node [style=none] (14) at (1.5, -0.25) {};
		\node [style=none] (15) at (1.5, -1.25) {};
		\node [style=none] (16) at (2.5, -1.25) {};
		\node [style=none] (17) at (2.5, -0.25) {};
		\node [style=none] (18) at (2, -0.75) {$\func{g}$};
		\node [style=dot] (20) at (2.25, 0.25) {};
		\node [style=none] (22) at (2, -1.25) {};
		\node [style=none] (23) at (1, 0.5) {};
		\node [style=none] (24) at (2.25, -0.25) {};
		\node [style=none] (25) at (1.5, -2) {};
		\node [style=none] (26) at (1.75, -0.25) {};
		\node [style=none] (27) at (1.25, 1.25) {};
		\node [style=none] (28) at (1, 2.75) {};
		\node [style=none] (29) at (1, 1.25) {};
		\node [style=none] (30) at (2, 1.25) {};
		\node [style=none] (31) at (2, 2.75) {};
		\node [style=none] (32) at (1.5, 2) {$\func{f}$};
		\node [style=dot] (34) at (1.5, 3.25) {};
		\node [style=none] (36) at (1.25, 1.25) {};
		\node [style=none] (37) at (1.75, 1.25) {};
		\node [style=none] (38) at (1.5, 2.75) {};
		\node [style=none] (39) at (1.5, 3.75) {};
		\node [style=none] (40) at (-0.25, -2) {{\large $x$}};
		\node [style=none] (41) at (2, -2.25) {{\large $y$}};
		\node [style=none] (42) at (2.75, 0.25) {{\large $v$}};
		\node [style=none] (43) at (0.5, 0.75) {{\large $z$}};
		\node [style=none] (44) at (2, 3.25) {{\large $w$}};
		\node [style=none] (45) at (-1.5, 0.5) {{\Large $\func{h}\;=$}};
	\end{pgfonlayer}
	\begin{pgfonlayer}{edgelayer}
		\draw [style=dark blue solid] (1.center)
			 to (0.center)
			 to (3.center)
			 to (2.center)
			 to cycle;
		\draw [style=black solid] (5) to (8.center);
		\draw [style=black solid, in=-90, out=-180, looseness=1.25] (7) to (9.center);
		\draw [style=black solid] (11.center) to (8.center);
		\draw [style=black solid] (5) to (8.center);
		\draw [style=black solid, in=-180, out=90, looseness=1.25] (10.center) to (6);
		\draw [style=black solid, in=-90, out=90] (6) to (13.center);
		\draw [style=black solid] (12.center) to (7);
		\draw [style=dark blue solid] (15.center)
			 to (14.center)
			 to (17.center)
			 to (16.center)
			 to cycle;
		\draw [style=black solid] (24.center) to (20);
		\draw [style=black solid, in=-90, out=0, looseness=1.25] (25.center) to (22.center);
		\draw [style=black solid] (24.center) to (20);
		\draw [style=black solid, in=-90, out=90] (20) to (27.center);
		\draw [style=black solid, in=0, out=90, looseness=1.25] (26.center) to (23.center);
		\draw [style=dark blue solid] (29.center)
			 to (28.center)
			 to (31.center)
			 to (30.center)
			 to cycle;
		\draw [style=black solid] (38.center) to (34);
		\draw [style=black solid] (34) to (39.center);
	\end{pgfonlayer}
\end{tikzpicture}
    }
    \end{equation*}
\end{minipage}%
\vspace{2mm}
\hfil

\end{example}

\begin{example}
    Up to `syntactic sugar', the function $\func{test}$ from \Cref{fig:code-and-diagram} is a $\mathsf{BoolStochSig}$-diagram. 
    Its set of variables is $X_{\func{test}} = \{z,\bar{z},t_{\top|\top},t_{\top|\bot},t_{\mathsf{TP}},t_{\mathsf{FP}},t\}$ and $\mathsf{sort}_X(\,\cdot\,) := \mathbb{B}$. To `de-sugar' it into a dot diagram, 
    we need to write the assignment $\lett t_{\mathsf{FP}} = \neg z \wedge t_{\top|\bot}$ as $\lett \bar{z} = \neg z; \lett t_{\mathsf{FP}} = \bar{z} \wedge t_{\top|\bot}$.
    The resulting dot diagram has one input variable $z$ and one output variable $t$.   
\end{example}

On the other hand, 
the DPP $\func{hasDisease}$ from \Cref{fig:code-and-diagram} is not simply a dot diagram, since it consists of two functions where one is called by the other.

\subsection{Hierarchical dot diagrams}

Hierarchical dot diagrams allow for `function definitions':
the function symbols appearing within any assignments may themselves 
be defined as a dot diagram. 
Hierarchical dot diagrams are our generalisation of DPPs.

\begin{definition}\label{definition:hierarchical-dot-diagram}
    Let $\Sigma$ be a monoidal signature. 
    A \emph{hierarchical dot diagram} over $\Sigma$ (short: hierarchical $\Sigma$-diagram)
    is a tuple,
    $$ {\func{f}} = (V({\func{f}}), E({\func{f}}), R_{{\func{f}}}, (\Sigma_v)_{v\in V({\func{f}})}, ({\func{f}}_{v})_{v\in V({\func{f}})}, \mathrm{defn}_{\func{f}}), $$
    such that  
    \begin{enumerate}
        \item $(V({\func{f}}), E({\func{f}}))$ is a connected, rooted directed acyclic graph with root $R_{{\func{f}}}$, called the \emph{call graph} of ${\func{f}}$.
        \item Each $\Sigma_v$ is a monoidal signature with the same set of sorts $\Sigma_0$ as $\Sigma$, and ${\func{f}}_{v}$ is a dot diagram over $\Sigma \sqcup_{\Sigma_0} \Sigma_v$, i.e. the signature whose set of sorts is $\Sigma_0$ and whose set of function symbols is given by the disjoint union of those of $\Sigma$ and $\Sigma_v$. 
        \item $\mathrm{defn}_{\func{f}, v}\colon \Sigma_v \to \mathrm{ch}(v)$ is a bijection from the set of function symbols in $\Sigma_v$ to the set of children of $v$, such that the map $g \mapsto {\func{f}}_{\mathrm{defn}_{\func{f}, v}(g)}$ is a morphism of signatures $\Sigma_v \to \bigsqcup_{u \in \mathrm{ch}(v)} \mathrm{Dot}(\Sigma \sqcup_{\Sigma_0} \Sigma_u)$.
        \item The number of children of any vertex $v\in V({\func{f}})$ (equivalently, the number of function symbols in $\Sigma_v$) is at most $|A_{{\func{f}}_v}|$.
    \end{enumerate}
    A $\Sigma$-diagram $\mathsf{g}$ \emph{appears in} a hierarchical $\Sigma$-diagram ${\func{f}}$
    if there exists some $v\in V({\func{f}})$ such that ${\func{f}}_v = \mathsf{g}$.
\end{definition}

Condition (4) is only necessary for more convenient complexity bounds, 
ensuring that hierarchical dot diagrams do not contain too many `unused' function symbols.

\begin{example}\label{example:hierarchical-dot-diagram}    
    Let $\Sigma$ and $\func{h}$ be given as in \Cref{example:dot-diagram}.
    \Cref{fig:hierarchical-dot-diagram-example} shows an example $\func{k}$
    of a hierarchical dot diagram over $\Sigma$ and its call graph.
    The vertices of the call graph are $V(\func{k}) = \{0, 1, 2,3\}$ and its root is $R_{\func{k}} = 0$. 
    The signatures at each node are 
    \begin{align*}
    \Sigma_0 &= \{\func{i}\colon \mathbb{S} \to \mathbb{S},\func{j}\colon \mathbb{S} \to \mathbb{S}\}, \\
    \Sigma_1 = \Sigma_2 &= \{\func{h}\otimes \func{h}\colon \mathbb{S} \to \mathbb{S}\}, \quad\Sigma_3 = \varnothing. 
    \end{align*}
    The dot diagram at the vertex $3$ is $\func{k}_3 = \func{h}$;
    the dot diagrams at the other vertices are given as shown in \Cref{fig:hierarchical-dot-diagram-example}.
    Finally, the functions $\mathrm{defn}_{\func{k}, v}$ are given by
    \begin{align*}
        \mathrm{defn}_{\func{k}, 0}(\func{i}) &= 1, 
        &&\mathrm{defn}_{\func{k}, 0}(\func{j}) = 2, \\
        \mathrm{defn}_{\func{k}, 1}(\func{h}) &= 3, 
        &&\mathrm{defn}_{\func{k}, 2}(\func{h}) = 3. 
    \end{align*}
    Intuitively, the nodes in the call graph represent subroutines in $\func{k}$, 
    and putting $\mathrm{defn}_{\func{k}, 1}(\func{h}) = 3$ means that that the 
    function symbol $\func{h}$ called in subroutine $1$ is defined by subroutine $3$.
\end{example}

\begin{figure}
    \vspace{-5mm}
    \centering
    \scalebox{0.65}{
    \input{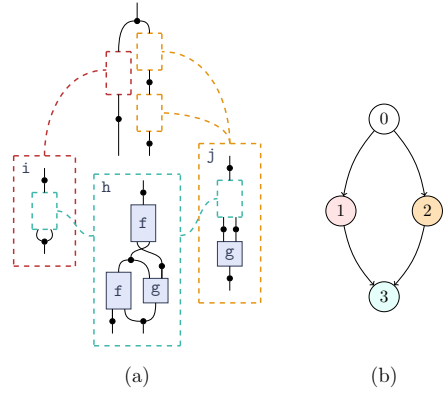}
    }
    \caption{(a) Example hierarchical $\Sigma$-diagram $\func{k}$ described in \Cref{example:hierarchical-dot-diagram}; (b) the call graph of $\func{k}$.}
    \label{fig:hierarchical-dot-diagram-example}
\end{figure}

The discrete probabilistic programs from the introduction, 
such as the program $\func{hasDisease}$ from \Cref{fig:code-and-diagram},
can also be understood as hierarchical dot diagrams,
and this is how we \emph{define} discrete probabilistic programs.

\begin{definition}[DPP]
    A \emph{discrete probabilistic program (DPP)} is a hierarchical dot diagram 
    over the monoidal signature $\mathsf{BoolStochSig}$. 
\end{definition}

\section{Semantics: hypergraph categories}\label{sec:semantics}

Dot diagrams and their hierarchical variant are purely syntactic notions.
We now discuss their semantics. For DPPs these will be (sub)stochastic matrices; in the general case, these are  morphisms in a \emph{hypergraph category}.
For the subsequent discussion, 
we assume some familiarity with basic category theory, 
and monoidal categories in particular.

\subsection{Strict hypergraph categories}

We now define strict hypergraph categories, which are strict monoidal categories with additional structure to model structural operations in DPPs and dot diagrams in general; see \cite{fong2019hypergraph} for a general discussion of hypergraph categories and their history.

\begin{definition}
    A \emph{strict hypergraph category} is a strict symmetric monoidal category $C$, together with four families of morphisms
    \begin{align*}
        \mathsf{copy}_c\colon c \to c \otimes c, \quad &\mathsf{del}_c\colon c \to I_C,  \\
        \mathsf{equate}_c\colon c \otimes c \to c, \quad &\mathsf{new}_c\colon I_C \to c,  
    \end{align*}
    where $c$ ranges over the objects of $C$ and $I_C$ is the tensor unit.
    These morphisms are required to satisfy the equations shown in \Cref{fig:hypergraph-categories}. 
    A \emph{strict hypergraph functor} is a strict symmetric monoidal functor that preserves $\mathsf{copy}_\bullet$, $\mathsf{del}_\bullet$, $\mathsf{equate}_\bullet$, and $\mathsf{new}_\bullet$.
\end{definition}

\begin{figure}
    \centering
    \scalebox{0.9}{
    \input{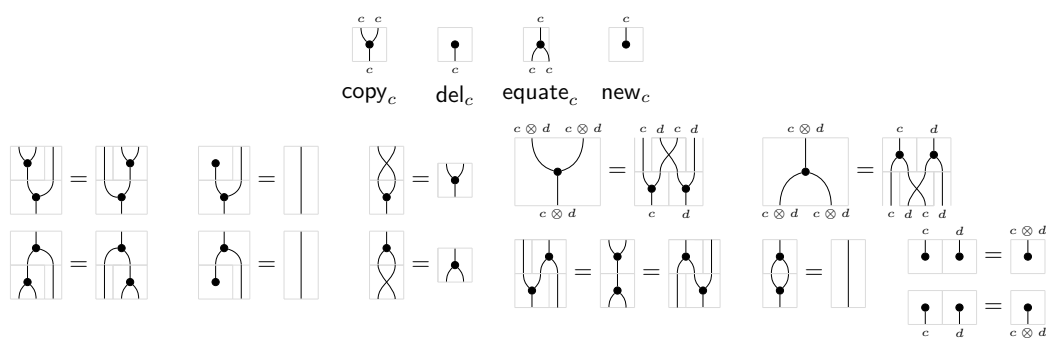}
    }
    \caption{
    The axioms of strict hypergraph categories. 
    We use standard string diagram notation for symmetric monoidal categories, 
    drawing light grey boxes around morphisms to avoid confusion with our representation of dot diagrams. 
    }
    \label{fig:hypergraph-categories}
\end{figure}

\subsection{Matrix hypergraph categories}

In our general framework, we will define the semantics of dot diagrams as a morphism of strict hypergraph categories. However, we are particularly interested in \emph{matrix hypergraph categories}, as these include the substochastic matrices that will be the semantics of DPPs. We take the slightly more general viewpoint of matrices over arbitrary (commutative, unital) semirings, as these also cover further applications such as cybersecurity risk metrics in attack trees \cite{peterseim2025unified}.

\begin{definition}
    A \textit{semiring} is a set $R$  with two binary operations $+$ and $\cdot$, 
    and two elements $0$ and $1$,
    such that $(R,+,0)$ and $(R,\,\cdot\,,1)$ are commutative unital monoids, and $\cdot$ distributes over $+$.
\end{definition}

\begin{definition}
    Let $R$ be a semiring.
    We define the hypergraph category $\mathsf{Mat}_R$ as having natural numbers as objects,
    and $m\times n$-matrices over $R$ as morphisms $n \to m$.
    The composition of morphisms in $\mathsf{Mat}_R$ is given by matrix multiplication;
    the tensor product is given on objects by the product of natural numbers, 
    $n \otimes m := nm$, 
    and on morphisms by the Kronecker product of matrices.
    The braiding is defined on generators as,
    \[
        \mathsf{swap}_{n,m}\cdot (e_i \otimes e_j) := e_j \otimes e_i,
    \]
    for all $i\in \{1, \dots n\}$, $j\in \{1,\dots, m\}$, where $e_i$ is the $i$-th standard basis vector.
    Finally, the hypergraph category structure is given by,
    \begin{align*}
        \mathsf{copy}_{n}\cdot e_i:= e_i \otimes e_i, \quad &\mathsf{del}_{n}\cdot e_i := (1) \\
        \mathsf{equate}_{n}\cdot (e_i \otimes e_j) := \delta_{ij} e_i, \quad &\mathsf{new}_{n} \cdot e_1 := \mathbf{1}_n,
    \end{align*}
    where $\delta_{ij} = 1$ if $i = j$ and $\delta_{ij}=0$ otherwise, 
    and $\mathbf{1}_n \in R^n$ denotes the vector all of whose entries are $1$.
    The axioms of hypergraph categories can easily be checked by a direct calculation.
\end{definition}

\noindent An important sub-hypergraph category of $\mathsf{Mat}_\mathbb{Q}$ is the following, 
which will be the semantic target category for DPPs.

\begin{definition}[$\,\mathsf{BoolSubStoch}$]
    A substochastic matrix is a real matrix all of whose columns sum to a value of at most $1$.
    The hypergraph category $\mathsf{BoolSubStoch}$ has natural numbers of the form $2^n$ as objects ($n\in\mathbb{N}$), 
    and substochastic $2^m\times 2^n$-matrices as morphisms $2^n \to 2^m$.
    The symmetric monoidal and hypergraph structures are given in the same way as for $\mathsf{Mat}_\mathbb{Q}$.
\end{definition}

The idea is that the entry $ij$ of a substochastic matrix records 
the probability that a DPP returns the Boolean vector $\mathsf{bin}(i)$, 
the binary digits of $i$, 
\emph{and} that all $\func{observe}$d events hold, 
on the input $\mathsf{bin}(j)$.
If instead we want $\func{observe}$ expressions to properly \emph{condition} on an event, 
we also need to identify two substochastic matrices
if they differ by a positive factor; 
see \cite[Section 6.2-6.3]{stein2024probabilistic} for a thorough discussion of the denotational semantics of discrete probabilistic programs, 
as well as \cite{stein2021structural} for a general treatment of the categorical semantics of probabilistic programs.

\subsection{The hypergraph category of dot diagrams}

Dot diagrams (up to equivalence) themselves also form a strict hypergraph category $\underline{\mathrm{Dot}}(\Sigma)$;
in fact, the \emph{free} strict hypergraph category over a monoidal signature $\Sigma$ (see Figure \ref{fig:dot-diagram-category}):
\begin{itemize}
\item Its objects are given by lists of sorts from $\Sigma$; 
\item its morphisms are (equivalence classes of) dot diagrams;
\item The (sequential) composite of two dot diagrams $\func{f}, \func{g}$ is given by gluing the output variables of $\func{f}$ to the input variables of $\func{g}$, merging the corresponding dots;
\item the tensor product is given by disjoint union;
\item $\func{copy}$, $\func{del}$, $\func{equate}$ $\func{new}$ are as in Figure \ref{fig:dot-diagram-category}.
\end{itemize}
Precise definitions of the composition and tensor product of dot diagrams are given in \Cref{appendix:tensor-product-and-composition-of-dot-diagrams} of the appendix.
They are equivalent to those of \cite{kissinger2014finite}, which represent dot diagrams as cospans of labelled hypergraphs.
We also introduce special dot diagrams for $\func{id}$, $\func{swap}$, and $\langle f \rangle$ for each $f \in \Sigma$, as in Figure \ref{fig:dot-diagram-category}. 
Note that, strictly speaking, these are equivalence classes of dot diagrams.

\begin{figure}
    \centering
    \scalebox{0.9}{
    \input{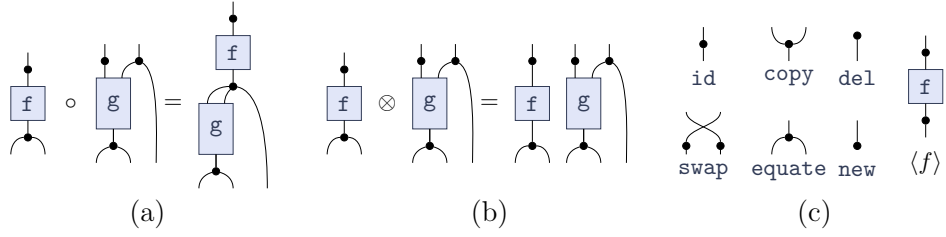}
    }
    \caption{(a) Composition and (b) tensor product of dot diagrams; (c) special dot diagrams.}
    \label{fig:dot-diagram-category}
\end{figure}

\subsection{Interpretations and semantics}

We can now define the semantics of dot diagrams as morphisms in strict hypergraph categories. 
We first have to define how to \emph{interpret} function symbols.


\begin{definition}[Interpretation]
    Let $\Sigma$ be a monoidal signature and let $C$ be a hypergraph category.
    Then $C$ defines a monoidal signature whose sorts are the objects of 
    $C$ and whose function symbols $\mathbf{S} \to \mathbf{T}$ are given by 
    all morphisms $S_1 \otimes \dots \otimes S_m \to T_1 \otimes \dots \otimes T_n$ in $C$.
    An \emph{interpretation} of $\Sigma$ in $C$ is a morphism of monoidal signatures $\mathcal{I}\colon \Sigma \to C$.
\end{definition}

\begin{example}
    The \emph{substochastic matrix interpretation} is the interpretation $\mathcal{S}\colon \mathsf{BoolStochSig} \to \mathsf{BoolSubStoch}$
    of the signature of DPPs in the hypergraph category of substochastic matrices, which is defined as follows: the sole sort $\mathbb{B}$ of $\mathsf{BoolStochSig}$ is mapped to the object $2^1$ of $\mathsf{BoolSubStoch}$. On function symbols $\mathcal{S}$ is given by:
    \begin{align*}
        \mathcal{S}(\land) &:= \begin{pmatrix}
            1 & 1 & 1 & 0 \\
            0 & 0 & 0 & 1
        \end{pmatrix}, \quad 
        \mathcal{S}(\lor) := \begin{pmatrix}
            1 & 0 & 0 & 0 \\
            0 & 1 & 1 & 1
        \end{pmatrix}, \quad
        \mathcal{S}(\lnot) := \begin{pmatrix}
            0 & 1 \\
            1 & 0 
        \end{pmatrix}, \\ 
        &\mathcal{S}(\func{flip}(p)) := \begin{pmatrix}
            1-p  \\
            p   
        \end{pmatrix}, \qquad
        \mathcal{S}(\func{observe}) := \begin{pmatrix}
            0 & 1 
        \end{pmatrix}.
    \end{align*}
\end{example}

With the interpretation of function symbols in place, the semantics of dot diagrams are the unique extension of the interpretation to dot diagrams. The following result makes this precise. The proof is given in \cite[Theorem 4.6]{kissinger2014finite}, by reduction to the corresponding statement for traced monoidal categories \cite[Theorem 5.5.10]{kissinger2011pictures}.\footnote{
The characterisation of free traced monoidal categories was stated earlier
in \cite{hasegawa2008finite}, 
but the proof outline given there seems to merely state \emph{what} there is to prove, 
without saying \emph{how}.
}
For a more recent, published reference, see \cite[Corollary 4.2]{bonchi2022stringI}.\footnote{
Note, however, that according to \cite[Remark 4.3]{fritz2023free}, the proof given there -- which is essentially the same as the one in \cite{zanasi2017rewriting} --
contains two substantial gaps.
Nevertheless, the idea of proof is substantial, 
relying on a Frobenius-type decomposition similar to the one we will also use
in \Cref{sec:proof-of-lem-wiring-tensor-factorisation};
hence these gaps can presumably be closed using suitable inductive arguments similar 
to the ones in \cite{fritz2023free}.
}

\begin{theorem}\label{thm:dot-diagrams-free-hypergraph-category}
    Let $\Sigma$ be a monoidal signature, let $C$ be a strict hypergraph category, 
    and let $\mathcal{I}\colon \Sigma \to C$ be an interpretation.
    Then there exists a unique strict hypergraph functor
    $$ [\![\,\cdot\,]\!]_{\mathcal{I}}\colon \underline{\mathrm{Dot}}(\Sigma) \to C $$
    such that for all function symbols $f$ in $\Sigma$, we have
    $[\![ \langle f \rangle ]\!]_{\mathcal{I}} = \mathcal{I}(f)$, 
    where $\langle f \rangle$ is the dot diagram (modulo equivalence) associated to $f$. 
    In other words, $\underline{\mathrm{Dot}}(\Sigma)$ is the free strict hypergraph category on $\Sigma$.
\end{theorem}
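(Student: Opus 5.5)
We prove existence and uniqueness separately, with uniqueness first because it supplies the normal form that drives existence.

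\emph{Uniqueness.} Every dot diagram $\func{f} = (X_{\func{f}}, A_{\func{f}}, \mathsf{in}_{\func{f}}, \mathsf{out}_{\func{f}})$ admits a Frobenius-type decomposition into generators. Order the variables as $X_{\func{f}} = \{v_1,\dots,v_r\}$ and write $A_{\func{f}} = (a_1,\dots,a_k)$ with $a_i = (\mathbf{y}_i, g_i, \mathbf{x}_i)$. Then $\underline{\func{f}}$ factors as
\[
\underline{\func{f}} = \omega_{\mathrm{out}} \circ \bigl(\langle g_1\rangle \otimes \cdots \otimes \langle g_k\rangle \otimes \func{id}\bigr) \circ \omega_{\mathrm{in}},
\]
where $\omega_{\mathrm{in}}$ and $\omega_{\mathrm{out}}$ are \emph{wiring diagrams}: dot diagrams with no assignments, built from $\func{copy}$, $\func{del}$, $\func{equate}$, $\func{new}$, $\func{swap}$ and $\func{id}$. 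Concretely, $\omega_{\mathrm{in}}$ sends each input to its dot, creates the remaining dots with $\func{new}$, and copies each dot $v_j$ as often as it occurs among the $\mathbf{x}_i$, among the $\mathbf{y}_i$, and in $\mathsf{out}_{\func{f}}$. The map $\omega_{\mathrm{out}}$ then equates each output occurrence of $v_j$ with the corresponding held copy and finally routes to $\mathsf{out}_{\func{f}}$. To verify the factorisation, compute the composite and tensor of dot diagrams and check that the result has variable set $X_{\func{f}}$ (up to isomorphism) and assignment set $\mathsf{set}(A_{\func{f}})$. This is a direct check, since gluing along merged dots recovers exactly the incidence data. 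A strict hypergraph functor with $[\![\langle f\rangle]\!] = \mathcal{I}(f)$ is fixed on every generator, hence on all of $\underline{\mathrm{Dot}}(\Sigma)$.

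\emph{Existence.} Let $\pi_i$ be the canonical interpretation of a wiring diagram in $C$. We define $[\![\func{f}]\!]_{\mathcal{I}}$ by the normal form above, replacing each $\langle g_i\rangle$ with $\mathcal{I}(g_i)$ and each wiring diagram with its interpretation in $C$. The key lemma is a spider theorem for hypergraph categories: any two composites of Frobenius generators in $C$ that induce the same map from input and output wires to connected components (a cospan of finite sets) are equal. This follows from the axioms in \Cref{fig:hypergraph-categories}. Fully connected composites reduce to the spider $c^{\otimes m} \to c^{\otimes n}$, which is independent of how it is built; general wirings are a tensor of spiders up to $\func{swap}$. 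Granting this lemma, the definition does not depend on the chosen orderings of $X_{\func{f}}$ or $A_{\func{f}}$, nor on the representative of the equivalence class: reordering assignments commutes boxes past each other by naturality of the symmetry, and changing the variable ordering only alters the wiring, which the spider theorem absorbs.

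It then remains to check functoriality and preservation of $\otimes$ and of the structure maps. Tensor is straightforward, since the disjoint union of normal forms is, after a symmetry, the normal form of the tensor. For composition $\func{g}\circ\func{f}$, the plan is to put both factors in normal form and then slide all boxes of $\func{g}$ into the middle layer past $\omega_{\mathrm{out}}^{\func{f}}$ and $\omega_{\mathrm{in}}^{\func{g}}$. This sliding uses naturality of $\func{swap}$ and parallel placement of tensor factors, so the box layer passes through only identity wires. The resulting middle wiring is $\omega_{\mathrm{in}}^{\func{g}}\circ\omega_{\mathrm{out}}^{\func{f}}$ combined with identities. The spider theorem then identifies the composite wiring with the wiring of the glued diagram, because both induce the same partition into merged dots.

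The main obstacle is the spider/normal-form lemma together with this box-sliding argument. The sliding step is exactly where the gaps noted in the literature lie, since composite wiring can produce feedback (a box output wired back to its own input). My first approach is to avoid any trace argument: keep all boxes in a single parallel layer, with every connection routed through the Frobenius wiring above and below it, so that cycles are absorbed entirely into the wiring diagrams. I would then prove the spider lemma by induction on the number of generators, normalising connected components to a canonical spider.
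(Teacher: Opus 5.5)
Your uniqueness argument is fine. For existence you take a different route from the one the paper relies on. The paper gives no argument of its own: it cites Kissinger, who reduces freeness to the characterisation of free traced symmetric monoidal categories, so cycles in the wiring are handled by the trace. It mentions the Frobenius-normal-form proof of Bonchi et al.\ only as a reference with known gaps. You give a self-contained version of that latter route, pushing all coherence into the spider lemma. Stated for cospans together with their apex, that lemma is Lack's theorem that cospans of finite sets form the free special commutative Frobenius monoid (in multi-sorted form), and you may cite it.

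Your composition step works, but your description of it is off: the boxes of $\func{g}$ cannot move past $\omega^{\func{g}}_{\mathrm{in}}$, because they consume its outputs. What interchange actually does depends on your held-copy design. The outputs of $\func{f}$ are the output copies produced by $\omega^{\func{f}}_{\mathrm{in}}$, which bypass $B_{\func{f}}$, and box outputs only enter caps. Hence $\omega^{\func{f}}_{\mathrm{out}}$ splits into two parts:
caps, which move up past $B_{\func{g}}$; and
a permutation, which together with $\omega^{\func{g}}_{\mathrm{in}}$ acts on wires disjoint from $B_{\func{f}}$ and so moves into the bottom wiring.
No middle wiring remains. The surviving middle wires (held copies of both diagrams, output copies of $\func{g}$) are exactly those of the normal form of the glued diagram, so the spider lemma can be applied to the top and bottom wirings separately. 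This is exactly where feedback is dissolved, so say it explicitly. Also check $[\![\langle f\rangle]\!]_{\mathcal{I}}=\mathcal{I}(f)$: in your normal form (cups below, caps above) this needs interchange plus the snake identity.

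The step that actually fails is well-definedness against the definitions as stated. Your invariance argument covers renaming variables and permuting $A_{\func{f}}$. But the paper's equivalence compares $\mathsf{set}(A_{\func{f}})$, which also identifies a list containing an assignment twice with the list containing it once, and invariance under that is false. Take $g\colon (\,)\to(S)$. The composite $\func{equate}\circ(\langle g\rangle\otimes\langle g\rangle)$ has one variable $y$, output list $(y)$, and $\lett y = g$ listed twice, so it is equivalent to $\langle g\rangle$. Yet in $\mathsf{Mat}_{\mathbb{Q}}$ with $\mathcal{I}(S)=2$ and $\mathcal{I}(g)=(1/2,\,1/2)^\intercal$, any strict hypergraph functor must send it to $\mathsf{equate}_2\circ(\mathcal{I}(g)\otimes\mathcal{I}(g))=(1/4,\,1/4)^\intercal\neq\mathcal{I}(g)$.

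Similarly, condition (4) in the definition of dot diagrams excludes the lone variable produced by composing $\func{new}$ with $\func{del}$. But $\mathsf{del}_2\circ\mathsf{new}_2=(2)\neq(1)=\mathrm{id}_1$ in $\mathsf{Mat}_{\mathbb{Q}}$, so discarding that variable is not an option. Hence for the literal definitions no such functor exists. The paper's appeal to Kissinger has the same mismatch, since his cospans of hypergraphs count hyperedges with multiplicity and keep isolated vertices.

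Your argument does prove the theorem once equivalence is taken up to a bijection of assignment lists and condition (4) is dropped, and you should say so. For the same reason, your spider lemma must compare full cospans, apex included. A version that records only how the boundary wires are partitioned would identify $\mathrm{id}_I$ with $\mathsf{del}\circ\mathsf{new}$.
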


\begin{definition}[Semantics]
    Let $\Sigma$ be a monoidal signature, let $C$ be a hypergraph category, 
    and let $\mathcal{I}\colon \Sigma \to C$ be an interpretation.
    Moreover, let $\func{f}$ be a dot diagram.
    The semantics of $\func{f}$ under $\mathcal{I}$ is $[\![ \,\underline{\func{f}} \,]\!]_{\mathcal{I}}$, where $[\![\,\cdot\,]\!]_{\mathcal{I}}$ is the unique strict hypergraph functor from \Cref{thm:dot-diagrams-free-hypergraph-category}.
\end{definition}

\emph{Hierarchical} dot diagrams will receive semantics in \Cref{definition:semantics-of-hierarchical-dot-diagrams}.

\section{Algebraisations}

From the previous section it follows (ignoring hierarchical structure for the moment) that the semantics of a DPP with 0 inputs and 1 output is a $2^1\times 2^0$-substochastic matrix. In Section \ref{ssec:inference}, we will relate this to the acceptance probability of Problem \ref{problem:boolean-probabilistic-inference}. For now, we will turn to the problem of \emph{computing} this matrix. For this, we define \emph{hypergraph terms}, which are (non-unique) algebraic representations of dot diagrams. The process of finding a `good' hypergraph term is called \emph{algebraisation}, which is the topic of Section \ref{sec:algebraisation}.


\subsection{Hypergraph terms}

Hypergraph terms are formal composites and tensor products of function symbols and constant symbols representing the special morphisms in a hypergraph category.

\begin{definition}
    Let $\Sigma$ be a monoidal signature. 
    The monoidal signature of \emph{hypergraph terms} over $\Sigma$ is given by
    $$\mathrm{Term}(\Sigma):=(\mathrm{Term}(\Sigma)_0, \mathrm{Term}(\Sigma)(\,\cdot\,, \,\cdot\,)),$$ 
    where the set of sorts $\mathrm{Term}(\Sigma)_0:= \Sigma_0$ is the same as that of $\Sigma$ and $\mathrm{Term}(\Sigma)(\,\cdot\,, \,\cdot\,)$ is defined inductively as follows: 
    \begin{enumerate}
        \item if $f \in \Sigma(\mathbf{S}, \mathbf{T})$, then $f \in \mathrm{Term}(\Sigma)(\mathbf{S}, \mathbf{T});$
        \item if  $\mathbf{S}, \mathbf{T}\in \Sigma_0^*$, then 
        \begin{align*}
            \mathrm{id}_{\mathbf{T}} \in \mathrm{Term}(\Sigma)(\mathbf{T}, \mathbf{T}), \qquad
            &\mathrm{swap}_{\mathbf{S}, \mathbf{T}} \in \mathrm{Term}(\Sigma)(\mathbf{S}\mathbf{T}, \mathbf{T}\mathbf{S}), \\
            \mathrm{copy}_{\mathbf{T}} \in \mathrm{Term}(\Sigma)(\mathbf{T}, \mathbf{T}\mathbf{T}), \qquad
            &\mathrm{del}_{\mathbf{T}} \in \mathrm{Term}(\Sigma)(\mathbf{T}, ()), \\
            \mathrm{equate}_{\mathbf{T}} \in \mathrm{Term}(\Sigma)(\mathbf{T}\mathbf{T}, \mathbf{T}), \qquad
            &\mathrm{new}_{\mathbf{T}} \in \mathrm{Term}(\Sigma)((), \mathbf{T}); 
        \end{align*}
        \item if $s\in \mathrm{Term}(\Sigma)(\mathbf{T}, \mathbf{U})$ and $t\in \mathrm{Term}(\Sigma)(\mathbf{S}, \mathbf{T})$, then 
        $$s \circ t \in \mathrm{Term}(\Sigma)(\mathbf{S}, \mathbf{U}).$$
        \item if $s\in \mathrm{Term}(\Sigma)(\mathbf{S}, \mathbf{T})$ and $t\in \mathrm{Term}(\Sigma)(\mathbf{U}, \mathbf{V})$, then 
        $$s \otimes t \in \mathrm{Term}(\Sigma)(\mathbf{S}\mathbf{U}, \mathbf{T}\mathbf{V});$$
    \end{enumerate}
    A \emph{hypergraph term} is any function symbol $t\in \mathrm{Term}(\Sigma)$ of this signature.

    We write $s\subseteq t$ when $s$ is a sub-term of $t$; this relationship is defined in the standard inductive way. 
    The same applies to the \emph{substitution} of a finite family of hypergraph terms $(s_i)_{i\in I}$ for a family of function symbols $(g_i)_{i\in I}$ (from $\Sigma$) in a hypergraph term $t$, 
    written $t[g_i \mapsto s_i]$.
    
    For computational purposes, 
    we assume that hypergraph terms are represented as labelled directed acyclic graphs (DAGs) with maximal sharing, 
    i.e.~the nodes of the DAG underlying a hypergraph term represent unique sub-terms of $t$; see Figure \ref{fig:syntax-dags}, discussed below.   
\end{definition}

\begin{example}\label{example:hypergraph-terms}
    Let $\Sigma$ be the monoidal signature with a
    single sort $\mathbb{S}$ and two function symbols $\func{f}\colon \mathbb{S} \otimes \mathbb{S} \to \mathbb{S}$ and $\func{g}\colon \mathbb{S} \otimes \mathbb{S} \to \mathbb{S}$.
    Then 
    $$t = \func{f} \circ (\mathrm{equate}_{(\mathbb{S})} \otimes \mathrm{id}_{(\mathbb{S})}) \circ (\func{f} \otimes \func{g}) \circ (\mathrm{id}_{(\mathbb{S})} \otimes \mathrm{copy}_{(\mathbb{S})}) \in \mathrm{Term}(\Sigma)(\mathbb{S}^3,\mathbb{S})$$
    and  $s := \func{f} \circ \mathrm{swap}_{(\mathbb{S, \mathbb{S}})} \in \mathrm{Term}(\Sigma)(\mathbb{S}^2,\mathbb{S})$ are hypergraph terms over $\Sigma$ (see Figure \ref{fig:syntax-dags}). 
    Then 
    \begin{align*}
        t[\func{f} \mapsto s] =\: &(\func{f} \circ \mathrm{swap}_{(\mathbb{S, \mathbb{S}})}) \circ (\mathrm{equate}_{(\mathbb{S})} \otimes \mathrm{id}_{(\mathbb{S})}) \\
        &\circ ((\func{f} \circ \mathrm{swap}_{(\mathbb{S, \mathbb{S}})}) \otimes \func{g}) \circ (\mathrm{id}_{(\mathbb{S})} \otimes \mathrm{copy}_{(\mathbb{S})}),
    \end{align*}
    as shown in \Cref{fig:syntax-dags} (b).
\end{example}

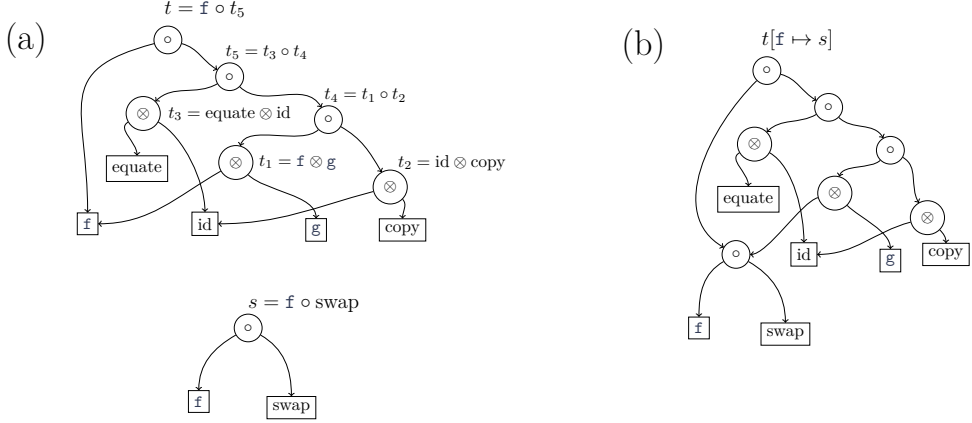
\begin{figure}
    \centering
    \vspace{-2mm}
    \scalebox{0.65}{
    \begin{tikzpicture}
	\begin{pgfonlayer}{nodelayer}
		\node [style=op dot] (0) at (-13.75, 8.5) {$\circ$};
		\node [style=op dot] (1) at (-11.25, 7) {$\circ$};
		\node [style=op dot] (2) at (-7.25, 5.25) {$\circ$};
		\node [style=op dot] (3) at (-14.75, 5.5) {$\otimes$};
		\node [style=none] (4) at (-9.75, 8) {$t_5 = t_3 \circ t_4$};
		\node [style=op dot] (5) at (-11, 3.5) {$\otimes$};
		\node [style=op dot] (6) at (-4.75, 2.5) {$\otimes$};
		\node [style=none] (7) at (-2.25, 3.5) {$t_2 = \mathrm{id} \otimes \mathrm{copy}$};
		\node [style=none] (8) at (-8.5, 3.5) {$t_1 = \func{f} \otimes \func{g}$};
		\node [style=none] (9) at (-5.75, 6.25) {$t_4 = t_1 \circ t_2$};
		\node [style=none] (10) at (-11.25, 5.5) {$t_3 = \mathrm{equate} \otimes \mathrm{id}$};
		\node [style=none] (11) at (-12.25, 9.75) {{\Large $t = \func{f} \circ t_5$}};
		\node [style=box] (12) at (-17, 1) {$\func{f}$};
		\node [style=box] (13) at (-12.25, 1) {$\mathrm{id}$};
		\node [style=box] (14) at (-7.75, 0.75) {$\func{g}$};
		\node [style=box] (15) at (-4.25, 0.75) {$\mathrm{copy}$};
		\node [style=box] (16) at (-15, 3.25) {$\mathrm{equate}$};
		\node [style=op dot] (17) at (-10.5, -3.25) {$\circ$};
		\node [style=box] (18) at (-12.5, -6.25) {$\func{f}$};
		\node [style=box] (19) at (-8.75, -6.5) {$\mathrm{swap}$};
		\node [style=none] (20) at (-8.25, -2.25) {{\Large $s = \func{f} \circ \mathrm{swap}$}};
		\node [style=op dot] (21) at (10.5, 7.25) {$\circ$};
		\node [style=op dot] (22) at (13, 5.75) {$\circ$};
		\node [style=op dot] (23) at (15.5, 4) {$\circ$};
		\node [style=op dot] (24) at (10, 4.25) {$\otimes$};
		\node [style=op dot] (26) at (13.25, 2.25) {$\otimes$};
		\node [style=op dot] (27) at (17, 1.25) {$\otimes$};
		\node [style=none] (32) at (11.75, 8.5) {{\Large $t[\func{f} \mapsto s]$}};
		\node [style=box] (34) at (12, -0.25) {$\mathrm{id}$};
		\node [style=box] (35) at (15.5, -0.5) {$\func{g}$};
		\node [style=box] (36) at (17.75, -0.25) {$\mathrm{copy}$};
		\node [style=box] (37) at (9.75, 2) {$\mathrm{equate}$};
		\node [style=op dot] (38) at (9.25, -0.25) {$\circ$};
		\node [style=box] (39) at (7.75, -3.25) {$\func{f}$};
		\node [style=box] (40) at (11.25, -3.5) {$\mathrm{swap}$};
		\node [style=none] (43) at (-19.5, 8.5) {{\huge (a)}};
		\node [style=none] (44) at (5.5, 8.25) {{\huge (b)}};
	\end{pgfonlayer}
	\begin{pgfonlayer}{edgelayer}
		\draw [style=black arrow, in=90, out=-165, looseness=1.50] (0) to (12);
		\draw [style=black arrow, in=0, out=-150] (5) to (12);
		\draw [style=black arrow, in=90, out=-150] (3) to (16);
		\draw [style=black arrow, in=150, out=-15] (0) to (1);
		\draw [style=black arrow, in=45, out=-150] (1) to (3);
		\draw [style=black arrow, in=120, out=-45] (1) to (2);
		\draw [style=black arrow, in=75, out=-135, looseness=0.75] (2) to (5);
		\draw [style=black arrow, in=120, out=-30] (2) to (6);
		\draw [style=black arrow, in=90, out=-30, looseness=0.75] (3) to (13);
		\draw [style=black arrow, in=0, out=-165] (6) to (13);
		\draw [style=black arrow, in=90, out=-45, looseness=0.75] (5) to (14);
		\draw [style=black arrow, in=90, out=-45] (6) to (15);
		\draw [style=black arrow, in=90, out=-150] (17) to (18);
		\draw [style=black arrow, in=90, out=-30] (17) to (19);
		\draw [style=black arrow, in=90, out=-150] (24) to (37);
		\draw [style=black arrow, in=150, out=-15] (21) to (22);
		\draw [style=black arrow, in=45, out=-150] (22) to (24);
		\draw [style=black arrow, in=120, out=-45] (22) to (23);
		\draw [style=black arrow, in=75, out=-135, looseness=0.75] (23) to (26);
		\draw [style=black arrow, in=120, out=-30] (23) to (27);
		\draw [style=black arrow, in=90, out=-30, looseness=0.75] (24) to (34);
		\draw [style=black arrow, in=0, out=-165] (27) to (34);
		\draw [style=black arrow, in=90, out=-45, looseness=0.75] (26) to (35);
		\draw [style=black arrow, in=90, out=-45] (27) to (36);
		\draw [style=black arrow, in=90, out=-150] (38) to (39);
		\draw [style=black arrow, in=90, out=-30] (38) to (40);
		\draw [style=black arrow, in=150, out=-135] (21) to (38);
		\draw [style=black arrow, in=0, out=-165, looseness=0.50] (26) to (38);
	\end{pgfonlayer}
\end{tikzpicture}
    }
    \caption{
    (a) DAGs representing the terms $t$ and $s$ from \Cref{example:hypergraph-terms}; 
    (b) the substitution of $s$ for $\func{f}$ in $t$.
    }
    \label{fig:syntax-dags}
\end{figure}

\subsection{Semantics of hypergraph terms}

The semantics of hypergraph terms are defined in the obvious inductive way.

\begin{definition}\label{def:hypergraph-terms}
    Let $\Sigma$ be a monoidal signature, let $C$ be a strict hypergraph category, and let $\mathcal{I}\colon\:\Sigma \to C$ be an interpretation.
    The \emph{semantics} $[\![ t ]\!]_{\mathcal{I}}$ of a hypergraph term $t$ under $\mathcal{I}$ is defined recursively as follows:
    \begin{align*}
        [\![ f ]\!]_{\mathcal{I}} := \mathcal{I}(f), \quad
        [\![ \mathrm{id}_{T} ]\!]_{\mathcal{I}} := \mathsf{id}_{\mathcal{I}(T)}, \quad
        &[\![ s \circ t ]\!]_{\mathcal{I}} := \mathcal{I}(s) \circ \mathcal{I}(t)\\
        [\![ \mathrm{swap}_{S, T} ]\!]_{\mathcal{I}} := \mathsf{swap}_{\mathcal{I}(S),\, \mathcal{I}(T)},\qquad
        &[\![ s \otimes t ]\!]_{\mathcal{I}} := \mathcal{I}(s) \otimes \mathcal{I}(t), \\
        [\![ \mathrm{copy}_{T} ]\!]_{\mathcal{I}} := \mathsf{copy}_{\mathcal{I}(T)}, \qquad
        &[\![ \mathrm{del}_{T} ]\!]_{\mathcal{I}} := \mathsf{del}_{\mathcal{I}(T)}, \\
        [\![ \mathrm{equate}_{T} ]\!]_{\mathcal{I}} := \mathsf{equate}_{\mathcal{I}(T)}, \qquad
        &[\![ \mathrm{new}_{T} ]\!]_{\mathcal{I}} := \mathsf{new}_{\mathcal{I}(T)}.
    \end{align*}
    Two strict hypergraph terms $s,t$ are \emph{equivalent}
    if for every strict hypergraph category $C$ and every interpretation $\mathcal{I}\colon\:\Sigma \to C$, 
    we have $[\![ s ]\!]_{\mathcal{I}} = [\![ t ]\!]_{\mathcal{I}}$. 
    We write $\underline{t}$ for the equivalence class of a hypergraph term $t$ under this equivalence relation. 
    Finally, we denote the set of hypergraph terms $\mathbf{S} \to \mathbf{T}$ modulo equivalence 
    by $\underline{\mathrm{Term}}(\Sigma)(\mathbf{S}, \mathbf{T})$, 
    and accordingly, we write $\underline{\mathrm{Term}}(\Sigma)$ for the resulting hypergraph category.
\end{definition}

As for $\underline{\mathrm{Dot}}(\Sigma)$, the hypergraph category $\underline{\mathrm{Term}}(\Sigma)$ is the free strict hypergraph category on $\Sigma$.
In more detail, this means the following.

\begin{theorem}
    Let $\Sigma$ be a signature, let $C$ be a strict hypergraph category,
    and let $\mathcal{I}: \Sigma \to C$ be an interpretation. Then
        $$[\![ \,\cdot\, ]\!]_{\mathcal{I}}\colon\: \underline{\mathrm{Term}}(\Sigma) \to C$$
    is the unique strict hypergraph functor such that for all function symbols $f \in \Sigma$, 
    we have $[\![ f ]\!]_{\mathcal{I}} = \mathcal{I}(f)$.
\end{theorem}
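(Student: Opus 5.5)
The plan is to exploit the fact that the equivalence relation on terms is defined semantically: $s$ and $t$ are equivalent exactly when $[\![ s ]\!]_{\mathcal{J}} = [\![ t ]\!]_{\mathcal{J}}$ for every strict hypergraph category $C'$ and every interpretation $\mathcal{J}$. Most of the statement then follows formally. The order of steps is: (i) show that $\underline{\mathrm{Term}}(\Sigma)$ is a strict hypergraph category; (ii) show that $[\![\,\cdot\,]\!]_{\mathcal{I}}$ is a well-defined strict hypergraph functor on it; (iii) prove uniqueness by structural induction.

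\textbf{Step (i).} Equivalence is a congruence for $\circ$ and $\otimes$. This holds because the semantics of $s\circ t$ and $s\otimes t$ in any $C'$ depends only on the semantics of $s$ and $t$. Hence $\circ$ and $\otimes$ descend to equivalence classes.

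Every axiom of strict symmetric monoidal categories and of hypergraph categories (Figure~\ref{fig:hypergraph-categories}) holds in $\underline{\mathrm{Term}}(\Sigma)$. The reason is that both sides of an axiom, written as terms, have equal semantics in every strict hypergraph category $C'$, since $C'$ itself satisfies the axiom. This covers associativity, unit laws, interchange, naturality of $\mathrm{swap}$, and the Frobenius and special equations.

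Some care is needed with $\mathrm{id}_{\mathbf{T}}$ and the structure maps on lists, for example $\mathrm{copy}_{\mathbf{T}}$ versus the tensor of the generators $\mathrm{copy}_{T_i}$ composed with swaps. Identities such as $\mathrm{id}_{\mathbf{S}\mathbf{T}} \sim \mathrm{id}_{\mathbf{S}}\otimes \mathrm{id}_{\mathbf{T}}$ again hold semantically, because every target category is strict and every $[\![\,\cdot\,]\!]_{\mathcal{J}}$ sends $\mathbf{T}$ to $\mathcal{J}(T_1)\otimes\cdots\otimes\mathcal{J}(T_n)$.

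\textbf{Step (ii).} Well-definedness on classes is immediate from the definition of equivalence applied to the particular pair $(C,\mathcal{I})$. Functoriality, monoidality, and preservation of $\mathrm{swap}$, $\mathrm{copy}$, $\mathrm{del}$, $\mathrm{equate}$, $\mathrm{new}$ are exactly the recursive clauses of Definition~\ref{def:hypergraph-terms}. I read $\mathcal{I}(s)$ in those clauses as $[\![ s]\!]_{\mathcal{I}}$.

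\textbf{Step (iii).} Let $G$ be any strict hypergraph functor with $G(\underline f)=\mathcal{I}(f)$. I show $G(\underline t)=[\![ t]\!]_{\mathcal{I}}$ by induction on the structure of $t$:
\begin{itemize}
\item \emph{Generators.} These are fixed by hypothesis.
\item \emph{Structural constants.} $G$ must send them to the corresponding structure of $C$, since $G$ preserves $\mathrm{swap}$ and the hypergraph structure.
\item \emph{Composites and tensors.} These are handled by functoriality and strict monoidality of $G$.
\end{itemize}

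\textbf{Main obstacle.} The delicate point is step (i): checking that $\underline{\mathrm{Term}}(\Sigma)$ is genuinely a strict hypergraph category and that the structure morphisms are indexed coherently by lists of sorts, so that "strict hypergraph functor" makes sense. I would handle this uniformly with the single observation above: any equation valid in all strict hypergraph categories holds in $\underline{\mathrm{Term}}(\Sigma)$. This avoids any rewriting or normal-form argument. The comparison with $\underline{\mathrm{Dot}}(\Sigma)$ from Theorem~\ref{thm:dot-diagrams-free-hypergraph-category} is not needed for this statement.
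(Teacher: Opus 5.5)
The paper gives no proof of this statement; it treats it as immediate from the fact that term equivalence is defined semantically. Your steps (i) and (ii) are exactly the routine verification being taken for granted there: congruence, transfer of every axiom because each $[\![\,\cdot\,]\!]_{\mathcal{J}}$ commutes with all the structure, and well-definedness by specialising to $(C,\mathcal{I})$. Your reading of $\mathcal{I}(s)$ as $[\![ s]\!]_{\mathcal{I}}$ in the recursive clauses is also the right one.

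The one step that does not go through as written is the structural-constant case of (iii). Preservation of the hypergraph structure gives $G(\underline{\mathrm{copy}_{\mathbf{T}}}) = \mathsf{copy}_{G(\mathbf{T})}$. This equals $[\![\mathrm{copy}_{\mathbf{T}}]\!]_{\mathcal{I}} = \mathsf{copy}_{\mathcal{I}(\mathbf{T})}$ only if $G(\mathbf{T}) = \mathcal{I}(\mathbf{T})$, and the same holds for $\mathrm{id}$, $\mathrm{swap}$, $\mathrm{del}$, $\mathrm{equate}$ and $\mathrm{new}$. You never establish that $G$ agrees with $\mathcal{I}$ on objects, and this does not follow from the hypothesis $G(\underline f) = \mathcal{I}(f)$ on function symbols.

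Concretely, suppose some sort $T$ occurs in no function symbol. Take any interpretation $\mathcal{J}$ that agrees with $\mathcal{I}$ except that $\mathcal{J}(T)\neq\mathcal{I}(T)$. Then $[\![\,\cdot\,]\!]_{\mathcal{J}}$ satisfies the same hypothesis but differs from $[\![\,\cdot\,]\!]_{\mathcal{I}}$ on $\underline{\mathrm{id}_{(T)}}$. Even when every sort occurs, tensor products of objects need not be cancellative: $2\otimes 3 = 3\otimes 2$ in $\mathsf{Mat}_R$, so the domains of the $\mathcal{I}(f)$ do not pin down the individual $\mathcal{I}(T)$.

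So uniqueness must be read, as is surely intended (and likewise for \Cref{thm:dot-diagrams-free-hypergraph-category}), as uniqueness among strict hypergraph functors that extend $\mathcal{I}$ as a morphism of signatures. That is, $G$ must also satisfy $G(T)=\mathcal{I}(T)$ for every sort $T\in\Sigma_0$. Add this hypothesis explicitly. Strict monoidality then gives $G(\mathbf{T}) = \mathcal{I}(T_1)\otimes\cdots\otimes\mathcal{I}(T_n)$ for every list $\mathbf{T}$, and your induction is complete.
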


\subsection{From terms to dot diagrams}

By the essential uniqueness of the free strict hypergraph category, 
we have an isomorphism of hypergraph categories $\underline{\mathrm{Dot}}(\Sigma) \cong \underline{\mathrm{Term}}(\Sigma)$. 
The isomorphism in the direction from hypergraph terms to dot diagrams can easily be described explicitly, as follows.

\begin{definition}
    Let $\Sigma$ be a monoidal signature. Let 
    $\mathcal{I}_\mathsf{dot}\colon \Sigma \to \underline{\mathrm{Dot}}(\Sigma)$
    be the interpretation that is the identity on function symbols
    and assigns to each function symbol $f$ in $\Sigma$ its corresponding dot diagram $\langle f \rangle$. 
    The \emph{associated dot diagram} (modulo equivalence) of a hypergraph term $t$ is, 
    $$ \underline{\mathsf{dot}}(t) := [\![ t]\!]_{\mathcal{I}_\mathsf{dot}} $$
    i.e.~its semantics in $\underline{\mathrm{Dot}}(\Sigma)$ under $\mathcal{I}_\mathsf{dot}$. 
\end{definition}

\subsection{Algebraisations of dot diagrams}

Going in the other direction, from dot diagrams to hypergraph terms, 
is what we call \emph{algebraisation};
the resulting term that equivalently presents a morphism in the free hypergraph category is called \emph{an} algebraisation.

\begin{definition}
    An \emph{algebraisation} of a dot diagram ${\func{f}}$ 
    is a hypergraph term $t$ such that $\underline{\mathsf{dot}}(t) = \underline{{\func{f}}}$. 
\end{definition}

\subsection{Quantifying the complexity of terms}

We consider two ways to quantify the complexity of a hypergraph term $t$. 
The first one, the \emph{DAG-size}, corresponds to the size of the directed acyclic graph representing $t$. 
The second one, the \emph{width} of a hypergraph term, corresponds to the largest number of in- and outputs of any intermediate result reached when evaluating $t$ bottom-up.
The width of $t$ can be used, for example, to bound the time complexity of evaluating the semantics of $t$ in a matrix category, where composition is given by matrix multiplication and the tensor product is given by the Kronecker product.

\begin{definition}
    Let $\Sigma$ be a signature and let $t: (S_1, \dots, S_n) \to (T_1, \dots, T_n)$ be a hypergraph term over $\Sigma$. 
    The \emph{DAG-size} $|t|$ of $t$ is the number of distinct sub-terms of $t$. 
    The \emph{width} of $t$ is the maximum,
    $$ \mathsf{width}(t) := \max_{s\subseteq t} |\mathrm{dom}(s)| + |\mathrm{codom}(s)|, $$
    over the number of inputs plus the number of outputs of sub-terms of $t$.
\end{definition}

Note that $\mathsf{width}(t)$ is \emph{not} easily read from the DAG representations of Figure \ref{fig:syntax-dags}: 
one also needs to take the number of inputs and outputs at each vertex into account.

\section{An algebraisation algorithm}\label{sec:algebraisation}

The goal of this section is to describe an algorithm 
that converts a dot diagram over a monoidal signature
into a hypergraph term of low width and small DAG-size.
The main high-level steps of this algorithm are shown in \Cref{fig:algorithm}.
The width and DAG-size of the algebraisation that our algorithm finds 
depend on the \emph{treewidth} of the so-called \emph{primal graph} of ${\func{f}}$. 
Subsequently, a `graph' is always a finite, undirected graph which may contain loops or multiple edges.

\begin{definition}[Primal graph]
    Let ${\func{f}}$ be a $\Sigma$-diagram over a monoidal signature $\Sigma$.
    The \emph{primal graph} ${\func{f}}'$ of ${\func{f}}$
    is the graph whose set of vertices is the set $X_{\func{f}}$ of variables of ${\func{f}}$, with exactly one edge between any two distinct variables that both appear either a) in some common assignment, 
    b) or as inputs, c) or as outputs.
\end{definition}

The following definition is standard; see \cite{bodlaender2005discovering} for an exposition to tree decompositions and treewidth, including their history and applications.

\begin{definition}
    Let $G$ be a graph. 
    A \emph{tree decomposition} of $G$ is a tree $\mathcal{T}$
    together with a family $(X_t)_{t\in V(\mathcal{T})}$ 
    of subsets $X_t \subseteq V(G)$, called \emph{bags}, such that:
    \begin{enumerate}
        \item The bags of $\mathcal{T}$ cover $G$: 
            $$\bigcup_{t\in V(\mathcal{T})} X_t = G, $$
        \item For all adjacent $v, w\in V(G)$, then there exists some $t\in \mathcal{T}$ such that $v, w \in X_t$.
        \item For all $v\in V(G)$, the set $\{t\in V(\mathcal{T}) \mid v \in X_t\}$ induces a connected subtree of $\mathcal{T}$. 
    \end{enumerate}
    The \emph{width} of a tree decomposition $\mathcal{T}$ is the cardinality of its largest bag, minus one,
        $$ \mathsf{width}(\mathcal{T}) := \max_{t\in \mathcal{T}} |X_t| - 1. $$
    The \emph{treewidth} $\mathsf{tw}(G)$ of $G$ is the minimum width of any tree decomposition of $G$.
\end{definition}

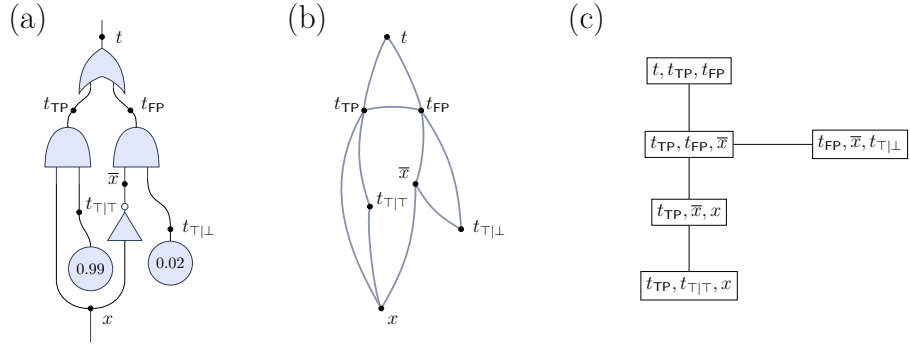
\begin{figure}
    \centering
    \scalebox{0.6}{
    \begin{tikzpicture}
	\begin{pgfonlayer}{nodelayer}
		\node [style=dot] (76) at (1.75, -0.25) {};
		\node [style=dot] (83) at (0.25, -5.75) {};
		\node [style=none] (84) at (5, -2.25) {{\Large $t_{\top|\bot}$}};
		\node [style=none] (85) at (0.75, -6.25) {{\Large $x$}};
		\node [style=none] (86) at (0.75, -1) {{\Large $t_{\top | \top}$}};
		\node [style=none] (87) at (-1.25, 3.25) {{\Large $t_{\mathsf{TP}}$}};
		\node [style=none] (88) at (2.75, 3.25) {{\Large $t_{\mathsf{FP}}$}};
		\node [style=none] (89) at (1.25, 6.25) {{\Large $t$}};
		\node [style=dot] (91) at (-0.5, 3) {};
		\node [style=dot] (92) at (2, 3) {};
		\node [style=dot] (93) at (3.75, -2.25) {};
		\node [style=dot] (94) at (-0.25, -1.25) {};
		\node [style=box] (96) at (13.75, 4.75) {{\Large $t, t_{\mathsf{TP}}, t_{\mathsf{FP}}$}};
		\node [style=box] (97) at (13.75, 1.5) {{\Large $t_{\mathsf{TP}},  t_{\mathsf{FP}}, \overline{x}$}};
		\node [style=box] (98) at (21.25, 1.5) {{\Large $t_{\mathsf{FP}}, \overline{x}, t_{\top | \bot}$}};
		\node [style=box] (99) at (13.75, -1.5) {{\Large $t_{\mathsf{TP}},  \overline{x}, x$}};
		\node [style=box] (100) at (13.75, -4.75) {{\Large $t_{\mathsf{TP}},  t_{\top | \top}, x$}};
		\node [style=dot] (102) at (0.5, 6.25) {};
		\node [style=none] (103) at (1.25, 0.25) {{\Large $\overline{x}$}};
		\node [style=none] (149) at (-3, 7) {{\huge (b)}};
		\node [style=none] (150) at (9.25, 7) {{\huge (c)}};
		\node [style=none] (151) at (-14.5, 0.5) {};
		\node [style=none] (152) at (-13.5, 2.25) {};
		\node [style=none] (153) at (-12.5, 0.5) {};
		\node [style=none] (154) at (-13.5, 2.25) {};
		\node [style=none] (155) at (-13, 0.5) {};
		\node [style=none] (156) at (-14, 0.5) {};
		\node [style=none] (157) at (-14, -0.25) {};
		\node [style=none] (158) at (-13, -0.25) {};
		\node [style=none] (159) at (-13, 3.75) {};
		\node [style=none] (160) at (-12, 5.75) {};
		\node [style=none] (161) at (-11, 3.75) {};
		\node [style=none] (162) at (-12, 4.5) {};
		\node [style=none] (163) at (-12, 7) {};
		\node [style=none] (164) at (-11.5, 4.25) {};
		\node [style=none] (165) at (-12.5, 4.25) {};
		\node [style=none] (166) at (-12.5, 4) {};
		\node [style=none] (167) at (-11.5, 4) {};
		\node [style=none] (168) at (-11.5, 0.5) {};
		\node [style=none] (169) at (-10.5, 2.25) {};
		\node [style=none] (170) at (-9.5, 0.5) {};
		\node [style=none] (171) at (-10.5, 2.25) {};
		\node [style=none] (172) at (-10, 0.5) {};
		\node [style=none] (173) at (-11, 0.5) {};
		\node [style=none] (174) at (-11, 0.25) {};
		\node [style=none] (175) at (-10, 0) {};
		\node [style=none] (176) at (-11.75, -2.75) {};
		\node [style=none] (177) at (-11, -1.5) {};
		\node [style=none] (178) at (-10.25, -2.75) {};
		\node [style=dot] (179) at (-11, -0.25) {};
		\node [style=none] (180) at (-11, -2.75) {};
		\node [style=light blue dot] (181) at (-12.5, -4) {$0.99$};
		\node [style=light blue dot] (182) at (-9, -3.75) {$0.02$};
		\node [style=none] (183) at (-14, -4.5) {};
		\node [style=none] (184) at (-11, -4.5) {};
		\node [style=none] (185) at (-12.5, -7.25) {};
		\node [style=dot] (186) at (-12.5, -5.75) {};
		\node [style=not dot] (187) at (-11, -1.25) {};
		\node [style=dot] (188) at (-13.25, 3) {};
		\node [style=dot] (189) at (-10.75, 3) {};
		\node [style=dot] (190) at (-9, -2.25) {};
		\node [style=dot] (191) at (-13, -1.5) {};
		\node [style=dot] (192) at (-12, 6.25) {};
		\node [style=none] (207) at (-10.75, 0) {};
		\node [style=none] (229) at (-15.25, 7) {{\huge (a)}};
		\node [style=none] (230) at (-11.25, 6.25) {{\Large $t$}};
		\node [style=none] (231) at (-14, 3.25) {{\Large $t_{\mathsf{TP}}$}};
		\node [style=none] (232) at (-9.75, 3.25) {{\Large $t_{\mathsf{FP}}$}};
		\node [style=none] (233) at (-11.5, 0) {{\Large $\overline{x}$}};
		\node [style=none] (234) at (-12, -1.25) {{\Large $t_{\top | \top}$}};
		\node [style=none] (235) at (-7.75, -2.25) {{\Large $t_{\top|\bot}$}};
		\node [style=none] (236) at (-11.75, -6.25) {{\Large $x$}};
	\end{pgfonlayer}
	\begin{pgfonlayer}{edgelayer}
		\draw [style=solid bluegrey, bend left=15, looseness=0.75] (91) to (92);
		\draw [style=solid bluegrey, bend right=15] (91) to (94);
		\draw [style=solid bluegrey, bend right=15, looseness=0.50] (94) to (83);
		\draw [style=solid bluegrey, in=-120, out=120] (83) to (91);
		\draw [style=solid bluegrey, bend right=15] (83) to (76);
		\draw [style=solid bluegrey, bend right=15, looseness=0.75] (76) to (92);
		\draw [style=solid bluegrey, bend right=15] (93) to (92);
		\draw [style=solid bluegrey, bend right=15] (76) to (93);
		\draw [style=black solid] (96) to (97);
		\draw [style=black solid] (97) to (98);
		\draw [style=black solid] (97) to (99);
		\draw [style=black solid] (99) to (100);
		\draw [style=solid bluegrey, bend left=15, looseness=0.50] (91) to (102);
		\draw [style=solid bluegrey, bend left=15, looseness=0.50] (102) to (92);
		\draw [style=dark blue solid] (153.center)
			 to (151.center)
			 to [in=-180, out=90] (152.center)
			 to [in=90, out=0] cycle;
		\draw [style=dark blue solid] (154.center) to (152.center);
		\draw [style=dark blue solid] (156.center) to (157.center);
		\draw [style=dark blue solid] (155.center) to (158.center);
		\draw [style=dark blue solid] (161.center)
			 to [in=-30, out=90] (160.center)
			 to [in=90, out=210] (159.center)
			 to [in=195, out=60, looseness=0.75] (162.center)
			 to [in=120, out=-15, looseness=0.75] cycle;
		\draw [style=dark blue solid] (163.center) to (160.center);
		\draw [style=dark blue solid] (165.center) to (166.center);
		\draw [style=dark blue solid] (164.center) to (167.center);
		\draw [style=dark blue solid] (168.center)
			 to [in=-180, out=90] (169.center)
			 to [in=90, out=0] (170.center)
			 to cycle;
		\draw [style=dark blue solid] (171.center) to (169.center);
		\draw [style=dark blue solid] (173.center) to (174.center);
		\draw [style=dark blue solid] (172.center) to (175.center);
		\draw [style=dark blue solid] (176.center)
			 to (177.center)
			 to (178.center)
			 to cycle;
		\draw [style=dark blue solid] (179) to (177.center);
		\draw [style=dark blue solid] (179) to (174.center);
		\draw [style=black solid] (183.center) to (157.center);
		\draw [style=black solid, in=-90, out=180] (186) to (183.center);
		\draw [style=black solid, in=-90, out=0] (186) to (184.center);
		\draw [style=dark blue solid] (186) to (185.center);
		\draw [style=black solid, in=90, out=-90, looseness=1.50] (175.center) to (190);
		\draw [style=black solid] (190) to (182);
		\draw [style=black solid] (158.center) to (191);
		\draw [style=black solid, in=90, out=-90] (191) to (181);
		\draw [style=black solid] (180.center) to (184.center);
		\draw [style=black solid, in=60, out=-90] (166.center) to (188);
		\draw [style=black solid, in=90, out=-135] (188) to (154.center);
		\draw [style=black solid, in=90, out=-45] (189) to (171.center);
		\draw [style=black solid, in=-90, out=135] (189) to (167.center);
	\end{pgfonlayer}
\end{tikzpicture}
    }
    \caption{
    (a) Dot diagram $\func{test}$ from \Cref{fig:code-and-diagram};
    (b) its primal graph $\func{test}'$;
    (c) tree decomposition of $\func{test}'$ of width $2$.
    }
    \label{fig:primal-graph-example}
\end{figure}

The main result for this section shows that the treewidth of the primal graph of a $\Sigma$-diagram $\func{f}$ is a crucial parameter for both the width and the DAG-size of the algebraisation of $\func{f}$, as well as for the time it takes to find this algebraisation. This version of our algebraisation algorithm employs a specific algorithm for low-width tree-decompositions \cite{korhonen2023single} 
for the purpose of a clean theoretical description.
However, any algorithm for computing tree decompositions 
(or, more directly, \emph{branch decompositions of hypergraphs}) may be used.

\begin{theorem}\label{thm:algebraisation}
    Let ${\func{f}}$ be a $\Sigma$-diagram over some monoidal signature $\Sigma$, 
    let $k := \mathsf{tw}({\func{f}}')$ be the treewidth of the primal graph of ${\func{f}}$, 
    and let $n := 1 + |A_{\func{f}}|$ be one plus the number of assignments in ${\func{f}}$.
    
    Then ${\func{f}}$ has an algebraisation 
    $t$ with $\mathsf{width}(t) = O(k)$
    and DAG-size $|t| = O(nk^2)$.
    Moreover, this algebraisation can be computed in $2^{O(k)}n$ time.
\end{theorem}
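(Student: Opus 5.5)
We follow the pipeline of \Cref{fig:algorithm}. First, we run the single-exponential approximation algorithm of \cite{korhonen2023single} on the primal graph ${\func{f}}'$. It returns a tree decomposition of width at most $2k+1$ in $2^{O(k)}|V({\func{f}}')|$ time. Since every variable is an input, an output, or occurs in an assignment, $|X_{\func{f}}|$ is bounded by the total arity of the assignments plus the interface. We treat arities as $O(k)$: every assignment forms a clique in ${\func{f}}'$, and the inputs and outputs form cliques as well, so each such set has size at most $k+1$. Under this convention the input size is $O(nk)$. We then normalise the decomposition in linear time so that it is rooted, has maximum degree $3$, and has $O(n)$ nodes. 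To get the node count down, we contract nodes that are not needed to cover some clique; we charge each assignment, and the input and output cliques, to one node whose bag contains it, which exists by the Helly property of subtrees. We then convert the result into a branch decomposition of the hypergraph whose hyperedges are the assignments, plus one `boundary' hyperedge for $\mathsf{in}_{\func{f}}\cup\mathsf{out}_{\func{f}}$. Concretely, this is a ternary tree whose leaves are the hyperedges. For every tree edge $e$, the set $\mathrm{mid}(e)$ of variables incident to hyperedges on both sides is contained in a bag, so $|\mathrm{mid}(e)|\le 2k+2$.

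Next comes the recursive extraction (step B). We root the branch decomposition at the boundary leaf. For a subtree below edge $e$, we form the sub-diagram ${\func{f}}_e$. Its assignments are the leaves of the subtree. Its variables are those occurring in these assignments. Its output list is $\mathrm{mid}(e)$ in a fixed order, and it has no inputs. An internal node with children $e_1,e_2$ and parent edge $e$ then gives a `gluing' diagram $\func{g}_e$. This diagram takes $\mathrm{mid}(e_1)\mathrm{mid}(e_2)$ as an abstract input and outputs $\mathrm{mid}(e)$, and ${\func{f}}_e \equiv \func{g}_e\circ({\func{f}}_{e_1}\otimes{\func{f}}_{e_2})$. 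Such a $\func{g}_e$ contains no function symbols; it is pure wiring on at most $3(2k+2)$ dots. Every variable of $\mathrm{mid}(e_1)\cup\mathrm{mid}(e_2)$ either survives into $\mathrm{mid}(e)$ or is internal to the node; in the latter case it is closed off with $\mathrm{del}$. Shared variables are identified with $\mathrm{equate}$, and multiplicities with $\mathrm{copy}$. The correctness of this equation is a statement in $\underline{\mathrm{Dot}}(\Sigma)$. We verify it using the explicit composition and tensor of dot diagrams from the appendix: composition merges output dots with input dots, and the running-intersection property of the decomposition guarantees that no variable is wrongly identified or lost. At the leaves, an assignment $(\mathbf{y},f,\mathbf{x})$ is written as $f$ precomposed with $\mathrm{new}$s and postcomposed with a wiring that copies and equates repeated variables and exposes $\mathrm{mid}$. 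At the root, the boundary hyperedge is handled by bending the interface: we use $\mathrm{copy}\circ\mathrm{new}$ caps on each input variable to turn it into an input of the whole term, which is legitimate by the Frobenius/compact-closed axioms of \Cref{fig:hypergraph-categories}.

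The key lemma, and the step I expect to be the main obstacle, is the algebraisation of a pure wiring diagram. Let $\func{w}$ be any symbol-free dot diagram with $a$ inputs, $b$ outputs, and $m$ dots. We need a hypergraph term of width $O(a+b+m)$ and DAG-size $O((a+b+m)^2)$ for it. My plan is a Frobenius normal form. Each dot $v$ becomes a spider: we merge its input occurrences by a chain of $\mathrm{equate}$s, start from $\mathrm{new}$ if there are none, and split to its output occurrences by a chain of $\mathrm{copy}$s, ending in $\mathrm{del}$ if there are none. This is preceded and followed by permutations built from adjacent $\mathrm{swap}$s, of which there are $O((a+b)^2)$, a bubble sort. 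Equivalence with $\func{w}$ follows by induction on spiders using the spider theorem implicit in \Cref{thm:dot-diagrams-free-hypergraph-category}. The delicate part is keeping every intermediate subterm within the width bound. To do this, we tensor identities in layer by layer, so that each layer has domain and codomain of size at most $a+b+m$. Applying this with $a+b+m=O(k)$ at each of the $O(n)$ nodes gives width $O(k)$ and DAG-size $O(k^2)$ per node, hence $|t|=O(nk^2)$ overall. The composites ${\func{f}}_e$ themselves have width at most $|\mathrm{mid}(e)|\le 2k+2$.

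Finally, for the running time we count the work: the decomposition takes $2^{O(k)}n$, the normalisation and conversion take $O(nk)$, and emitting the per-node terms takes $O(k^2)$ each. This gives $2^{O(k)}n$ in total. Correctness follows by induction up the branch decomposition: at each edge $e$, the term $t_e$ satisfies $\underline{\mathsf{dot}}(t_e)=\underline{{\func{f}}_e}$. Hence the root term is an algebraisation of ${\func{f}}$.
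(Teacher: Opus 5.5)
Your argument is essentially the paper's. Both use Korhonen's approximate tree decomposition, convert it to a branch decomposition, add one pure-wiring gluing step per internal node, and algebrise pure wirings by a permutation/spider normal form. The paper phrases each internal node as a two-assignment diagram, algebrises it via the Frobenius decomposition, and substitutes the children's terms. Your choice to make every $\func{f}_e$ input-free, so that $\func{f}_e\equiv\func{g}_e\circ(\func{f}_{e_1}\otimes\func{f}_{e_2})$ with $\func{g}_e$ a pure wiring, is a streamlined version of the same step. Your convention that arities are $O(k)$ does not follow from the clique argument when variables repeat: a single assignment $g(x,\dots,x)$ gives $k=0$ but forces width equal to the arity. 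The paper relies on the same assumption implicitly, and the statement needs it.

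One step fails as written, though it is easy to repair. Merging $\mathsf{in}_{\func{f}}$ and $\mathsf{out}_{\func{f}}$ into one boundary hyperedge invalidates your claim that every $\mathrm{mid}(e)$ lies in a bag. The primal graph $\func{f}'$ has edges within $\mathsf{in}_{\func{f}}$ and within $\mathsf{out}_{\func{f}}$, but none between them, so no bag need contain both. For example, take the chain $\lett x_i = g(x_{i-1})$ for $1\le i\le n$, $n\ge 2$, with input $x_0$ and output $x_n$. Its primal graph is a path, and the path decomposition with bags $\{x_{i-1},x_i\}$ has no bag containing $\{x_0,x_n\}$. The boundary leaf then cannot be attached as in the Robertson--Seymour conversion.

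The fix is to add $\mathsf{set}(\mathsf{in}_{\func{f}})$ to every bag on the tree path from a bag containing $\mathsf{set}(\mathsf{in}_{\func{f}})$ to one containing $\mathsf{set}(\mathsf{out}_{\func{f}})$. This is still a tree decomposition, now of width at most $3k+2$. Hence $|\mathrm{mid}(e)|\le 3k+3$, and all your $O(k)$ bounds survive.

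There are two smaller slips.
\begin{itemize}
\item At a leaf, the inputs of $f$ must come from $\mathrm{copy}\circ\mathrm{new}$ caps, with one leg into $f$ and one exposed, not from bare $\mathrm{new}$s; otherwise they cannot appear in $\mathrm{mid}(e)$.
\item At the root, turning outputs of $t_{e_0}$ back into inputs needs cups $\mathrm{del}\circ\mathrm{equate}$, not caps. The cleanest root step is $t:=w\circ(\mathrm{id}\otimes t_{e_0})$, with the identity on the sorts of $\mathsf{in}_{\func{f}}$. Here $w$ is an algebraisation, via your pure-wiring lemma, of the wiring from $\mathsf{in}_{\func{f}}\,\mathrm{mid}(e_0)$ to $\mathsf{out}_{\func{f}}$.
\end{itemize}
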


The rest of this section is devoted to describing the algebraisation algorithm in detail, and showing its correctness. In particular, we will conclude with the proof of \Cref{thm:algebraisation} and its generalisation to hierarchical dot diagrams, \Cref{thm:algebraisation-general}. The proof consists of the following steps:
\begin{description}
\item[Sec. \ref{ssec:purewiring}] First, we analyse algebraisations of diagrams without any assignments;
\item[Sec. \ref{ssec:badalg}] We then consider `inefficient' algebraisations of general diagrams (disregarding width for the moment);
\item[Sec. \ref{ssec:unfold}] In preparation for the next step, 
we introduce \emph{unfolding} hierarchical dot diagrams into single large dot diagrams;
\item[Sec. \ref{ssec:branchdec}] We refactor large dot diagrams into hierarchical dot diagrams, with `small' component dot diagrams;
\item[Sec. \ref{ssec:proof}] Finally, we put everything together in a divide-and-conquer fashion, by using the `inefficient' algebraisation on each small component.
\end{description}
By the time we are done, we get the analogous result for hierarchical dot diagrams (Theorem \ref{thm:algebraisation-general}) practically for free.

\subsection{Algebrising pure wiring} \label{ssec:purewiring}

We first algebrise \emph{pure wirings}: dot diagrams with no assignments at all. 
We distinguish three special types:
\emph{permutation-type}, \emph{copy-and-delete-type}, and \emph{equate-and-select-type} dot diagrams.

\begin{definition}
    Let $\Sigma$ be a monoidal signature. 
    A $\Sigma$-diagram ${\func{f}}$ is a \emph{pure wiring} if $|A_{\func{f}}| = 0$.
    If, moreover, $\mathsf{in}_{\func{f}}$ and $\mathsf{out}_{\func{f}}$ both do not contain any duplicates, are of the same length and $\mathsf{set}(\mathsf{in}_{\func{f}}) = \mathsf{set}(\mathsf{out}_{\func{f}})$, 
    then ${\func{f}}$ is of \emph{permutation type}.
    
    A \emph{copy-and-delete-type} $\Sigma$-diagram is a pure wiring 
    such that both $\mathsf{in}_{\func{f}}$ and $\mathsf{out}_{\func{f}}$
    are sorted according to some (common, arbitrary) order on $X_{\func{f}}$, 
    every variable in $\mathsf{out}_{\func{f}}$ is also in $\mathsf{in}_{\func{f}}$,
    and $\mathsf{in}_{\func{f}}$ contains no duplicates.
    
    Dually, an \emph{equate-and-select-type} $\Sigma$-diagram is a pure wiring 
    such that $\mathsf{in}_{\func{f}}$ and $\mathsf{out}_{\func{f}}$
    are sorted according to some (common, arbitrary) order on $X_{\func{f}}$, 
    and $\mathsf{out}_{\func{f}}$ contains no duplicates.
\end{definition}

Copy-and-delete-type dot diagrams are trivial to algebrise: 
they are a tensor product of $i$-fold copying dot diagrams, where each $i$ is the multiplicity of the corresponding variable in the list of outputs.
Algebraisations for equate-and-select-type dot diagrams are obtained in the same way. 
The case of algebrising permutation-type wirings is less obvious. 
In essence, it amounts to a parallel sorting algorithm that uses only adjacent swaps,
leading to the following lemma, which is proved in detail in \Cref{sec:proof-of-lem-algebraise-permutation} of the appendix.

\begin{lemma}\label{lem:algebrise-permutation}
    Let $\Sigma$ be a monoidal signature, 
    let $\sigma$ be a permutation-type $\Sigma$-diagram, 
    and let $k := |\mathsf{in}_\sigma| = |\mathsf{out}_\sigma|$ be the number of inputs (or equivalently outputs) of $\sigma$.
    Then there exists an algebraisation $t$ of $\sigma$ with
    $|t| = O(k^2)$ and $\mathsf{width}(t) \leq 2k$.
    Moreover, such an algebraisation can be computed in time $O(k^2)$.
\end{lemma}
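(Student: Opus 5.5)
The plan is to realise the permutation by odd--even transposition sort, a parallel sorting network that uses only adjacent swaps and needs $k$ rounds. Write $\mathsf{in}_\sigma = (x_1,\dots,x_k)$ and $\mathsf{out}_\sigma = (x_{\pi(1)},\dots,x_{\pi(k)})$ for a permutation $\pi$ of $\{1,\dots,k\}$. This $\pi$ is well defined because neither list has duplicates and both have the same underlying set. Let $\mathbf{S}$ denote the list of sorts of $\mathsf{in}_\sigma$.

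First I compute $\pi$ in $O(k)$ time, for instance with a hash map or an array indexed by variables. I then simulate odd--even transposition sort on the list of labels $(\pi^{-1}(1),\dots,\pi^{-1}(k))$, or on whichever labelling makes the final order equal $\mathsf{out}_\sigma$. Round $r$, for $r = 1,\dots,k$, compares the pairs $(i,i+1)$ with $i \equiv r \pmod 2$ and swaps each pair that is out of order. The classical correctness theorem for this network (via the 0--1 principle, or a direct argument that every element reaches its final position within $k$ rounds) shows that $k$ rounds suffice.

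Each round $r$ becomes a term $t_r$ that is a tensor product, read left to right over the current positions, of factors of the form $\mathrm{swap}_{(S),(S')}$ on the swapped pairs and $\mathrm{id}_{(S)}$ on every other position. The sorts are tracked through the simulation, so each $t_r$ has domain and codomain of length $k$. I build each tensor product as a left-nested chain, so it contains $O(k)$ distinct subterms. Each of these subterms has at most $k$ inputs and $k$ outputs, giving width at most $2k$. The full term is $t := t_k \circ \cdots \circ t_1$, built as a nested chain of compositions. Every intermediate composite has domain and codomain of length $k$, so it also has width $2k$. (If $k=0$, take $t := \mathrm{id}_{()}$.)

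The DAG-size is at most $k$ rounds times $O(k)$ nodes each, plus $k$ composition nodes, which is $O(k^2)$. The construction time is the same, since each round is simulated in $O(k)$ time. Maximal sharing can only decrease the size.

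For correctness, I use the fact that $\underline{\mathsf{dot}}$ is a strict hypergraph functor. The dot diagram of $\mathrm{swap}$ and $\mathrm{id}$ is of permutation type. Tensor products of permutation-type diagrams (disjoint unions) and composites of them (gluing bijections) are again permutation-type diagrams, and their permutations are respectively the juxtaposed and the composed permutations. Hence, by induction over the construction, $\underline{\mathsf{dot}}(t)$ is the permutation-type diagram with inputs $\mathsf{in}_\sigma$ and outputs the sorted order, which is $\mathsf{out}_\sigma$. An isomorphism of variable sets identifies this with $\sigma$, giving $\underline{\mathsf{dot}}(t) = \underline{\sigma}$.

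I expect the main obstacle to be this last bookkeeping step. It requires stating precisely how composition glues variables, as defined in the appendix, and checking that the sort labels of the swaps match the sorts carried at the current positions. The bound on the number of rounds is standard and can be cited.
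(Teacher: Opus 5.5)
Your proposal is correct and is essentially the paper's proof. Both run odd--even transposition sort, turn each round into a tensor product of adjacent $\mathrm{swap}$s and $\mathrm{id}$s (width at most $2k$, DAG-size $O(k)$), and compose the rounds to get DAG-size $O(k^2)$. You are slightly sharper, using $k$ rounds where the paper bounds by $2k$ layers, and you supply the correctness argument via functoriality of $\underline{\mathsf{dot}}$ that the paper leaves implicit.
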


To algebrise general pure wirings, 
we need the following lemma and its corollary, 
allowing us to write any pure wiring as a composition of 
permutation-type, copy-and-delete-type, and equate-and-select-type wirings.
This can be shown by passing to the side of hypergraph terms, 
and then proving the corresponding decomposition property by structural induction;
see \Cref{sec:proof-of-lem-decompose-pure-wiring,sec:proof-of-algbrise-pure-wiring}.

\begin{lemma}\label{lem:decompose-pure-wiring}
    Let ${\func{f}}$ be a pure wiring over $\Sigma$.
    Then there exist 
    permutation-type $\Sigma$-diagrams $\sigma_1, \sigma_2, \sigma_3$, 
    a copy-and-select-type $\Sigma$-diagram $\mathsf{c}$,
    and an equate-and-delete-type $\Sigma$-diagram $\mathsf{e}$
    such that ${\func{f}} = \sigma_3 \circ \mathsf{e} \circ \sigma_2 \circ \mathsf{c} \circ \sigma_1$.
\end{lemma}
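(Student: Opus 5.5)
The plan is to pass to hypergraph terms via the isomorphism $\underline{\mathrm{Dot}}(\Sigma)\cong\underline{\mathrm{Term}}(\Sigma)$ and use the \emph{spider} normal form for pure wirings. Two modes are used throughout. The \emph{copy-and-select} stage $\mathsf{c}$ is a pure wiring whose input list has no duplicates, with both lists sorted and every output variable also an input. Each input is therefore copied some number $\geq 0$ of times; zero copies is a deletion, so only the selected copies survive. Dually, the \emph{equate-and-delete} stage $\mathsf{e}$ is a pure wiring with both lists sorted and no duplicates in its output list. Each output is obtained by equating some number $\geq 0$ of inputs; zero inputs is the unit $\mathsf{new}$.

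\textbf{Step 1 (spider decomposition).} Consider the equivalence relation on the input and output positions of ${\func{f}}$ that identifies two positions carrying the same variable of $X_{\func{f}}$. Every variable $x$ of a pure wiring occurs in $\mathsf{in}_{\func{f}}$ or $\mathsf{out}_{\func{f}}$, by condition (4) of the definition of dot diagrams. Hence, up to permuting inputs and outputs, ${\func{f}}$ is a tensor product $\bigotimes_{x\in X_{\func{f}}} s_x$ of single-variable spiders $s_x\colon m_x\to n_x$. Here $m_x,n_x$ are the multiplicities of $x$ in $\mathsf{in}_{\func{f}}$ and $\mathsf{out}_{\func{f}}$, and $m_x+n_x\geq 1$. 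Concretely, there are permutation-type diagrams $\sigma_1,\sigma_3$ with ${\func{f}}=\sigma_3\circ(\bigotimes_x s_x)\circ\sigma_1$. Here $\sigma_1$ sorts the inputs by a fixed order on $X_{\func{f}}$, using fresh dots so that it has no duplicates, and $\sigma_3$ unsorts the outputs. This step is purely combinatorial on lists.

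\textbf{Step 2 (each spider is copy-then-equate).} Write $s_x$ as a composite of three stages. First, copy each of its $m_x$ inputs into $n_x$ copies, as an $n_x$-fold $\mathsf{copy}$ tree, or $\mathsf{del}$ if $n_x=0$. Second, apply a permutation grouping the $m_xn_x$ wires by target output. Third, for each of the $n_x$ outputs, equate its $m_x$ incoming wires, as an $m_x$-fold $\mathsf{equate}$ tree, or $\mathsf{new}$ if $m_x=0$.

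The key check is that this composite is the connected spider $m_x\to n_x$ in $\underline{\mathrm{Dot}}(\Sigma)$. This is immediate on the dot-diagram side: composition merges glued dots, so all $m_x+n_x$ boundary positions end up on one dot. It is also true when $m_x=0$ or $n_x=0$, since $\mathsf{new}$ and $\mathsf{del}$ produce a single dot. Tensoring over $x$ and commuting the tensor with composition gives $\bigotimes_x s_x=\mathsf{e}\circ\sigma_2\circ\mathsf{c}$. Here $\mathsf{c}=\bigotimes_x(\text{copy/delete stage})$ and $\mathsf{e}=\bigotimes_x(\text{equate/new stage})$ are sorted as required, and $\sigma_2$ is the tensor of the grouping permutations.

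\textbf{Main obstacle.} The delicate point is bookkeeping at the level of concrete dot diagrams rather than diagrams modulo equivalence. Duplicated input variables of ${\func{f}}$ must be realised with a duplicate-free permutation $\sigma_1$ on fresh dots. Their identification is then recovered only through the merging of dots in the equate stage, and one must check that dots of different $x$ are never merged.

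I would handle this by taking the composite of Step 2 and exhibiting an explicit isomorphism of variable sets to ${\func{f}}$. The map sends every dot in the $x$-block to $x$. It preserves input and output lists by construction, and the assignment sets are both empty. Hence the equivalence ${\func{f}}\equiv\sigma_3\circ\mathsf{e}\circ\sigma_2\circ\mathsf{c}\circ\sigma_1$ follows directly from the definition, without invoking the (gap-prone) freeness theorem.
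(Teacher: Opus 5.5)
The gap is in Step~2, where you claim the copy-then-equate composite is the connected spider ``also when $m_x=0$ or $n_x=0$''.

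The middle layer for $x$ has $m_xn_x$ wires. The boundary positions of $x$ land on one dot exactly when the complete bipartite graph $K_{m_x,n_x}$ is connected. This fails when one side is empty and the other has at least two vertices.
\begin{itemize}
\item For the cap $(x,x)\to()$ your construction yields $\mathsf{del}\otimes\mathsf{del}$.
\item For the cup $()\to(x,x)$ it yields $\mathsf{new}\otimes\mathsf{new}$.
\end{itemize}
Each has two dots, so the map sending the $x$-block to $x$ is not injective, and your final isomorphism check fails.

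The cap case can be repaired. The paper's equate-and-select type does not require input variables to reappear as outputs, so you can pass the fresh dots through $\mathsf{c}$ unchanged and let $\mathsf{e}$ equate and discard them.

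The cup case cannot be repaired. Read the statement's types as the defined copy-and-delete and equate-and-select types, as you do. Take ${\func{f}}$ with $X_{\func{f}}=\{x\}$, $\mathsf{in}_{\func{f}}=()$ and $\mathsf{out}_{\func{f}}=(x,x)$, i.e.\ $\mathsf{copy}\circ\mathsf{new}$.
\begin{itemize}
\item $\sigma_1$ is empty.
\item $\mathsf{c}$ is empty, since its outputs must be among its inputs.
\item $\sigma_2$ is empty.
\item Gluing the duplicate-free lists $\mathsf{out}_{\mathsf{e}}$ and $\mathsf{in}_{\sigma_3}$ identifies no two distinct variables of $\sigma_3$. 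So the composite's output list, the image of $\mathsf{out}_{\sigma_3}$, is duplicate-free.
\end{itemize}
So the lemma is false as stated, and no argument can close this case.

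The remedy is to reverse the order: ${\func{f}}\equiv\sigma_3\circ\mathsf{c}\circ\sigma_2\circ\mathsf{e}\circ\sigma_1$. Realise each spider as $\mathsf{copy}_{n_x}\circ\mathsf{equate}_{m_x}$ through a single middle wire, using $\mathsf{new}$ if $m_x=0$ and $\mathsf{del}$ if $n_x=0$. Then $\mathsf{e}$ merges all input occurrences of $x$ into one dot, creating it if needed, and $\mathsf{c}$ copies or deletes it.

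With this order your explicit-isomorphism argument goes through verbatim in every case, and $\sigma_2$ can be the identity. The middle layer then has one wire per variable, at most $|\mathsf{in}_{\func{f}}|+|\mathsf{out}_{\func{f}}|$, which is what Corollary~\ref{cor:algbrise-pure-wiring} needs. Even where your $K_{m_x,n_x}$ layer is correct, it has $\sum_x m_xn_x$ wires, which is quadratic in $k$.

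For comparison, the paper proves the lemma by structural induction on an algebraising term over $\Sigma_\emptyset$. In the composition step it asserts that the merged middle block $t^1_{\sigma_2}\circ t^1_{\mathsf{c}}\circ t^1_{\sigma_1}\circ t^2_{\sigma_3}\circ t^2_{\mathsf{e}}\circ t^2_{\sigma_2}$ is permutation-type. That block contains an equate stage followed by a copy stage, for example $\mathrm{copy}\circ\mathrm{equate}$ or $\mathrm{copy}\circ\mathrm{new}$, so the claim is false and the paper's proof has the same hole. Your direct construction, which avoids the freeness theorem, is the better route once the order is corrected.
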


\begin{corollary}\label{cor:algbrise-pure-wiring}
    Let ${\func{f}}$ be a pure wiring over $\Sigma$,
    and let $k := \max \{|\mathsf{in}_{\func{f}}|, |\mathsf{out}_{\func{f}}|\}$.
    Then there exists an algebraisation $t$ of ${\func{f}}$ with $|t| = O(k^2)$ and $\mathsf{width}(t) \leq 2k$.
    In addition, such an algebraisation can be computed in time $O(k^2)$.
\end{corollary}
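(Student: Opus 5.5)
We would prove \Cref{cor:algbrise-pure-wiring} by combining the factorisation of \Cref{lem:decompose-pure-wiring} with \Cref{lem:algebrise-permutation} and the direct algebraisation of copy-and-delete-type and equate-and-select-type diagrams. We write ${\func{f}} = \sigma_3 \circ \mathsf{e} \circ \sigma_2 \circ \mathsf{c} \circ \sigma_1$, algebrise each factor separately, and take $t := t_3 \circ t_{\mathsf{e}} \circ t_2 \circ t_{\mathsf{c}} \circ t_1$. Since $\underline{\mathsf{dot}}$ is a strict hypergraph functor, it preserves composition. Hence $\underline{\mathsf{dot}}(t)$ is the composite of the dot diagrams of the factors, which is $\underline{{\func{f}}}$, so $t$ is an algebraisation.

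The quantitative part requires the interfaces between the factors to have size $O(k)$; ideally every intermediate interface has size at most $k$. The natural choice, which we expect the proof of \Cref{lem:decompose-pure-wiring} to provide, is the following. First, $\sigma_1$ sorts the inputs. Next, $\mathsf{c}$ deletes the input variables not occurring in the output and copies each remaining variable as many times as it appears in $\mathsf{out}_{\func{f}}$; this yields an interface of size at most $|\mathsf{out}_{\func{f}}| \le k$. Dually, $\mathsf{e}$ merges duplicated inputs and creates fresh output-only variables. Its input has length $\le |\mathsf{in}_{\func{f}}| \le k$ and its output length is $\le |\mathsf{out}_{\func{f}}| \le k$.

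In fact the order matters. Copying before equating can give an intermediate interface of size up to $|\mathsf{in}| + |\mathsf{out}|$, which is still $\le 2k$. We therefore check that each factor has at most $2k$ wires in total, choosing the decomposition so that, for example, duplicate input variables are first equated and deletions are performed early. If the lemma as stated does not give this directly, we reprove the bound: every variable of a pure wiring occurs in $\mathsf{in}$ or $\mathsf{out}$, so $|X_{\func{f}}| \le |\mathsf{in}| + |\mathsf{out}| \le 2k$. Each intermediate interface lists distinct variables, or the output list of ${\func{f}}$ itself, so it has length at most $2k$ in the worst case. This is the main obstacle: getting exactly $\mathsf{width}(t) \le 2k$ rather than $O(k)$. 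It requires tracking that every subterm of each factor's algebraisation has domain plus codomain at most $2k$, not merely each factor's own domain and codomain.

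For the factors themselves, \Cref{lem:algebrise-permutation} gives the permutations $\sigma_i$ with DAG-size $O(k^2)$ and width at most twice their arity, computable in $O(k^2)$ time. For $\mathsf{c}$ we build, for each variable with output multiplicity $i$, the term $\mathrm{del}$ if $i = 0$, and otherwise an $i$-fold copy $(\mathrm{copy}\otimes\mathrm{id}\otimes\dots)\circ\dots\circ\mathrm{copy}$ of DAG-size $O(i^2)$ with width $\le i+1$. We then tensor these terms together. Tensoring left to right gives partial tensor products whose domain and codomain stay within those of $\mathsf{c}$, so the width is bounded by $|\mathrm{dom}(\mathsf{c})| + |\mathrm{codom}(\mathsf{c})|$. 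The total DAG-size is $O(k^2)$ because $\sum_i i^2 \le (\sum_i i)^2 \le k^2$, plus $O(k)$ tensor nodes. The construction for $\mathsf{e}$ is symmetric, using $\mathrm{equate}$ and $\mathrm{new}$.

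Summing over the five factors, and adding four composition nodes, gives $|t| = O(k^2)$. The construction time is $O(k^2)$, plus the time to compute the factorisation: sorting and counting multiplicities of $O(k)$ variables takes $O(k\log k) \subseteq O(k^2)$.
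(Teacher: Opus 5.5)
Your overall route is the paper's: factor $\func{f}$ via \Cref{lem:decompose-pure-wiring}, algebrise the five factors, and compose. The correctness, DAG-size and running-time parts are fine. The gap is the bound $\mathsf{width}(t)\le 2k$. You flag it as the main obstacle but do not close it, and it cannot be closed with the factorisation you adopt. In that factorisation the copying factor $\mathsf{c}$ (duplicate-free inputs) comes \emph{before} the equating factor $\mathsf{e}$ (duplicate-free outputs). So input positions of $\func{f}$ carrying the same variable reach $\mathsf{c}$ as distinct variables, and can only be merged after crossing the middle interface.

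Take $\mathsf{in}_{\func{f}}=\mathsf{out}_{\func{f}}=(v,v)$, so $k=2$. The two variables of $\mathsf{c}$ and the two output variables of $\mathsf{e}$ must all be identified in the composite, and the only links between them are the wires between $\mathsf{c}$ and $\mathsf{e}$. Hence there are at least $3$ such wires, $t(\mathsf{c})$ alone has $|\mathrm{dom}|+|\mathrm{codom}|\ge 5>2k$, and $t(\sigma_2)$ has at least $6$. In general, a spider with $i,o\ge 1$ inputs and outputs forces at least $i+o-1$ middle wires. This is also why your ``natural choice'' paragraph is inconsistent: the output of $\mathsf{c}$ cannot both have length at most $|\mathsf{out}_{\func{f}}|$ and carry every duplicated input position to $\mathsf{e}$. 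Your fallback, that every intermediate interface has length at most $2k$, only yields $\mathsf{width}(t)\le 4k$. A factor $k\to 2k$ already has width $3k$, and \Cref{lem:algebrise-permutation} on $2k$ wires gives $4k$. The paper's own short argument simply asserts that all permutation factors have at most $k$ in-/outputs, which fails for $\sigma_2$ in this order, so you cannot lean on it.

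You gesture at the fix (``duplicate input variables are first equated and deletions are performed early''). Carried out, it means reversing the lemma's order and building the factorisation directly from $\func{f}$:
\begin{enumerate}
\item Let $\sigma_1$ sort the input positions by variable.
\item Collapse each block of equal input variables to one wire by a chain of $\mathrm{equate}$s, followed by $\mathrm{del}$ when that variable does not occur in $\mathsf{out}_{\func{f}}$. The surviving wires are exactly the variables occurring in both $\mathsf{in}_{\func{f}}$ and $\mathsf{out}_{\func{f}}$, so there are $m\le\min(|\mathsf{in}_{\func{f}}|,|\mathsf{out}_{\func{f}}|)\le k$ of them.
\item Add $\mathrm{new}$ for each output-only variable, and copy every variable to its multiplicity in $\mathsf{out}_{\func{f}}$.
\item Let $\sigma_3$ permute into the order of $\mathsf{out}_{\func{f}}$.
\end{enumerate}
Every interface in this chain has length $|\mathsf{in}_{\func{f}}|$, $m$ or $|\mathsf{out}_{\func{f}}|$, all at most $k$. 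So every factor, and every composite of consecutive factors, has $|\mathrm{dom}|+|\mathrm{codom}|\le 2k$. Partial tensor products stay within a factor's interfaces, as you observe. The layers $\mathrm{equate}\otimes\mathrm{id}$ and $\mathrm{copy}\otimes\mathrm{id}$ in a chain on $j\le k$ wires have width at most $2j-1$; this is not $j+1$ as you claim for copy chains, though the slip is harmless. Your DAG-size and time estimates then carry over unchanged.
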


\subsection{Inefficient algebraisations} \label{ssec:badalg}

We give a method for obtaining algebraisations of general dot diagrams that are cleanly described, 
at the cost of resulting in a large width. 
Later, we minimise this damage by only applying this to small dot diagrams as part of our divide-and-conquer approach.


The width-inefficient algorithm is based on the observation that any dot diagram can be decomposed into a pure wiring, followed by a tensor product of function symbols and identities, followed by a pure wiring;
see \Cref{fig:frobenius-decomposition}.
This is the subject of the next lemma, which is essentially a consequence of \cite[Proposition 5.4]{wilson2023data}, where an essentially equivalent decomposition is called the \emph{Frobenius decomposition}.
Before we can state it, we need to define the \emph{width} of a dot diagram, 
which we will use to bound the time complexity of computing such a decomposition.  

\begin{definition}
    Let ${\func{f}}$ be a dot diagram.
    Let $n_i, m_i$ be the number of inputs (outputs, respectively) of the $i$-th assignment of ${\func{f}}$,
    for $i\in \{1, \dots, |A_{\func{f}}|\}$.
    Let $N := \max_{i\in \{1, \dots, |A_{\func{f}}|\}} n_i$ and $M := \max_{i\in \{1, \dots, |A_{\func{f}}|\}} m_i$.
    The \emph{width} of ${\func{f}}$ is 
    $$
    \mathsf{width}({\func{f}}) := \max \{|\mathsf{in}_{\func{f}}|, |\mathsf{out}_{\func{f}}|, M, N\}, 
    $$
    the maximum number of in- or outputs over ${\func{f}}$ and its assignments.
\end{definition}

The following are proven in Appendices \ref{sec:proof-of-lem-wiring-tensor-factorisation} \& \ref{proof-of-cor-algebrise-small-decls}.

\begin{lemma}[Frobenius decomposition]\label{lem:wiring-tensor-factorisation}
    Let $\Sigma$ be a monoidal signature 
    and let ${\func{f}}$ be a $\Sigma$-diagram. 
    For each $a\in A_{\func{f}}$, write $a = (\lett \mathbf{y}_a = g_a(\mathbf{x}_a))$, 
    and let $\mathbf{S}_a,\mathbf{T}_a \in \Sigma_0^*$ be the list of (co-)domains of $g_a: \mathbf{S}_a \to \mathbf{T}_a$.
    Then there exist pure wirings $\mathsf{w}_1, \mathsf{w}_2$ such that 
    $$ {\func{f}} \equiv \mathsf{w}_1 \circ \bigotimes_{a \in A_{\func{f}}} (g_a \otimes \mathsf{id}_{\mathbf{S}_a} \otimes \mathsf{id}_{\mathbf{T}_a}) \circ  \mathsf{w}_2, $$
    where $\equiv$ denotes equivalence of dot diagrams.
    Moreover, the wirings $\mathsf{w}_1, \mathsf{w}_2$ can be computed in $O(|A_{\func{f}}|\cdot \mathsf{width}({\func{f}}))$ time.
\end{lemma}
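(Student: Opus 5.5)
We construct the two wirings explicitly. The middle factor has, for each assignment $a$, fresh copies of its inputs and outputs. Let $\mathsf{g}$ be the tensor product $\bigotimes_{a} (\langle g_a\rangle \otimes \mathsf{id}_{\mathbf{S}_a} \otimes \mathsf{id}_{\mathbf{T}_a})$, read as a dot diagram. For each $a$ it has fresh variables $\mathbf{u}_a$ (the inputs of $g_a$), $\mathbf{v}_a$ (its outputs), and identity wires $\mathbf{p}_a$ (sorts $\mathbf{S}_a$) and $\mathbf{q}_a$ (sorts $\mathbf{T}_a$). These are pairwise distinct across assignments.

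The idea is to route copies of the variables $\mathbf{x}_a$ and $\mathbf{y}_a$ of $\func{f}$ into these slots. The wiring $\mathsf{w}_2$ has variable set $X_{\func{f}}$, no assignments, input list $\mathsf{in}_{\func{f}}$, and output list given by concatenating $\mathbf{x}_a$ and $\mathbf{y}_a$ over $a\in A_{\func{f}}$ (in the order of the tensor product). The wiring $\mathsf{w}_1$ has variable set $X_{\func{f}}$, input list given by concatenating $\mathbf{y}_a$ and $\mathbf{x}_a$ over $a$, and output list $\mathsf{out}_{\func{f}}$. The orders inside each block must match the domain/codomain ordering of $g_a \otimes \mathsf{id}_{\mathbf{S}_a}\otimes\mathsf{id}_{\mathbf{T}_a}$, namely $(\mathbf{S}_a,\mathbf{S}_a,\mathbf{T}_a) \to (\mathbf{T}_a,\mathbf{S}_a,\mathbf{T}_a)$. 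So the output block of $\mathsf{w}_2$ for $a$ is $\mathbf{x}_a\mathbf{x}_a\mathbf{y}_a$, and the input block of $\mathsf{w}_1$ is $\mathbf{y}_a\mathbf{x}_a\mathbf{y}_a$. One copy of $\mathbf{x}_a$ feeds $g_a$; the identity wires carry $\mathbf{x}_a$ and $\mathbf{y}_a$ through so that all occurrences end up glued together. Variables of $X_{\func{f}}$ that occur in no assignment must be inputs or outputs by condition (4), so both wirings are valid dot diagrams. Should one also need to keep the full variable set even for isolated variables, we declare $X_{\func{f}}$ as the variable set of both wirings.

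Next we compute the composite using the explicit gluing description of composition (appendix definition). Composing $\mathsf{g}\circ\mathsf{w}_2$ identifies $\mathbf{u}_a$ and $\mathbf{p}_a$ with $\mathbf{x}_a$, and $\mathbf{q}_a$ with $\mathbf{y}_a$. Composing with $\mathsf{w}_1$ then identifies $\mathbf{v}_a$, $\mathbf{p}_a$, $\mathbf{q}_a$ with $\mathbf{y}_a$, $\mathbf{x}_a$, $\mathbf{y}_a$ respectively. The gluing is a pushout (quotient by the equivalence relation generated by these identifications), so every fresh variable is identified with exactly one variable of $X_{\func{f}}$. The identity diagrams contribute no assignments, and two distinct variables of $X_{\func{f}}$ are never merged, because each identification links a fresh variable to an $X_{\func{f}}$-variable, and the $X_{\func{f}}$-variables on both sides are the same named elements. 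Hence the quotient variable set is canonically isomorphic to $X_{\func{f}}$. Under this isomorphism the assignments become exactly $(\mathbf{y}_a, g_a, \mathbf{x}_a)$, and the inputs and outputs are $\mathsf{in}_{\func{f}}$ and $\mathsf{out}_{\func{f}}$. This gives $\equiv$, since equivalence only compares the set of assignments.

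The main obstacle is this bookkeeping: making sure the composite's variable set is precisely $X_{\func{f}}$ rather than a proper quotient or an extension. In particular, repeated variables inside one $\mathbf{x}_a$ and variables shared between several assignments must be identified correctly, and they are, because all identifications factor through named elements of $X_{\func{f}}$. This is the reason for the extra identity wires $\mathsf{id}_{\mathbf{S}_a}\otimes\mathsf{id}_{\mathbf{T}_a}$: they make $\mathsf{w}_1$ connect to $\mathbf{x}_a$ and $\mathsf{w}_2$ to $\mathbf{y}_a$, which lets the construction be uniform. I would also check the ordering conventions against the appendix definitions.

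The running time is immediate. Writing down the lists for $\mathsf{w}_1,\mathsf{w}_2$ takes $O(|\mathbf{x}_a|+|\mathbf{y}_a|)=O(\mathsf{width}(\func{f}))$ per assignment, plus $O(|\mathsf{in}_{\func{f}}|+|\mathsf{out}_{\func{f}}|)$. The latter is also $O(\mathsf{width}(\func{f}))$ and is absorbed in the total $O(|A_{\func{f}}|\cdot\mathsf{width}(\func{f}))$, treating $|A_{\func{f}}|\ge 1$, or reading the bound with $n=1+|A_{\func{f}}|$ as in Theorem \ref{thm:algebraisation}.
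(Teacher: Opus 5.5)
Your wirings are the ones the paper reads off, namely the list $\prod_{a}\mathbf{x}_a\mathbf{x}_a\mathbf{y}_a$; the paper swaps the names $\mathsf{w}_1,\mathsf{w}_2$ relative to the statement. The paper defers existence to the Frobenius decomposition of Wilson--Zanasi, whereas you verify the composite directly via the gluing definition. That direct verification has a genuine gap.

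It fails at the claim that the $X_{\func{f}}$-variables ``on both sides are the same named elements''. The composite is a quotient of the disjoint union $X_{\mathsf{w}_2}\sqcup X_{\mathsf{g}}\sqcup X_{\mathsf{w}_1}$, so each $x\in X_{\func{f}}$ has two distinct copies $x^{(2)}\in X_{\mathsf{w}_2}$ and $x^{(1)}\in X_{\mathsf{w}_1}$. These are glued only through a wire of some $\mathbf{p}_a$ or $\mathbf{q}_a$, i.e.\ only if $x$ occurs in some assignment. Two things go wrong as a result.
\begin{enumerate}
\item A variable lying in both $\mathsf{in}_{\func{f}}$ and $\mathsf{out}_{\func{f}}$ but in no assignment is split in two. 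For $\func{f}=\mathsf{id}_{(S)}$ the middle factor is empty and your construction yields $\mathsf{new}\circ\mathsf{del}$, not the identity.
\item A variable occurring only in $\mathsf{in}_{\func{f}}$ is isolated in $\mathsf{w}_1$ once you declare $X_{\func{f}}$ as its variable set; dually for $\mathsf{out}_{\func{f}}$ and $\mathsf{w}_2$. So that wiring violates condition (4), contrary to your remark. Moreover the composite then contains an extra unglued variable, so no isomorphism of variable sets with $X_{\func{f}}$ exists.
\end{enumerate}

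The second problem is easy to fix: let $X_{\mathsf{w}_2}$ consist only of the variables occurring in $\mathsf{in}_{\func{f}}$ or in some assignment, and dually for $X_{\mathsf{w}_1}$.

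The first problem cannot be fixed within the statement as written. An equivalence class of the composite can contain both a $\mathsf{w}_2$-variable and a $\mathsf{w}_1$-variable only if it contains a variable of $\mathsf{g}$ that is both an input and an output of $\mathsf{g}$. Two examples show this is fatal:
\begin{enumerate}
\item $\mathsf{id}_{(S)}$ would have to factor through $()$.
\item Take $\func{f}(z_1,z_2) := \lett y = g;\ z_1,z_2,y$ with a symbol $g\colon ()\to(S)$ in a one-sorted signature. This needs two pass-through wires, but $\mathsf{g}$ offers only the single wire of $\mathsf{id}_{\mathbf{T}_a}$. So the lemma fails even with $|A_{\func{f}}|\ge 1$.
\end{enumerate}
Hence the lemma itself needs amending, and so does the paper's read-off argument, since it uses exactly your wirings. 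Add a middle factor $\mathsf{id}_{\mathbf{Z}}$, where $\mathbf{Z}$ lists the variables of $\mathsf{set}(\mathsf{in}_{\func{f}})\cap\mathsf{set}(\mathsf{out}_{\func{f}})$ that occur in no assignment. Append $\mathbf{Z}$ to the output list of $\mathsf{w}_2$ and to the input list of $\mathsf{w}_1$. With this amendment and the restricted variable sets, your pushout argument goes through as written. The running-time bound is unchanged, since $|\mathbf{Z}|\le\mathsf{width}(\func{f})$.
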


\begin{corollary}\label{cor:algebrise-small-decls}
    Let ${\func{f}}$ be a $\Sigma$-diagram. 
    Let $w := \mathsf{width}({\func{f}})$ be the width of ${\func{f}}$ and let $n:= |A_{\func{f}}|\geq 1$.
    Then there exists an algebraisation $t\in \mathrm{Term}(\Sigma)$ of ${\func{f}}$
    such that $\mathsf{width}(t)\leq 6nw$ and $|t| = O(n^2w^2)$.
    Moreover, such $t$ can be computed in $O(n^2w^2)$ time.
\end{corollary}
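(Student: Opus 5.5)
We prove \Cref{cor:algebrise-small-decls} by combining \Cref{lem:wiring-tensor-factorisation} with \Cref{cor:algbrise-pure-wiring}. First apply \Cref{lem:wiring-tensor-factorisation} to obtain, in $O(nw)$ time, pure wirings $\mathsf{w}_1,\mathsf{w}_2$ with
\[
{\func{f}} \equiv \mathsf{w}_1 \circ \bigotimes_{a \in A_{\func{f}}} (g_a \otimes \mathsf{id}_{\mathbf{S}_a} \otimes \mathsf{id}_{\mathbf{T}_a}) \circ \mathsf{w}_2 .
\]
Since $\underline{\mathsf{dot}}$ is a strict hypergraph functor, it suffices to algebrise the three factors separately and combine the results with $\circ$; the associated dot diagram of the combined term is then the composite of the three dot diagrams, which is $\underline{\func{f}}$.

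The middle factor is algebrised directly by the term $\bigotimes_{a}(g_a \otimes \mathrm{id}_{\mathbf{S}_a} \otimes \mathrm{id}_{\mathbf{T}_a})$, bracketed left-to-right. Each $a$ has at most $w$ inputs and at most $w$ outputs, so each block has domain and codomain of length at most $2w$. The partial tensor products therefore have domain and codomain of length at most $2nw$, which bounds the width of this part by $4nw$. Its DAG-size is $O(n)$ plus the size of the identity terms. To keep this within $O(n^2w^2)$, take each $\mathrm{id}_{\mathbf{S}}$ as a single constant symbol. If identities on lists are not allowed as atoms, build each one as a tensor of $O(w)$ single-sort identities, which still gives $O(nw)$ in total.

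For the outer wirings, read off their interfaces. The codomain of $\mathsf{w}_2$, which is also the domain of $\mathsf{w}_1$, is the middle interface, of length at most $\sum_a(|\mathbf{S}_a|+|\mathbf{S}_a|+|\mathbf{T}_a|)\le 3nw$; the codomain side is bounded symmetrically by $3nw$. The other sides are $\mathsf{in}_{\func{f}}$ and $\mathsf{out}_{\func{f}}$, of length at most $w\le 3nw$. So $k:=\max\{|\mathsf{in}|,|\mathsf{out}|\}\le 3nw$ for each wiring. \Cref{cor:algbrise-pure-wiring} then gives algebraisations $t_1,t_2$ with width at most $2k\le 6nw$ and DAG-size $O(k^2)=O(n^2w^2)$, computable in $O(n^2w^2)$ time.

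Set $t := t_1\circ m\circ t_2$, where $m$ is the middle term. The top-level composite has width $|\mathsf{in}_{\func{f}}|+|\mathsf{out}_{\func{f}}|\le 2w$, and the intermediate composite $m\circ t_2$ has width at most $w+3nw$. So every subterm has width at most $6nw$. The DAG-size is the sum of the three parts plus a constant, i.e.\ $O(n^2w^2)$, and the total time is $O(nw)+O(n^2w^2)$.

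The main obstacle is the careful accounting of the interface lengths, to make sure the factor is exactly $6$ rather than some larger constant. One has to check how \Cref{lem:wiring-tensor-factorisation} sizes the middle interface: it has $2|\mathbf{S}_a|+|\mathbf{T}_a|$ wires on the input side and $|\mathbf{T}_a|+|\mathbf{S}_a|+|\mathbf{T}_a|$ on the output side, each at most $3w$ per assignment. This is what makes the bound $2\cdot 3nw$ from \Cref{cor:algbrise-pure-wiring} dominate every other subterm.
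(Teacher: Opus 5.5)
Your proof is correct and follows the paper's route: \Cref{lem:wiring-tensor-factorisation} for the Frobenius decomposition, \Cref{cor:algbrise-pure-wiring} on the two outer wirings (whose interfaces have length at most $3nw$), and the middle tensor product used as a term directly; your explicit check of the composite subterms ($2w$ and $w+3nw$) is a bit more careful than the paper's. One slip is that each block $g_a\otimes\mathrm{id}_{\mathbf{S}_a}\otimes\mathrm{id}_{\mathbf{T}_a}$ has domain $\mathbf{S}_a\mathbf{S}_a\mathbf{T}_a$ and codomain $\mathbf{T}_a\mathbf{S}_a\mathbf{T}_a$, each of length up to $3w$ rather than $2w$, so the middle term's width is bounded by $6nw$ and not $4nw$; this still fits the claimed bound, as your last paragraph effectively recognises.
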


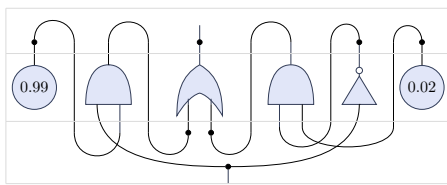
\begin{figure}
    \centering
    \scalebox{0.6}{
    \begin{tikzpicture}
	\begin{pgfonlayer}{nodelayer}
		\node [style=none] (1) at (-4.5, 3.75) {};
		\node [style=none] (2) at (-3.5, 5.5) {};
		\node [style=none] (3) at (-2.5, 3.75) {};
		\node [style=none] (4) at (-3.5, 5.5) {};
		\node [style=none] (5) at (-3, 3.75) {};
		\node [style=none] (6) at (-4, 3.75) {};
		\node [style=none] (8) at (-3, 2.5) {};
		\node [style=none] (9) at (-0.5, 3.25) {};
		\node [style=none] (10) at (0.5, 5.5) {};
		\node [style=none] (11) at (1.5, 3.25) {};
		\node [style=none] (12) at (0.5, 4.25) {};
		\node [style=none] (13) at (0.5, 7.25) {};
		\node [style=none] (14) at (1, 4) {};
		\node [style=none] (15) at (0, 4) {};
		\node [style=none] (16) at (0, 3.75) {};
		\node [style=none] (17) at (1, 3.75) {};
		\node [style=none] (18) at (3.5, 3.75) {};
		\node [style=none] (19) at (4.5, 5.5) {};
		\node [style=none] (20) at (5.5, 3.75) {};
		\node [style=none] (21) at (4.5, 6.5) {};
		\node [style=none] (22) at (5, 3.75) {};
		\node [style=none] (23) at (4, 3.75) {};
		\node [style=none] (24) at (4, 2.75) {};
		\node [style=none] (25) at (9, 6.5) {};
		\node [style=none] (26) at (6.75, 3.75) {};
		\node [style=none] (27) at (7.5, 5) {};
		\node [style=none] (28) at (8.25, 3.75) {};
		\node [style=dot] (29) at (7.5, 6.5) {};
		\node [style=none] (30) at (7.5, 3.75) {};
		\node [style=light blue dot] (31) at (-6.75, 4.5) {$0.99$};
		\node [style=light blue dot] (32) at (10.25, 4.5) {$0.02$};
		\node [style=none] (33) at (-4, 3.75) {};
		\node [style=none] (34) at (7.5, 3.75) {};
		\node [style=none] (35) at (1.75, 0.25) {};
		\node [style=dot] (36) at (1.75, 1) {};
		\node [style=not dot] (43) at (7.5, 5.25) {};
		\node [style=dot] (44) at (0, 2.5) {};
		\node [style=dot] (45) at (1, 2.5) {};
		\node [style=dot] (46) at (10.25, 6.5) {};
		\node [style=dot] (47) at (-6.75, 6.5) {};
		\node [style=dot] (48) at (0.5, 6.5) {};
		\node [style=none] (49) at (-5, 6.5) {};
		\node [style=none] (50) at (-5, 2.5) {};
		\node [style=none] (51) at (-3, 2.5) {};
		\node [style=none] (53) at (-3.5, 6.5) {};
		\node [style=none] (54) at (-1.75, 6.5) {};
		\node [style=none] (55) at (-1.75, 2.5) {};
		\node [style=none] (56) at (2.75, 2.5) {};
		\node [style=none] (57) at (2.75, 6.5) {};
		\node [style=none] (58) at (5, 2.75) {};
		\node [style=none] (59) at (4, 2.75) {};
		\node [style=none] (60) at (5, 2.75) {};
		\node [style=none] (61) at (6.25, 3) {};
		\node [style=none] (62) at (6.25, 6.5) {};
		\node [style=none] (63) at (9, 2.75) {};
		\node [style=none] (64) at (10.25, 6.5) {};
		\node [style=none] (65) at (-8, 8) {};
		\node [style=none] (66) at (11.5, 8) {};
		\node [style=none] (67) at (11.5, 6) {};
		\node [style=none] (68) at (-8, 6) {};
		\node [style=none] (69) at (-8, 3) {};
		\node [style=none] (70) at (11.5, 3) {};
		\node [style=none] (71) at (-8, 0.25) {};
		\node [style=none] (72) at (11.5, 0.25) {};
	\end{pgfonlayer}
	\begin{pgfonlayer}{edgelayer}
		\draw [style=dark blue solid] (3.center)
			 to (1.center)
			 to [in=-180, out=90] (2.center)
			 to [in=90, out=0] cycle;
		\draw [style=dark blue solid] (4.center) to (2.center);
		\draw [style=dark blue solid] (5.center) to (8.center);
		\draw [style=dark blue solid] (11.center)
			 to [in=-30, out=90] (10.center)
			 to [in=90, out=210] (9.center)
			 to [in=195, out=60, looseness=0.75] (12.center)
			 to [in=120, out=-15, looseness=0.75] cycle;
		\draw [style=dark blue solid] (13.center) to (10.center);
		\draw [style=dark blue solid] (15.center) to (16.center);
		\draw [style=dark blue solid] (14.center) to (17.center);
		\draw [style=dark blue solid] (18.center)
			 to [in=-180, out=90] (19.center)
			 to [in=90, out=0] (20.center)
			 to cycle;
		\draw [style=dark blue solid] (21.center) to (19.center);
		\draw [style=dark blue solid] (23.center) to (24.center);
		\draw [style=dark blue solid] (28.center)
			 to (26.center)
			 to (27.center)
			 to cycle;
		\draw [style=dark blue solid] (29) to (27.center);
		\draw [style=black solid, in=-90, out=180] (36) to (33.center);
		\draw [style=black solid, in=-90, out=0] (36) to (34.center);
		\draw [style=dark blue solid] (36) to (35.center);
		\draw [style=black solid] (46) to (32);
		\draw [style=black solid, in=90, out=-90] (47) to (31);
		\draw [style=black solid] (30.center) to (34.center);
		\draw [style=black solid] (16.center) to (44);
		\draw [style=black solid] (45) to (17.center);
		\draw [in=90, out=90, looseness=1.75] (47) to (49.center);
		\draw (49.center) to (50.center);
		\draw [in=-90, out=-90, looseness=1.75] (50.center) to (51.center);
		\draw (4.center) to (53.center);
		\draw [in=90, out=90, looseness=1.75] (53.center) to (54.center);
		\draw (54.center) to (55.center);
		\draw [in=-90, out=-90, looseness=1.75] (55.center) to (44);
		\draw [in=-90, out=-90, looseness=1.75] (45) to (56.center);
		\draw (56.center) to (57.center);
		\draw [in=90, out=90, looseness=1.75] (57.center) to (21.center);
		\draw [in=90, out=-90, looseness=0.75] (24.center) to (59.center);
		\draw [in=-90, out=-90, looseness=0.75] (58.center) to (63.center);
		\draw [in=-90, out=-90, looseness=1.50] (59.center) to (61.center);
		\draw (61.center) to (62.center);
		\draw [in=90, out=90, looseness=2.00] (62.center) to (29);
		\draw (63.center) to (25.center);
		\draw [in=90, out=90, looseness=2.00] (25.center) to (64.center);
		\draw (22.center) to (60.center);
		\draw [style=lightgrey line] (66.center) to (67.center);
		\draw [style=lightgrey line] (67.center) to (70.center);
		\draw [style=lightgrey line] (70.center) to (72.center);
		\draw [style=lightgrey line] (72.center) to (71.center);
		\draw [style=lightgrey line] (71.center) to (69.center);
		\draw [style=lightgrey line] (69.center) to (70.center);
		\draw [style=lightgrey line] (65.center) to (68.center);
		\draw [style=lightgrey line] (68.center) to (67.center);
		\draw [style=lightgrey line] (65.center) to (66.center);
		\draw [style=lightgrey line] (68.center) to (69.center);
	\end{pgfonlayer}
\end{tikzpicture}
    }
    \caption{
        Frobenius decomposition of the dot diagram $\func{test}$
        from \Cref{fig:code-and-diagram}: 
        pure wiring,
        followed by a tensor product of function symbols and identities,
        followed by pure wiring.
    }
    \label{fig:frobenius-decomposition}
\end{figure}

 If every function symbol has at least one in- or output, the width resulting from the Frobenius decomposition is at least $n$ for a dot diagram with $n$ assignments, 
 which clearly far from optimal.
 We will therefore only use it on small dot diagrams.

\subsection{Unfolding hierarchical dot diagrams} \label{ssec:unfold}

The next step of our divide-and-conquer approach is to refactor arbitrary dot diagrams into hierarchical dot diagrams, in which each dot diagram is small. To explain this, we first need the opposite notion: \emph{unfolding} a hierarchical dot diagram into a large regular dot diagram via recursive substitution.




\begin{definition}
    Let $\Sigma$ be a monoidal signature and ${\func{f}}$ be a hierarchical $\Sigma$-diagram.
    Let $(g_i\colon \mathbf{S}_i \to \mathbf{T}_i)_{i \in I}\in \Sigma^I$ be a finite family of function symbols, and let $({\func{g}}_i\colon \mathbf{S}_i \to \mathbf{T}_i)_{i\in I}$ be a finite family of $\Sigma$-diagrams over the same index set.
    The \emph{substitution} 
    of $({\func{g}}_i)_{i\in I}$ for $(g_i)_{i \in I}\in \Sigma^I$ in ${\func{f}}$, written
    $$ {\func{f}}[g_i \mapsto {\func{g}}_i]_{i\in I},$$
    is defined as follows.
    The set of variables of ${\func{f}}[g_i \mapsto {\func{g}}_i]_{i\in I}$ is the following pushout,
    $$ X_{{\func{f}}\,[g_i \mapsto {\func{g}}_i]_{i\in I}} := \left(X_{{\func{f}}} \sqcup \bigsqcup_{i\in I} \bigsqcup_{\lett \mathbf{y} = g_i(\mathbf{x})}X_{{\func{g}}_i}\right)/\sim, $$
    where $\sqcup$ denotes disjoint union, and $\sim$ is the equivalence relation that identifies the input (resp.~output) variables of each (copy of) ${\func{g}}_i$ with the input (resp.~output) variables of the corresponding assignment in which $g_i$ occurs.
    The lists of in- and output variables of ${\func{f}}[g_i \mapsto {\func{g}}_i]_{i\in I}$ are the the pushforward of those of ${\func{f}}$ under the inclusion $X_{\func{f}}\to X_{{\func{f}}\,[g_i \mapsto {\func{g}}_i]_{i\in I}}$. 
    Finally, the list of assignments of ${\func{f}}[g_i \mapsto {\func{g}}_i]_{i\in I}$ is given by
    replacing each assignment of the form 
    $$ \lett \mathbf{y} = g_i(\mathbf{x}) $$
    in $A_{\func{f}}$ by the list $A_{{\func{g}}_i}$, with each variable in $A_{\func{f}}$ or $A_{{\func{g}}_i}$ replaced by its image in $X_{{\func{f}}[g_i \mapsto {\func{g}}_i]_{i\in I}}$.
    %
\end{definition}

Unfolding a hierarchical $\Sigma$-diagram can now be defined by recursively substituting the definitions of all dot diagrams appearing therein along its call graph.

\begin{definition}
    Let ${\func{f}}$ be a hierarchical $\Sigma$-diagram. 
    Define 
    $$
    \mathsf{unfold}'({\func{f}}, v) := \begin{cases}
        {\func{f}}_v \qquad\qquad\qquad\qquad\qquad\qquad\quad\;\;\;\text{if } v \text{ is a leaf,} \\
        {\func{f}}_v[\mathsf{defn}_{{\func{f}}}^{-1}(u) \mapsto \mathsf{unfold}'({\func{f}}, u)]_{u \in \mathrm{ch}(v)} \;\text{ otherwise.}
    \end{cases}
    $$
    \emph{Unfolding} ${\func{f}}$ is defined as $\mathsf{unfold}({\func{f}}) := \mathsf{unfold}'({\func{f}}, R_{\func{f}})$.
\end{definition}

To extend the algebraisation algorithm of \Cref{cor:algebrise-small-decls}  to hierarchical dot diagrams, we need the fact that algebraisation commutes with substitution.

\begin{lemma}\label{lem:algebraisation-commutes-with-substitution}
    Let ${\func{f}}$ be a $\Sigma$-diagram, 
    let $(g_i:\mathbf{S}_i \to \mathbf{T}_i)_{i\in I}$ be a finite family of function symbols in $\Sigma$, 
    and let $({\func{g}}_i: \mathbf{S}_i \to \mathbf{T}_i)_{i\in I}$ be a family of $\Sigma$-diagram over the same index set.
    Moreover, suppose that $t$ is an algebraisation of ${\func{f}}$ and that $s_i$ is an algebraisation of ${\func{g}}_i$, for each $i\in I$.
    Then 
    $ t[g_i \mapsto s_i]_{i\in I}$
    is an algebraisation of ${\func{f}}[g_i \mapsto {\func{g}}_i]_{i\in I}$.
\end{lemma}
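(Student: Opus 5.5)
The plan is to reduce the statement to functoriality of semantics together with a single syntactic identity relating substitution of dot diagrams to semantics in $\underline{\mathrm{Dot}}(\Sigma)$. Write $\mathcal{J}\colon \Sigma \to \underline{\mathrm{Dot}}(\Sigma)$ for the interpretation sending $g_i \mapsto \underline{\func{g}_i}$ for $i\in I$ and every other function symbol $f$ to $\langle f\rangle$.

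The key claim is that for every dot diagram $\func{h}$,
\[ [\![\,\underline{\func{h}}\,]\!]_{\mathcal{J}} = \underline{\func{h}[g_i \mapsto \func{g}_i]_{i\in I}}, \]
where the left-hand side is the unique strict hypergraph functor of \Cref{thm:dot-diagrams-free-hypergraph-category}. To prove it, I would show that $\func{h} \mapsto \underline{\func{h}[g_i\mapsto \func{g}_i]}$ is well defined on equivalence classes. It is, since an isomorphism of variable sets extends through the pushout defining the substitution. I would then check that it is itself a strict hypergraph functor $\underline{\mathrm{Dot}}(\Sigma)\to\underline{\mathrm{Dot}}(\Sigma)$ that agrees with $\mathcal{J}$ on generators $\langle f\rangle$. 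Agreement on generators is immediate: substituting into $\langle g_i\rangle$ glues the inputs and outputs of a copy of $\func{g}_i$ onto fresh in/out variables, which is equivalent to $\func{g}_i$. Preservation of the special diagrams $\func{id},\func{swap},\func{copy},\func{del},\func{equate},\func{new}$ is trivial, since they contain no assignments. Preservation of $\otimes$ and $\circ$ amounts to commuting two colimit constructions: both composition (gluing outputs to inputs) and substitution are pushouts of variable sets, with assignment lists concatenated, so either order yields the same quotient of the same disjoint union, up to the set-level equivalence of assignment lists. Uniqueness in \Cref{thm:dot-diagrams-free-hypergraph-category} then gives the claim.

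With the claim in hand, I would next prove by structural induction on a term $u$ that
\[ \underline{\mathsf{dot}}(u[g_i\mapsto s_i]) = [\![ u ]\!]_{\mathcal{J}'}, \]
where $\mathcal{J}'$ sends $g_i\mapsto \underline{\mathsf{dot}}(s_i)$. This is routine: both sides are defined compositionally, and they agree on leaves. Since $\underline{\mathsf{dot}}(s_i)=\underline{\func{g}_i}$ by hypothesis, we have $\mathcal{J}'=\mathcal{J}$. Equivalently, $[\![u]\!]_{\mathcal{J}} = [\![\,\underline{\mathsf{dot}}(u)\,]\!]_{\mathcal{J}}$, by uniqueness of functors out of the free category applied to $\underline{\mathrm{Term}}(\Sigma)\to\underline{\mathrm{Dot}}(\Sigma)\to\underline{\mathrm{Dot}}(\Sigma)$. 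Taking $u=t$ and using $\underline{\mathsf{dot}}(t)=\underline{\func{f}}$ gives
\[ \underline{\mathsf{dot}}(t[g_i\mapsto s_i]) = [\![\,\underline{\func{f}}\,]\!]_{\mathcal{J}} = \underline{\func{f}[g_i\mapsto\func{g}_i]}, \]
which is what we want.

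The main obstacle is the compatibility of substitution with sequential composition. Variables can be identified in nontrivial ways: outputs may repeat, and an input of $\func{g}_i$ may coincide with one of its outputs. One therefore has to verify that the two iterated quotients generate the same equivalence relation on the combined disjoint union. I would handle this by describing both sides as a single colimit of a diagram of finite sets and invoking commutation of colimits. If that proves awkward, the fallback is an explicit check on the generating relations.
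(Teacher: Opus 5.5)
Your argument is correct, but it is organised differently from the paper's. The paper inducts directly on the given algebraisation $t$. A single symbol is the trivial base case. For $t=t_1\circ t_2$ or $t=t_1\otimes t_2$, it combines the induction hypothesis with the fact that substitution of dot diagrams commutes with composition and tensor product. You isolate that same fact, together with well-definedness on equivalence classes, as the statement that $\func{h}\mapsto\underline{\func{h}[g_i\mapsto\func{g}_i]}$ is a strict hypergraph endofunctor of $\underline{\mathrm{Dot}}(\Sigma)$. You then identify it with $[\![\,\cdot\,]\!]_{\mathcal{J}}$ via the uniqueness clause of \Cref{thm:dot-diagrams-free-hypergraph-category}, and finish with a routine induction on terms. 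So the real work, commuting the two pushouts, is the same in both, and you are right to flag it as the main obstacle. Your route yields a stronger and more conceptual statement: substitution \emph{is} semantics in $\underline{\mathrm{Dot}}(\Sigma)$ for every dot diagram, not only for those supplied with an algebraisation. The price is a dependence on the uniqueness half of freeness of $\underline{\mathrm{Dot}}(\Sigma)$, which rests on every dot diagram admitting some algebraisation. The paper's proof avoids this because $t$ is given. Since $\underline{\func{f}}=\underline{\mathsf{dot}}(t)$, you could drop the appeal to uniqueness and prove $\underline{\mathsf{dot}}(u)[g_i\mapsto\func{g}_i]=\underline{\mathsf{dot}}(u[g_i\mapsto s_i])$ by induction on $u$, which is exactly the paper's argument.

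One caveat applies to both proofs equally. Your well-definedness step (``an isomorphism of variable sets extends through the pushout'') requires $\equiv$ to match assignments \emph{with multiplicity}. As literally written, $\equiv$ only compares $\mathsf{set}(A_{\func{f}})$. Under that definition, $\func{equate}\circ(\langle f\rangle\otimes\langle f\rangle)\circ\func{copy}\equiv\langle f\rangle$, since both have the single distinct assignment $\lett y = f(x)$. Yet in $\mathsf{Mat}_{\mathbb{Q}}$ they must denote $\mathcal{I}(f)$ and its entrywise square, so \Cref{thm:dot-diagrams-free-hypergraph-category} already fails. Moreover, substitution creates one copy of $X_{\func{g}_i}$ per occurrence of $g_i$, so it does not respect this $\equiv$. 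With the intended multiset notion, as in the cospan-of-hypergraphs definition, everything you wrote goes through.
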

\begin{proof}
    This can be shown by induction on the structure of $t$. 
    In the base case, when $t$ is consists of a single symbol, the claim is trivial. 
    If, on the other hand, $t= t_1 \circ t_2$ or $t= t_1 \otimes t_2$, this follows quickly using the induction hypothesis and the definitions of the composition and tensor product of $\Sigma$-diagrams.
\end{proof}

The proof of the next lemma is given in \Cref{sec:proof-of-lem-algebrise-hierarchical-function-declarations}.

\begin{lemma}\label{lem:algebrise-hierarchical-function-declarations}
    Let $\Sigma$ be a monoidal signature 
    and let ${\func{f}}$ be a hierarchical $\Sigma$-diagram.
    Let $N := \max_{v\in V({\func{f}})} |A_{{\func{f}}_v}| > 1$ be the maximum number of assignments of any dot diagram appearing in ${\func{f}}$, 
    and let $K := \max_{v\in V({\func{f}})} \mathsf{width}({\func{f}}_v)$ be the maximum width of any dot diagram appearing in ${\func{f}}$.
    
    Then there exists an algebraisation $t\in \mathrm{Term}(\Sigma)$
    of $\mathsf{unfold}({\func{f}})$ 
    with $|t| = O(|V({\func{f}})|N^2K^2)$ and  $\mathsf{width}(t)\leq 6NK$. Moreover, it can be computed in $O(|V({\func{f}})|N^3K^2)$ time.
\end{lemma}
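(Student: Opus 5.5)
The plan is to algebrise each component diagram ${\func{f}}_v$ separately with \Cref{cor:algebrise-small-decls}, and then glue the resulting terms together along the call graph using \Cref{lem:algebraisation-commutes-with-substitution}. The DAG representation with maximal sharing will let each node of the call graph contribute its term only once, however often it is called.

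Concretely, for each $v\in V({\func{f}})$ we view ${\func{f}}_v$ as a dot diagram over $\Sigma\sqcup_{\Sigma_0}\Sigma_v$. If $|A_{{\func{f}}_v}|\geq 1$, \Cref{cor:algebrise-small-decls} gives a term $t_v$ with $\mathsf{width}(t_v)\leq 6|A_{{\func{f}}_v}|\,\mathsf{width}({\func{f}}_v)\leq 6NK$ and $|t_v| = O(N^2K^2)$, computed in $O(N^2K^2)$ time. If ${\func{f}}_v$ is a pure wiring, we use \Cref{cor:algbrise-pure-wiring} instead, which gives width at most $2K\leq 6NK$ and size $O(K^2)$.

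We then define terms $s_v$ bottom-up along the DAG, in reverse topological order:
\[
s_v := t_v[g \mapsto s_{\mathrm{defn}_{{\func{f}},v}(g)}]_{g\in\Sigma_v}.
\]
By induction on the height of $v$, using \Cref{lem:algebraisation-commutes-with-substitution} and the recursive definition of $\mathsf{unfold}'$, the term $s_v$ is an algebraisation of $\mathsf{unfold}'({\func{f}},v)$. Here the domains and codomains of $s_u$ match those of the symbol $g$ it replaces, because $g\mapsto {\func{f}}_{\mathrm{defn}(g)}$ is a morphism of signatures. The output is then $t:=s_{R_{\func{f}}}$.

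The bounds follow from three observations.
\begin{itemize}
\item \emph{Width.} Every subterm of $s_v$ is either a subterm of some $s_u$, or is obtained from a subterm of $t_v$ by substituting terms of identical type into it. Substitution does not change the type, so $\mathsf{width}(s_v)\leq\max(\mathsf{width}(t_v),\max_u \mathsf{width}(s_u))\leq 6NK$.
\item \emph{DAG-size.} With maximal sharing, the distinct subterms of $t$ are at most the images of the subterms of the $t_v$, summed over all $v$. This gives $|t|\leq\sum_v|t_v| = O(|V({\func{f}})|N^2K^2)$; the DAG of $s_u$ is reused by every parent.
\item \emph{Time.} Substitution in the DAG representation redirects each leaf labelled $g$ to the root node of $s_u$. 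It then re-establishes maximal sharing by hash-consing the new nodes of $t_v$. Per vertex this costs $O(|t_v|)$ operations, plus lookups among at most $N$ children (condition (4) of \Cref{definition:hierarchical-dot-diagram}). Accounting for that factor $N$ gives the stated $O(|V({\func{f}})|N^3K^2)$ total.
\end{itemize}

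I expect the main obstacle to be the DAG-size and time accounting under sharing rather than correctness. One must check that substituting shared subterms and re-hashing does not blow up the node count. It also needs care to argue that deduplication across different $v$ does not need more than the claimed per-vertex work, for instance by hash-consing nodes keyed by (operator, child-pointers).
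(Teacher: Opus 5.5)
Your proposal is correct and follows essentially the same route as the paper. You algebrise each ${\func{f}}_v$ with \Cref{cor:algebrise-small-decls}, or with \Cref{cor:algbrise-pure-wiring} for pure wirings, substitute bottom-up along the call graph via \Cref{lem:algebraisation-commutes-with-substitution}, and obtain the width and DAG-size bounds by induction, which you spell out more explicitly than the paper does.

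The only difference is an implementation detail in the time accounting. The paper keeps non-canonicalised DAGs during the intermediate substitutions and canonicalises once at the end, charging $O(|t_v|)$ per substitution and at most $N$ substitutions per vertex. You hash-cons throughout instead; both give the stated $O(|V({\func{f}})|N^3K^2)$ bound.
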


\subsection{Algebraisation via branch decomposition} \label{ssec:branchdec}

To turn a dot diagram into a hierarchical dot diagram in which only 
`small' auxiliary dot diagrams appear, 
we rely on the notion of a \emph{branch decomposition} of a hypergraph.
By a \emph{hypergraph} we mean an undirected 
hypergraph allowing for loops and multiple edges:

\begin{definition}
    A \emph{hypergraph} $G$ consists of a set $V(G)$ of \emph{vertices}, 
    a set $E(G)$ of \emph{hyperedges}, 
    and a relation $I(G) \subseteq V(G) \times E(G)$. 
    A vertex $v$ is \emph{incident} with $e$ if $(v, e) \in I(G)$.
\end{definition}

 Dot diagrams have an associated hypergraph, the \emph{dependency hypergraph}, whose vertices are the variables of the dot diagram.

\begin{definition}
    Let $\Sigma$ be a signature and let ${\func{f}}$ be a $\Sigma$-diagram. 
    The \emph{dependency hypergraph} $\mathrm{Dep}({\func{f}})$ of ${\func{f}}$ is defined by
    $$V(\mathrm{Dep}({\func{f}})) = X_{\func{f}}, $$
    $$E(\mathrm{Dep}({\func{f}})) = \mathsf{set}(A_{\func{f}}) \cup \{\mathsf{in}_{\func{f}}, \mathsf{out}_{\func{f}}\},$$
    and a vertex $x$ is incident with $e$ in $\mathrm{Dep}({\func{f}})$
    if and only if either 
    \begin{enumerate}
        \item $e = (\lett \mathbf{y} = f(\mathbf{x}))$ is an assignment and $x\in \mathbf{x}$ or $x\in \mathbf{y}$, or 
        \item $e = \mathsf{in}_{\func{f}}$ and $x\in \mathsf{in}_{\func{f}}$, or
        \item $e = \mathsf{out}_{\func{f}}$ and $x\in \mathsf{out}_{\func{f}}$.
    \end{enumerate}
\end{definition}

By an \emph{unrooted binary tree}, we mean a tree (i.e.~a connected acyclic undirected graph) in which each vertex has degree either 1 or 3. 
The notion of a branch decomposition of a hypergraph is due to Robertson and Seymour \cite{robertson1991graph}.

\begin{definition}
    Let $G$ be a hypergraph. 
    A \emph{branch decomposition} is a unrooted binary tree $\mathcal{B}$
    together with a bijection $\eta$ from the leaves of $\mathcal{B}$ to the hyperedges of $G$.

    Let $e$ be an edge of $\mathcal{B}$. 
    Then removing $e$ from $\mathcal{B}$ partitions $\mathcal{B}$ into two components $\mathcal{B}_1$, $\mathcal{B}_2$.
    A vertex $v$ of $G$ is \emph{at the interface of $e$} if there is a leaf $l_1$ of $\mathcal{B}_1$ and a leaf $l_2$ of $\mathcal{B}_2$ such that $v$ is incident to both $\eta(l_1)$ and $\eta(l_2)$.
    The \emph{order} of an edge $e$ of $\mathcal{B}$ is the number of vertices in $G$
    at the interface of $e$.

    The \emph{width} $\mathsf{width}(\mathcal{B})$ of a branch decomposition $\mathcal{B}$ of $G$ is the maximum order of any edge in $\mathcal{B}$.
    The \emph{branch-width} of $G$ is the minimum width of any branch decomposition of $G$.
\end{definition}

\begin{figure}
    \centering
    \scalebox{0.6}{
    \input{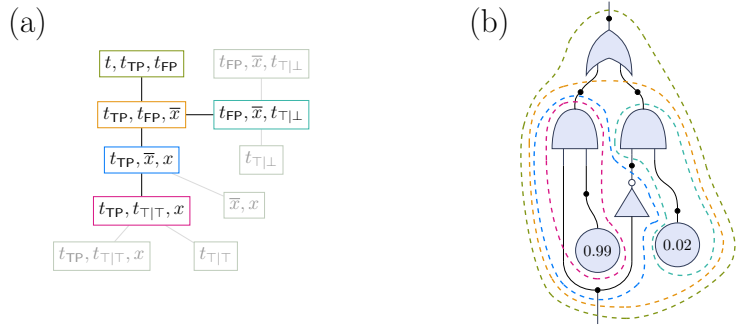}
    }
    \caption{(a) A tree decomposition of $\func{test}'$ extended by redundant (light grey) bags to turn it into a branch decomposition of $\mathrm{Dep}(\func{f})$;
    (b) branch decomposition visualised as a hierarchical clustering of the boxes of $\func{test}$.}
    \label{fig:branch-decomposition-example}
\end{figure}

A branch decomposition of the dependency graph of a dot diagram can be visualised as hierarchical clustering of its assignments (`boxes'); 
see \Cref{fig:branch-decomposition-example}.
The next lemma turns a branch decomposition of the dependency graph 
into a hierarchical dot diagram with the desired properties; 
it is proven in Appendix \ref{sec:proof-of-lem-decomp-to-hierarchical-decl}.

\begin{lemma}\label{lem:decomp-to-hierarchical-decl}
    Let ${\func{f}}$ be a $\Sigma$-diagram
    and let $\mathcal{B}$ be a branch decomposition of the dependency hypergraph of ${\func{f}}$. 
    Then there exists a hierarchical $\Sigma$-diagram ${\func{f}}^*$ such that:
    \begin{enumerate}
        \item $\mathsf{unfold}({\func{f}}^*)= {\func{f}}$.
        \item Every dot diagram in ${\func{f}}^*$ has at most $\mathsf{width}(\mathcal{B})$ in- or outputs.
        \item Every dot diagram in ${\func{f}}^*$ has  at most two assignments.
        \item $|V({\func{f}}^*)| \leq |A_{\func{f}}|$.
    \end{enumerate}
    Moreover, such an ${\func{f}}^*$ can be computed in $O(|V(\mathcal{B})|\cdot \mathsf{width}(\mathcal{B}))$ time.
\end{lemma}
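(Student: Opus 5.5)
We root the branch decomposition $\mathcal{B}$ at the leaf $\ell_{\mathrm{out}}$ with $\eta(\ell_{\mathrm{out}}) = \mathsf{out}_{\func{f}}$, or alternatively subdivide the edge incident to it. We then build ${\func{f}}^*$ by structural recursion over the rooted tree. Each internal vertex $v$ of $\mathcal{B}$ that is not adjacent to the $\mathsf{in}/\mathsf{out}$ leaves becomes a node of the call graph. Its dot diagram ${\func{f}}_v$ has as variables the union of the interface sets of the (at most three) edges at $v$. Its inputs are the interface variables of the edge from $v$ to its parent, listed in some fixed order, and it has no outputs. This uses the hypergraph-category convention that a morphism $\mathbf{S} \to ()$ can carry shared variables, since equating is available; alternatively we split the interface arbitrarily into inputs and outputs.

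Its assignments are one per child $c$. For an internal child, the assignment is a call $\func{g}_c(\mathbf{x}_c)$ to a fresh function symbol $\func{g}_c$ whose argument list is the interface of the edge $(v,c)$. For a leaf child, the assignment is the original assignment $\eta(c)$ itself. Each ${\func{f}}_v$ thus has at most two assignments, because vertices have degree $3$. Its in/outputs are one interface, of size at most $\mathsf{width}(\mathcal{B})$. Each call is itself an interface. The root diagram is handled specially: it takes as inputs and outputs $\mathsf{in}_{\func{f}}$ and $\mathsf{out}_{\func{f}}$, which are contained in the interface anyway. The call graph is the rooted tree of internal vertices, with $\mathrm{defn}(\func{g}_c) = c$. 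Since $\mathcal{B}$ has $|A_{\func{f}}|+2$ leaves, it has $|A_{\func{f}}|$ internal vertices, which gives item (4) and condition (4) of \Cref{definition:hierarchical-dot-diagram}.

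The main step is item (1), $\mathsf{unfold}({\func{f}}^*) = {\func{f}}$. We prove by induction up the tree that $\mathsf{unfold}'({\func{f}}^*, v)$ is the dot diagram whose assignments are exactly $\eta$ of the leaves below $v$. Its variables are those incident to these leaves, and its interface is the interface of the parent edge. The key fact is the one that makes the substitution pushout correct. A variable $x$ occurring in assignments below two different children $c_1, c_2$ of $v$ is incident to leaves on both sides of the edges $(v,c_1)$ and $(v,c_2)$. It therefore lies in both interfaces, and so it is a variable of ${\func{f}}_v$ and is identified, not duplicated, during substitution. Conversely, a variable lying only below one child $c$ and not incident to anything outside that subtree is not in the interface. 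It therefore remains an internal, bound variable of the child, which is exactly what we want. The variables are then exactly right; assignments match as multisets, and equivalence only uses $\mathsf{set}(A)$. Variables occurring only in $\mathsf{in}_{\func{f}}$ or $\mathsf{out}_{\func{f}}$ are handled at the root.

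We expect the main obstacle to be bookkeeping in this step rather than any deep idea. One difficulty is that the same variable may occur with repetition inside an assignment. Another is keeping the ordering of interface lists consistent between a call site and the callee's inputs. We fix a global order on $X_{\func{f}}$ and always list interfaces sorted by it. For the running time, we compute all interfaces bottom-up. Each vertex's set of incident variables is its children's sets merged. A variable belongs to the interface of edge $(v,\mathrm{parent})$ iff it occurs both inside and outside the subtree. We get this from counts: its number of occurrences below $v$ compared with its total number of occurrences. Storing per-vertex only the interface, of size at most $\mathsf{width}(\mathcal{B})$, and merging at most three such lists, yields $O(|V(\mathcal{B})|\cdot\mathsf{width}(\mathcal{B}))$ time.
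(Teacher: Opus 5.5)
There is a genuine gap in how you handle the leaf $\ell_{\mathrm{in}}$ with $\eta(\ell_{\mathrm{in}})=\mathsf{in}_{\func{f}}$. Rooting at $\ell_{\mathrm{out}}$ does put the shared output variables into the root diagram. But $\ell_{\mathrm{in}}$ can sit anywhere in $\mathcal{B}$, and your claim that $\mathsf{in}_{\func{f}}$ is contained in the root's interface ``anyway'' is false.

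Take ${\func{f}}(x) := \lett y = g(x);\; \lett z = h;\; \lett w = k(y,z);\; w$. Write $\ell_g,\ell_h,\ell_k$ for the leaves of the three assignments. Let $\mathcal{B}$ have internal vertices $u_1,u_2,u_3$ with neighbourhoods $\{\ell_{\mathrm{in}},\ell_g,u_2\}$, $\{u_1,\ell_h,u_3\}$ and $\{u_2,\ell_k,\ell_{\mathrm{out}}\}$.
\begin{itemize}
\item The interface of $(u_2,u_3)$ is $\{y,z\}$, so $x$ is not a variable of the root diagram at $u_3$.
\item By exactly the property you use for item (1), $x$ is bound at or below $u_2$.
\item Declaring $x$ as a root input therefore introduces a fresh variable. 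In $\mathsf{unfold}({\func{f}}^*)$ that input occurs in no assignment, while the argument of $g$ is an internal variable, so $\mathsf{unfold}({\func{f}}^*)\not\equiv{\func{f}}$.
\end{itemize}
Relatedly, the neighbour $u_1$ of $\ell_{\mathrm{in}}$ is left unspecified. $\ell_{\mathrm{in}}$ carries no assignment to inline, and you exclude vertices adjacent to it from the call graph without saying what replaces them.

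Your induction for (1) only goes through once the input variables are routed to the root. One way is to enlarge the interface of every edge on the path from $\ell_{\mathrm{in}}$ up to the root vertex by the input variables used on the $\ell_{\mathrm{in}}$-side of that edge. Alternatively, detach $\ell_{\mathrm{in}}$ and re-attach it next to $\ell_{\mathrm{out}}$.

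This repair is not free. The added variables all lie in the interface of the edge at $\ell_{\mathrm{in}}$, so interfaces are then bounded only by $2\,\mathsf{width}(\mathcal{B})$, and item (2) holds only up to that factor. At the root it cannot hold literally in general anyway: $\mathsf{in}_{\func{f}}$ and $\mathsf{out}_{\func{f}}$ may contain repetitions or variables used nowhere else.

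The paper's proof avoids the correctness problem by design. It splits $\mathcal{B}$ along an edge and recurses. Each part ${\func{f}}_i$ gets as inputs the variables of $A_i$ occurring in $\mathsf{in}_{\func{f}}$ or in the other part, and as outputs those occurring in $\mathsf{out}_{\func{f}}$ or in the other part. So in- and output variables are threaded through every level automatically. This carries the same kind of overhead in (2), which the paper's proof does not track. Since \Cref{thm:algebraisation-via-branch-decomp} only needs an $O(\mathsf{width}(\mathcal{B}))$ bound, only constants are affected.

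Once repaired, your direct construction is a clean alternative to the paper's edge-splitting induction: one call-graph node per internal vertex of the rooted tree, leaf assignments inlined, and the parent interface as the sole input list. It also yields item (4) exactly, which the paper's recursion, wrapping single assignments in sub-diagrams of their own, does not obviously do. Finally, $X_{{\func{f}}_v}$ must also contain the variables private to an inlined leaf assignment, not just the union of the interfaces.
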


Once we have converted a dot diagram into hierarchical dot diagram in which only `small' dot diagrams appear, 
as provided by \Cref{lem:decomp-to-hierarchical-decl},
we can apply \Cref{lem:algebrise-hierarchical-function-declarations} to obtain an algebraisation of small width.
Hence, taking both lemmas together directly implies:

\begin{theorem}\label{thm:algebraisation-via-branch-decomp}
    Let ${\func{f}}$ be a $\Sigma$-diagram 
    and let $\mathcal{B}$ be a branch decomposition of the dependency hypergraph of ${\func{f}}$. 
    Then there exists an algebraisation $t$ of ${\func{f}}$ with $\mathsf{width}(t) \leq 12\, \mathsf{width}(\mathcal{B})$
    and $|t| = O(|A_{\func{f}}|\cdot\mathsf{width}(\mathcal{B})^2)$.
    Moreover, given the branch decomposition $\mathcal{B}$, 
    such an algebraisation $t$ can be computed in time $O(|A_{\func{f}}|\cdot\mathsf{width}(\mathcal{B})^2)$.
\end{theorem}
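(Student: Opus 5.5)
The plan is to chain \Cref{lem:decomp-to-hierarchical-decl} with \Cref{lem:algebrise-hierarchical-function-declarations} and then use \Cref{lem:algebraisation-commutes-with-substitution} to transport the algebraisation back along unfolding. Write $w := \mathsf{width}(\mathcal{B})$.

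First, apply \Cref{lem:decomp-to-hierarchical-decl} to $\func{f}$ and $\mathcal{B}$ to obtain a hierarchical $\Sigma$-diagram $\func{f}^*$ with $\mathsf{unfold}(\func{f}^*) = \func{f}$, at most $|A_{\func{f}}|$ vertices, at most two assignments per component, and at most $w$ in- or outputs per component. This takes $O(|V(\mathcal{B})|\cdot w)$ time. Since $\mathcal{B}$ is an unrooted binary tree whose leaves are in bijection with the $|A_{\func{f}}|+2$ hyperedges of $\mathrm{Dep}(\func{f})$, we have $|V(\mathcal{B})| = O(|A_{\func{f}}|)$. So this step costs $O(|A_{\func{f}}|\, w)$.

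Next, bound the parameters of \Cref{lem:algebrise-hierarchical-function-declarations} for $\func{f}^*$. We have $N \le 2$. For $K$, each component dot diagram has at most $w$ inputs and outputs. Each assignment inside a component is either an original assignment of $\func{f}$ or a call to a child component. A call to a child has arity at most that child's number of in/outputs, hence at most $w$. An original assignment at a leaf has all its variables incident to at least two hyperedges only when they cross an interface. This is the delicate point: I would argue that every variable of an original assignment $a$ appears at the interface of the tree edge adjacent to the leaf $\eta^{-1}(a)$, or else is local to $a$. It is therefore either at the interface (counted by $w$) or not. If the lemma's construction does not already guarantee $\mathsf{width}(\func{f}^*_v) \le w$, I would instead take $K \le \max(w, \text{arity of symbols})$ and note that the arity of a symbol is at most the order of its leaf edge plus its private variables. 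So I expect this to need care, and would aim to show $K \le w$ (possibly after the harmless convention that $N$ there is at least $2$, padding if needed).

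Then \Cref{lem:algebrise-hierarchical-function-declarations} yields an algebraisation $t$ of $\mathsf{unfold}(\func{f}^*) = \func{f}$ with $\mathsf{width}(t) \le 6NK \le 12w$ and $|t| = O(|V(\func{f}^*)|N^2K^2) = O(|A_{\func{f}}|\,w^2)$. The time is $O(|V(\func{f}^*)|N^3K^2) = O(|A_{\func{f}}|\,w^2)$ since $N$ is constant. Adding the $O(|A_{\func{f}}|\,w)$ cost of the first step gives the claimed total running time. The main obstacle is the bound $K \le w$ in the previous paragraph; everything else is direct substitution of constants.
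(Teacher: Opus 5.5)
Your route is the paper's own. Its proof is exactly the combination of \Cref{lem:decomp-to-hierarchical-decl} with \Cref{lem:algebrise-hierarchical-function-declarations}, read with $N = 2$ and $K = \mathsf{width}(\mathcal{B})$ so that $6NK = 12\,\mathsf{width}(\mathcal{B})$, and your size and time bookkeeping agrees with it. The extra appeal to \Cref{lem:algebraisation-commutes-with-substitution} is redundant, since \Cref{lem:algebrise-hierarchical-function-declarations} already returns an algebraisation of $\mathsf{unfold}({\func{f}}^*)$.

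The step you single out, $K \le \mathsf{width}(\mathcal{B})$, is exactly where that one-line argument is silent, and you will not be able to prove it. Property~2 of \Cref{lem:decomp-to-hierarchical-decl} bounds only the in- and output lists of each ${\func{f}}^*_v$. But $\mathsf{width}({\func{f}}^*_v)$ also counts the argument and result lists of its assignments. Calls to children are fine. However, since $\mathsf{unfold}({\func{f}}^*) = \func{f}$, the original assignments of $\func{f}$ must sit in some components, and their arity is not controlled by $\mathcal{B}$. A variable incident to a single hyperedge is never at an interface. Repeated occurrences in a list count towards arity but only once towards the order of an edge; this is also why your estimate ``order of the leaf edge plus private variables'' is not quite right.

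Nor is this just a weakness of the proof: the stated inequality fails in general. Every function symbol of $\func{f}$ occurs as a sub-term of any algebraisation $t$, so $\mathsf{width}(t)$ is at least the largest number of in- plus outputs of a symbol in $\func{f}$. For instance, take one sort $\mathbb{S}$ with $c\colon () \to (\mathbb{S})$ and $g\colon (\mathbb{S},\dots,\mathbb{S}) \to ()$ of arity $m > 12$. The diagram with no in- or outputs and assignments $\lett x = c$ and $g(x,\dots,x)$ has only $x$ available for interfaces, so every branch decomposition has width at most $1$. Yet every algebraisation has width at least $m$. Even the $\mathsf{BoolStochSig}$-diagram with assignments $\lett x = \func{flip}(0.5)$, $\lett y = \func{flip}(0.5)$ and no in- or outputs has branch decompositions of width $0$.

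What your argument does establish, granting \Cref{lem:decomp-to-hierarchical-decl} as stated, is your fallback. Let $\alpha$ be the largest length of an argument or result list of an assignment of $\func{f}$. Then $K \le \max(\mathsf{width}(\mathcal{B}),\alpha)$, which gives the theorem with $\mathsf{width}(\mathcal{B})$ replaced by $\max(\mathsf{width}(\mathcal{B}),\alpha)$ throughout. Property~2 is itself affected in the same way at the root, whose boundary lists are $\mathsf{in}_{\func{f}}$ and $\mathsf{out}_{\func{f}}$, so $\alpha$ should bound their lengths as well. This is harmless when arities are bounded, as for the primitive DPP connectives. But it is a correction to the statement, not a detail you can close.
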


\subsection{Proof of \Cref{thm:algebraisation}} \label{ssec:proof}

We now show \Cref{thm:algebraisation} by reducing it to \Cref{thm:algebraisation-via-branch-decomp}.
First, by a theorem of Korhonen \cite[Theorem 1.1.]{korhonen2023single}, one can compute a tree decomposition $\mathcal{T}$ of width at most $2 k + 1$
of the primal graph ${\func{f}}'$ of the given dot diagram ${\func{f}}$, 
where $k$ is the treewidth of ${\func{f}}'$, in time $O(|A_{\func{f}}|\cdot 2^{O(k)})$. 
Moreover, we may assume without loss of generality that 
$|V(\mathcal{T})|$ is linear in $|A_{\func{f}}|$.
We may assume further that $|A_{\func{f}}|\geq2$  (otherwise, we use the algebraisation algorithm of \Cref{cor:algebrise-small-decls}).
Under these assumptions, 
one can convert tree decomposition $\mathcal{T}$ of ${\func{f}}'$ into a branch decomposition $\mathcal{B}$ of the dependency hypergraph $\mathrm{Dep}({\func{f}})$ 
of width $\mathsf{width}(\mathcal{B}) \leq \mathsf{width}(\mathcal{T}) + 1$, in linear time with respect to $|A_{\func{f}}|$. 
This construction is carried out in the proof of \cite[(5.1)]{robertson1991graph}.
The notion of treewidth used there is equivalent to the treewidth of the primal graph; 
see also \Cref{fig:branch-decomposition-example}.
Using \Cref{thm:algebraisation-via-branch-decomp}
we find an algebraisation $t$ of width 
$$ \mathsf{width}(t) \leq 12\cdot\mathsf{width}(\mathcal{B}) \leq 24k+24 \in O(k)$$
and $|t| = O(|A_{\func{f}}|\cdot k^2)$ in time $O(|A_{\func{f}}|\cdot 2^{O(k)})$. \qed

\subsection{Generalisation to the hierarchical case}

\Cref{thm:algebraisation} can now be easily generalised to hierarchical dot diagrams ${\func{f}}$:
when the primal graphs of each dot diagram appearing in ${\func{f}}$ all have small treewidth, 
then an algebraisation of the unfolded dot diagram $\mathsf{unfold}({\func{f}})$ of small width and DAG-size can be computed efficiently.
This is the subject of the following theorem,
whose proof is analogous to the one of \Cref{lem:algebrise-hierarchical-function-declarations};
see Appendix \ref{sec:proof-of-thm-algebraisation-general}.

\begin{theorem}\label{thm:algebraisation-general}
    Let ${\func{f}}$ be a hierarchical dot diagram over a monoidal signature $\Sigma$.
    Moreover, let 
    \begin{enumerate}
        \item $k := \max_{v \in V({\func{f}})} \mathsf{tw}({\func{f}}_v')$ be the maximum treewidth of the primal graph of any ${\func{f}}_v$,
        \item $L:= 1 + \max_{v\in V({\func{f}})}|\mathrm{ch}(v)|$
        be one plus the maximum number of children of any vertex in the call graph of ${\func{f}}$,
        \item $M := |V({\func{f}})|$, and let
        \item $N := 1 + \max_{v\in V({\func{f}})} |A_{{\func{f}}_v}|$ be one plus the maximum number of assignments in any dot diagrams appearing in ${\func{f}}$.
    \end{enumerate}
    Then an algebraisation of $\mathsf{unfold}({\func{f}})$ of DAG-size in 
    $O(M N k^2)$ and width in $O(k)$ can be computed in $O(L M N \cdot 2^{O(k)})$ time.
\end{theorem}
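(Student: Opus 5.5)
The plan is to algebrise each dot diagram ${\func{f}}_v$ separately and then glue the pieces together by substitution along the call graph, using \Cref{lem:algebraisation-commutes-with-substitution}. This mirrors the proof of \Cref{lem:algebrise-hierarchical-function-declarations}, with \Cref{cor:algebrise-small-decls} replaced by \Cref{thm:algebraisation}.

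For each $v\in V({\func{f}})$, regard ${\func{f}}_v$ as a dot diagram over $\Sigma\sqcup_{\Sigma_0}\Sigma_v$. The function symbols of $\Sigma_v$ are treated as opaque atoms, so the primal graph ${\func{f}}_v'$ is unchanged and has treewidth at most $k$. Applying \Cref{thm:algebraisation} then gives an algebraisation $t_v$ of ${\func{f}}_v$ with $\mathsf{width}(t_v)=O(k)$ and $|t_v| = O(N k^2)$, computed in $2^{O(k)}N$ time. Over all $M$ vertices, this costs $O(MN\cdot 2^{O(k)})$ time and produces a total DAG-size of $O(MNk^2)$.

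Next, process the call graph in reverse topological order, leaves first, and define
$$ s_v := t_v\big[\,g\mapsto s_{\mathrm{defn}_{{\func{f}},v}(g)}\,\big]_{g\in\Sigma_v}. $$
By induction on the height of $v$, using \Cref{lem:algebraisation-commutes-with-substitution} and the definition of $\mathsf{unfold}'$, each $s_v$ is an algebraisation of $\mathsf{unfold}'({\func{f}},v)$. In particular, $s_{R_{\func{f}}}$ algebrises $\mathsf{unfold}({\func{f}})$.

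\textbf{Width.} Every subterm of $s_v$ is either a subterm of some $s_u$, or is obtained from a subterm of $t_v$ by substitution. Substitution preserves the domain and codomain of a subterm, so inductively $\mathsf{width}(s_v)\le\max_u \mathsf{width}(t_u)=O(k)$.

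\textbf{DAG-size.} This step carries the main subtlety. Because terms are stored as DAGs with maximal sharing, we never copy $s_u$. Instead, each occurrence of a leaf symbol $g$ in $t_v$ becomes a pointer to the already-built root node of $s_u$. The nodes of $s_{R_{\func{f}}}$ are then exactly the union of the rewired nodes of all $t_v$, giving $O(\sum_v |t_v|) = O(MNk^2)$. The point to check is that sharing a call target $u$ across several parents does not duplicate it. This holds because each $s_u$ is constructed once and referenced thereafter. Any hash-consing needed to restore maximal sharing can only shrink the DAG further.

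\textbf{Time.} Rewiring $t_v$ requires, for each leaf labelled by some $g\in\Sigma_v$, looking up $\mathrm{defn}_{{\func{f}},v}(g)$ among the at most $L-1$ children of $v$. A naive scan therefore costs $O(L)$ per leaf. Since $t_v$ has $O(Nk^2)$ nodes, vertex $v$ contributes $O(LNk^2)$ time. Summing over all vertices and adding the first phase gives $O(LMN\cdot 2^{O(k)})$ overall.
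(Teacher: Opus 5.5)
Your proposal is correct and follows the paper's proof essentially step for step. Like the paper, you algebrise each ${\func{f}}_v$ with \Cref{thm:algebraisation}, substitute the children's terms into their parents' terms bottom-up along the call graph using \Cref{lem:algebraisation-commutes-with-substitution}, and pay a factor $L$ per vertex for handling its at most $L-1$ children. Your justifications are more explicit than the paper's terse version: the width is a maximum rather than a sum because substitution preserves the type of every subterm, and the DAG-size is at most $\sum_v |t_v|$ because each $s_u$ is built once and then shared.
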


\section{Applications of algebraisation}\label{sec:applications}

With nice algebraisations in place, we now return to the question of computing semantics; this includes DPP inference. 
Of course, the precise complexity depends highly on the strict hypergraph category corresponding to the semantics. 
We first discuss general properties of semantics in matrix categories, and then apply it to Problem \ref{problem:boolean-probabilistic-inference}. Finally, we discuss other existing problems that can be viewed as the computation of dot diagram semantics.

\subsection{Computing matrix semantics}
       
To cover many of the applications that follow in a unified manner, 
we first consider the problem of computing the semantics of a hierarchical dot diagram in a matrix category over a semiring.
The semantics of a hierarchical dot diagram are defined as the semantics of the simple dot diagram that results from unfolding it.

\begin{definition}\label{definition:semantics-of-hierarchical-dot-diagrams}
    Let ${\func{f}}$ be a hierarchical dot diagram over some monoidal signature $\Sigma$, 
    and let $\mathcal{I}\colon \Sigma \to C$ be an interpretation of $\Sigma$ in some hypergraph category $C$.
    The \emph{semantics} of ${\func{f}}$ under $\mathcal{I}$ are $[\![{\func{f}}]\!]_{\mathcal{I}} := [\![\mathsf{unfold}({\func{f}})]\!]_{\mathcal{I}}$.
\end{definition}

When working over general semirings, 
a natural way to express the complexity of an algorithm is the number of additions and multiplications it takes. 
This can be made precise using the notion of an \emph{arithmetic circuit} over a semiring: 
a representation of an arithmetic expression as a directed acyclic graph.
The problem then becomes to \emph{compile} a hierarchical dot diagram ${\func{f}}$ to an equivalent arithmetic circuit. 
\Cref{thm:semantics-in-matrix-categories} shows that when the primal graph of
each dot diagram appearing in ${\func{f}}$ has bounded treewidth, 
this problem can be solved efficiently.

\begin{definition}
    Let $R$ be a semiring. 
    An \emph{arithmetic circuit} over $R$ is a tuple $C = (V(C), E(C), \left(\mathrm{op}_v\right)_{v \in V(C)})$
    such that:
    \begin{enumerate}
        \item $(V(C), E(C))$ is a directed acyclic graph.
        \item For each vertex $v\in V(C)$, 
        the \emph{label} $\mathrm{op}^C_v$ of $v$ is either an element of $R$, or one of the two symbols $\{+,\,\cdot\,\}$.
        \item A vertex $v\in V(C)$ is a leaf in $(V(C), E(C))$ if and only if it is labelled by an element of $R$, i.e.~$\mathrm{op}^C_v \in R$.
        \item All non-leaf vertices have exactly two children.
    \end{enumerate}
    The \emph{semantics} of $C$ at $v\in V(C)$ are given by 
    $$
    [\![C]\!]_v := \begin{cases}
        \mathrm{op}^C_v &\text{ if } v \text{ is a leaf,} \\
        \sum_{u\in \mathrm{ch}(v)} [\![C]\!]_u &\text{ if } \mathrm{op}^C_v = +, \\
         \prod_{u\in \mathrm{ch}(v)} [\![C]\!]_u &\text{ otherwise.} 
    \end{cases}
    $$
\end{definition}

Note that in contrast to other notions of \emph{arithmetic circuit} in the literature \cite{shpilka2010arithmetic}, 
the arithmetic circuits we consider may have multiple outputs (i.e.~roots).
Moreover, as defined above, they only represent \emph{closed} arithmetic expressions (i.e.~expressions which not contain variables).
The following theorem characterises a general algorithm for computing the semantics of dot diagrams in matrix hypergraph categories;
see Appendix \ref{sec:proof-of-thm-semantics-in-matrix-categories} for its proof.

\begin{theorem}\label{thm:semantics-in-matrix-categories}
    Let $\Sigma$ be a monoidal signature, 
    let $R$ be a semiring, 
    and let $\mathcal{I}\colon \Sigma \to \mathsf{Mat}_R$ 
    be an interpretation. 
    Moreover, let ${\func{f}}$ be a hierarchical dot diagram over $\Sigma$, 
    and let $k$, $L$, $M$ and $N$ be defined as in \Cref{thm:algebraisation-general}.
    Finally, let $m, n\in \mathbb{N}$ be the dimensions of the matrix 
    $[\![{\func{f}}]\!]_{\mathcal{I}} \in R^{m\times n}$.
    
    Then there is an arithmetic circuit $C$ over $R$ with $O(2^{O(k)} M N)$ vertices 
    and a bijection 
    $\nu$ from $\{0, \dots, n - 1\} \times \{0, \dots, m - 1\}$ to the set of roots of $C$
    such that 
    $$ ([\![{\func{f}}]\!]_{\mathcal{I}})_{ij} = [\![C]\!]_{\nu(i, j)}. $$
    Moreover, this arithmetic circuit $C$ can be computed in $O(2^{O(k)}LMN)$ time.

    In particular, if the arithmetic operations in $R$ can be computed in constant time and the treewidth $k$ is bounded, 
    then the semantics of ${\func{f}}$ can be computed in $O(LMN)$ time.
\end{theorem}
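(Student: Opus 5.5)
Apply \Cref{thm:algebraisation-general} to ${\func{f}}$. This yields, in $O(LMN\cdot 2^{O(k)})$ time, an algebraisation $t$ of $\mathsf{unfold}({\func{f}})$ with $\mathsf{width}(t)=w\in O(k)$ and DAG-size $|t|\in O(MNk^2)$. Since $t$ is an algebraisation, $\underline{\mathsf{dot}}(t)=\underline{\mathsf{unfold}({\func{f}})}$. Both $\underline{\mathrm{Dot}}(\Sigma)$ and $\underline{\mathrm{Term}}(\Sigma)$ are free, so $[\![t]\!]_{\mathcal{I}}=[\![\mathsf{unfold}({\func{f}})]\!]_{\mathcal{I}}=[\![{\func{f}}]\!]_{\mathcal{I}}$, by \Cref{thm:dot-diagrams-free-hypergraph-category} and \Cref{definition:semantics-of-hierarchical-dot-diagrams}. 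It therefore suffices to compile the term DAG of $t$ into an arithmetic circuit, one sub-term node at a time.

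Process the DAG of $t$ bottom-up, building for each distinct sub-term $s\subseteq t$ a family of circuit roots, one per entry of $[\![s]\!]_{\mathcal{I}}$. We assume without loss of generality that every sort is interpreted by an object of dimension at most some constant $D$, which holds for $\mathsf{BoolSubStoch}$ with $D=2$; otherwise $2^{O(k)}$ becomes $D^{O(k)}$. The matrix of $s$ then has at most $D^{|\mathrm{dom}(s)|+|\mathrm{codom}(s)|}\le D^{w}=2^{O(k)}$ entries. The cases are as follows.
\begin{itemize}
\item For a leaf $f\in\Sigma$, the entries are constants $\mathcal{I}(f)_{ij}$, giving at most $2^{O(k)}$ leaf vertices.
\item For the structural constants ($\mathrm{id}$, $\mathrm{swap}$, $\mathrm{copy}$, $\mathrm{del}$, $\mathrm{equate}$, $\mathrm{new}$), the entries are $0$ or $1$ and are read off from the definition of $\mathsf{Mat}_R$.
\item For $s=s_1\otimes s_2$, each entry is a single product of one entry of each factor, so one $\cdot$-gate per entry.
\item For $s=s_1\circ s_2$ with middle object of dimension at most $D^{w}$, each entry is a sum of at most $2^{O(k)}$ products, implemented as a binary tree of $+$-gates over $\cdot$-gates.
\end{itemize}
The dimensions of all involved matrices are bounded via the width of $s$, $s_1$ and $s_2$, since the middle object of a composite is the codomain of $s_2$. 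Each DAG node thus contributes $2^{O(k)}$ vertices. Maximal sharing means each distinct sub-term is compiled once, so the total is $|t|\cdot 2^{O(k)}=O(MNk^2)\,2^{O(k)}=O(2^{O(k)}MN)$, absorbing $k^2$ into $2^{O(k)}$. The roots of the root node's family give the bijection $\nu$, and correctness follows by induction on sub-terms from \Cref{def:hypergraph-terms}.

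For the running time, the circuit construction is linear in its size, $O(2^{O(k)}MN)$, which is dominated by the $O(2^{O(k)}LMN)$ cost of computing $t$. If arithmetic is constant-time and $k$ is bounded, evaluating $C$ takes $O(MN)$ time, for a total of $O(LMN)$.

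The main obstacle is the bookkeeping in the composition case. One must ensure that index conventions (Kronecker ordering and strict monoidal lists) match between the term semantics and the circuit encoding. One must also make sure that intermediate objects never exceed dimension $D^{w}$, so that the $2^{O(k)}$ bound per node genuinely holds, including for unused function symbols, which are excluded by condition (4).
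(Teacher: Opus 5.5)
Your proposal is correct and follows the same route as the paper. Like the paper, you apply Theorem~\ref{thm:algebraisation-general} and replace each node of the DAG of $t$ by an arithmetic sub-circuit for the corresponding Kronecker product or matrix product, with $2^{O(k)}$ gates per node by the width bound, giving $|t|\cdot 2^{O(k)} = O(2^{O(k)}MN)$ vertices in total. Your freeness argument for $[\![t]\!]_{\mathcal{I}} = [\![{\func{f}}]\!]_{\mathcal{I}}$ and your explicit assumption that each sort has bounded dimension spell out steps the paper leaves implicit (it uses the maximal row/column count $d$ of the $\mathcal{I}(g)$ as a constant), while your remark about condition (4) is beside the point, since symbols not occurring in $t$ never enter the circuit.
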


\subsection{Probabilistic inference} \label{ssec:inference}

We now consider DPPs specifically.

\begin{definition}\label{def:boolean-probabilistic-inference}
    Let ${\func{f}}\colon (\,) \to (\mathbb{B})$ be a discrete probabilistic program
    with no inputs and a single output.
    Let $\begin{pmatrix} q & p\end{pmatrix}^\intercal := [\![ {\func{f}}]\!]_{\mathcal{S}}$ be the semantics of ${\func{f}}$ under the substochastic interpretation.
    The \emph{acceptance probability} of ${\func{f}}$ is $p_{\mathrm{acc}}:= p + q$.
    The \emph{probability that ${\func{f}}$ outputs true} is $p_{\func{f}} := p/p_{\mathrm{acc}}$ if $p_{\mathrm{acc}} > 0$ and $0$ otherwise.
\end{definition}

This leads to the following rephrasing of Problem \ref{problem:boolean-probabilistic-inference}:

\begin{problem}
Given a discrete probabilistic program $\func{f}$ and a natural number $d>0$, 
    approximate $p_{\func{f}}$ with an absolute error of at most $2^{-(d+1)}$.
\end{problem}

Next is our main result concerning inference in DPPs.
Its proof is mainly an application of \Cref{thm:semantics-in-matrix-categories}, 
together with standard arguments for bounding the error and complexity for matrix multiplications and Kronecker products of substochastic matrices;
see Appendix \ref{sec:proof-of-thm-fixed-parameter-tractable-inference}.

\begin{theorem}\label{thm:fixed-parameter-tractable-inference}
    Let ${\func{f}}\colon (\,) \to (\mathbb{B})$ be a discrete probabilistic program
    with no inputs and a single output,
    let $p_{\mathrm{acc}}$ be the acceptance probability of ${\func{f}}$,
    and let $d>0$ be an integer.
    
    Then the probability $p_{\func{f}}$ that ${\func{f}}$ outputs \emph{true} can be approximated with absolute error at most $2^{-(d+1)}$
    in time
    $$
    O((LMN)^3 \cdot  d^2\cdot \log^3(p_{\mathrm{acc}}^{-1})\cdot 2^{O(k)}),
    $$
    where $L$, $M$, and $N$ are defined as in \Cref{thm:algebraisation-general}.

    In particular, if $k$ is bounded and the acceptance probability $p_{\mathrm{acc}}^{-1}$ is at most exponential in $LMN$, then Boolean probabilistic inference can be performed in polynomial time. 
\end{theorem}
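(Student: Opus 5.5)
The plan is to apply \Cref{thm:semantics-in-matrix-categories} with $R=\mathbb{Q}$ and the substochastic interpretation $\mathcal{S}$. This gives an arithmetic circuit $C$ with $O(2^{O(k)}MN)$ vertices, computable in $O(2^{O(k)}LMN)$ time, whose two roots evaluate to $q$ and $p$. Evaluating $C$ exactly is hopeless: as in \Cref{fig:counterexample}, the values may need exponentially many bits. Instead, I will evaluate $C$ in fixed-point arithmetic with $b$ fractional bits, rounding down after every operation. Then I will output $\tilde p/(\tilde p+\tilde q)$, computed to $d+2$ bits.

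\textbf{Error per operation.} The circuit comes from the matrix operations of an algebraisation: products and Kronecker products of substochastic matrices, with leaves given by entries of $\mathcal{S}$ of the generators and the structural matrices. I would use the algebraisation structure rather than the raw circuit. Every intermediate matrix is substochastic, so all column sums are at most $1$. By induction over the term DAG, if $A,B$ are substochastic and $\tilde A,\tilde B$ have entrywise errors bounded in column-$\ell_1$ norm by $\epsilon_A,\epsilon_B$, then $\widetilde{AB}$ has column-$\ell_1$ error at most $\epsilon_A+\epsilon_B+\delta$. Here $\delta=2^{O(k)}2^{-b}$ is the rounding contribution of a matrix of dimension $2^{O(k)}$. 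The same bound holds for the Kronecker product.

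\textbf{Accumulated error.} The error propagates additively along the DAG, but shared subterms may be counted many times. For example, in \Cref{fig:counterexample} the term for $\func{f}_n$ uses $\func{f}_{n-1}$ twice, so its error doubles at each level. The total error is therefore bounded by $\delta$ times the number of root-to-leaf paths in the DAG. I expect this to be the main obstacle, since that number can be exponential in $LMN$. The fix is that only a number of bits polynomial in the size is needed. The number of paths is at most $(\text{DAG-size})^{\text{depth}}$, so it is at most $2^{O(LMN\log(LMN))}$, or more crudely $2^{\mathrm{poly}}$. Choosing $b=O(LMN+d+\log p_{\mathrm{acc}}^{-1}+k)$ makes the absolute errors in $\tilde p,\tilde q$ at most $\eta:=2^{-(d+4)}p_{\mathrm{acc}}$. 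To obtain the sharper polynomial stated in the theorem, I would refine the path count via the hierarchical structure, so that $b=O(LMN\cdot d\cdot\log p_{\mathrm{acc}}^{-1})$ suffices. I would need to check this refinement carefully; if it fails, I would settle for a polynomial bound with a worse exponent.

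\textbf{Ratio and running time.} With absolute errors at most $\eta$, standard ratio perturbation gives $|\tilde p/(\tilde p+\tilde q)-p/(p+q)|\le 4\eta/p_{\mathrm{acc}}\le 2^{-(d+2)}$. Rounding the final division to $d+2$ bits then keeps the total error within $2^{-(d+1)}$. The case $p_{\mathrm{acc}}=0$ is excluded implicitly, since then $\log p_{\mathrm{acc}}^{-1}$ is infinite.

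For the running time, the circuit has $2^{O(k)}MN$ gates, each an operation on $b$-bit numbers costing $O(b^2)$ with schoolbook multiplication. Adding the circuit construction time gives $O(2^{O(k)}LMN\cdot b^2)$. Substituting $b=O(LMN\,d\log p_{\mathrm{acc}}^{-1})$ gives a bound within $O((LMN)^3d^2\log^3(p_{\mathrm{acc}}^{-1})2^{O(k)})$. The final claim follows: if $k$ is bounded and $\log p_{\mathrm{acc}}^{-1}$ is polynomial in $LMN$, this is polynomial.
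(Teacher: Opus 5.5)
Your overall route matches the paper's. You build the circuit $C$ from \Cref{thm:semantics-in-matrix-categories}, evaluate it in truncated fixed-point arithmetic with a number of bits linear in the circuit size, and finish with a ratio-perturbation bound. Two steps, however, do not go through as written. The main one concerns the precision: you choose $b$ in terms of $\log_2 p_{\mathrm{acc}}^{-1}$, but $p_{\mathrm{acc}}=p+q$ is not given, since it is part of what you are computing. So the algorithm can neither pick $b$ nor certify that the errors in $\tilde p,\tilde q$ are small relative to $p_{\mathrm{acc}}$, and without a lower bound on $p_{\mathrm{acc}}$ the ratio step has no guarantee. The paper closes this with a linear search. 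For $j=1,2,\dots$ it evaluates $C$ at precision $2|V(C)|+j+O(1)$ until the computed $\tilde p+\tilde q$ certifies $p_{\mathrm{acc}}\ge 2^{-j}$. This happens after about $\log_2 p_{\mathrm{acc}}^{-1}$ rounds of $O(|V(C)|)$ operations each. The search is exactly where the third power of $\log p_{\mathrm{acc}}^{-1}$ in the statement comes from: the total cost is $O(|V(C)|\,(2|V(C)|+b)^2\log p_{\mathrm{acc}}^{-1})$ with $b=d+2+\lceil\log_2 p_{\mathrm{acc}}^{-1}\rceil$.

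The second issue is the path count and the proposed `refinement'. Bounding the number of root-to-leaf paths by $(\text{DAG-size})^{\text{depth}}$ gives $2^{O(LMN\log LMN)}$. The resulting $\log^2(LMN)$ factor in $b^2$ is not covered by the theorem when $d$ and $\log p_{\mathrm{acc}}^{-1}$ are $O(1)$, and your choice $b=O(LMN+d+\log p_{\mathrm{acc}}^{-1}+k)$ silently drops it. The fix is simply that every gate (and every term node) has two children. So there are at most $2^{\text{depth}}\le 2^{|V(C)|}$ paths, and truncating to $2|V(C)|+b$ fractional bits gives error at most $2^{-b}$. No hierarchical refinement is needed. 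The one you propose, $b=O(LMN\cdot d\cdot\log p_{\mathrm{acc}}^{-1})$, is weaker than the additive bound, not sharper. The $(LMN)^3$ in the statement just comes from $|V(C)|$ gates each costing $O((|V(C)|+b)^2)$ bit operations.

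Finally, your premise that every intermediate matrix is substochastic fails for the terms actually produced. The Frobenius decomposition uses $\mathsf{new}$, whose matrix $(1\;1)^\intercal$ has column sum $2$, and a wiring that creates $m$ fresh variables has column sums $2^m$. So the column-$\ell_1$ recurrence $\epsilon_{AB}\le\epsilon_A+\epsilon_B+\delta$ is not justified. You can either argue entrywise at the level of individual gates, as the paper does (it uses only that every gate value lies in $[0,1]$). Or you can absorb a factor $2^{O(k)}$ per composition, which costs $O(k)$ extra bits per level and keeps $b$ polynomial.
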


A problem is said to be \emph{fixed parameter tractable} 
if on every class of instances for which a given parameter is bounded, it can be solved in polynomial time. 
Hence, according to \Cref{thm:fixed-parameter-tractable-inference} 
Boolean probabilistic inference is fixed-parameter tractable 
for the parameter $(k, p_{\mathrm{acc}}^{-1})$. 
The dependency on $p_{\mathrm{acc}}^{-1}$ can be removed, 
if we assume that the call graph of the given probabilistic program has bounded depth, 
since in this case the acceptance probability is at most exponential in $LMN$. 
This also shows that \Cref{thm:fixed-parameter-tractable-inference} can be seen as a proper generalisation of the fixed-parameter tractability of Bayesian network inference.

\begin{remark}
    Even though \Cref{thm:fixed-parameter-tractable-inference} generalises the fixed-parameter tractability of Bayesian network inference, 
    our inference algorithm does \emph{not} reduce to the classical junction tree algorithm for Bayesian networks. 
    One significant difference is that the junction tree algorithm operates directly 
    on a tree decomposition of the moral graph (which is the same as the primal graph of a corresponding DPP).
    In contrast, our algorithm first converts this tree decomposition into a branch decomposition, 
    making our algorithm is more similar to (but still distinct from) tensor network contraction algorithms. 
\end{remark}

\subsection{Further applications}

Being phrased in terms of dot diagrams rather than DPPs specifically, \Cref{thm:algebraisation} can also be applied to derive a number of fixed parameter-tractability results in a unified manner.

\paragraph{Tensor networks and quantum circuit simulation} Tensor networks can be viewed as dot diagrams over $\mathsf{Mat}_R$ (viewed as a monoidal signature), 
and computing the semantics of such a dot diagram corresponds precisely to tensor network contraction \cite[Theorem 5.1]{kissinger2014finite}.
Hence, the known result that tensor network contraction can be performed using polynomially many arithmetic operations for networks of bounded treewidth \cite{markov2008simulating}, e.g.~for the purpose of quantum circuit simulation, can be viewed as special cases of \Cref{thm:semantics-in-matrix-categories}. 

\paragraph{Computing attack tree metrics} As was shown in \cite{peterseim2025unified},
the problem of computing \emph{attack tree metrics} \cite{lopuhaa2022efficient} can naturally 
be formulated as the problem of computing the semantics of a string diagram, 
the given \emph{attack tree},
in $\mathsf{Mat}_R$ for suitable semirings $R$. 
In particular, the metric associated to the tropical semiring $(\mathbb{N}, \min, +)$, 
the \emph{min.~cost metric},
quantifies the minimum cost for an attacker to compromise the system modelled by the attack tree.
A direct application of \Cref{thm:semantics-in-matrix-categories} now gives that attack tree metrics can be computed in polynomial time on attack trees of bounded treewidth, 
as long as the size of any intermediate result of evaluating the corresponding arithmetic circuit is bounded.
The latter condition applies, for instance, in the case of the min.~cost metric. 
For this problem, no fixed-parameter tractable algorithms have previously been described in the literature.

\paragraph{Evaluating database queries} In the context of relational databases, it is known that conjunctive query evaluation is fixed-parameter tractable \cite{chekuri2000conjunctive,grohe2001evaluation}. 
One can view this as a consequence of \Cref{thm:algebraisation}, as follows.
A conjunctive query (i.e.~a select-project-join query) on a database can be viewed as a dot diagram over the hypergraph category $\mathsf{FinRel}$ of finite relations (viewed as a monoidal signature).\footnote{This is explained in more detail in \Cref{sec:queries-as-dot-diagrams}}
Here, $\mathsf{FinRel}$ can be defined as the matrix hypergraph category $\mathsf{Mat}_{\mathbb{B}}$, 
where $(\mathbb{B}, \lor, \land)$ is the Boolean semiring. 
In the context of databases, however, one should think of the morphisms of $\mathsf{FinRel}$ not as matrices with Boolean entries, but as tables that directly list the elements of a relation.
The problem of evaluating a conjunctive query is then equivalent to computing the semantics of the corresponding dot diagram in $\mathsf{FinRel}$.
The primal graph of the dot diagram corresponds to the \emph{query graph} of the conjunctive query.
Therefore, it follows from \Cref{thm:algebraisation} that a conjunctive query whose query graph has bounded treewidth can be evaluated in polynomial time in the size of the query and the largest table in the database.
In contrast to the case of tensor networks and attack trees, 
this result does not use the compilation of string diagrams to arithmetic circuits, 
but directly uses the cartesian product and composition of relations in the evaluation step.
This shows that the additional layer of abstraction provided by dot diagrams and hypergraph categories can be necessary for some applications.

\section{Conclusion}

\Cref{thm:fixed-parameter-tractable-inference} shows that despite the likely inherent intractability of inference for \emph{general} discrete probabilistic programs, 
inference can nevertheless be fast on sufficiently structurally simple instances.
Moreover, existing inference engines for discrete probabilistic programs do not satisfy our parametrised complexity bounds,
and for deeply nested programs our algorithm yields exponential \mbox{asymptotic} improvements (see \Cref{fig:counterexample}).

What remains open is whether, and to what extent, our method also yields benefits in practical implementations.
This next step will also be an opportunity to investigate potential advantages of our approach that are difficult to analyse on the level of asymptotic complexity. 
One such potential advantage is that the algebraic representation of probabilistic programs enables the use of traditional methods 
for term rewriting and program optimisation to simplify terms before evaluating them, 
making use of the sound and complete equational theory for discrete probabilistic programs from \cite{piedeleu2025complete}.
Another possible advantage is that our abstract formulation allows dense matrices to be easily replaced by more efficient representations of substochastic matrices, 
similar to decision diagrams used in quantum circuit simulation \cite{wille2022decision}, 
without affecting the stated performance guarantees.

Finally, we have shown that our method, \emph{string diagram algebraisation}, is applicable beyond probabilistic programming, 
enabling a unified view on a diverse variety of known fixed-parameter tractability results.

\bibliography{bibliography}

\appendix
\section{The monoidal category of dot diagrams}\label{appendix:tensor-product-and-composition-of-dot-diagrams}

In this section, 
we provide precise definitions of the composite and tensor product of dot diagram, 
omitted from the main text.
Up to equivalence of dot diagrams, these can easily seen to be equivalent to the definitions given in \cite[Definition 4.5]{kissinger2014finite}, 
where dot diagrams are defined in terms of cospans of certain kinds of labelled hypergraphs.
In both definitions that follow, let $\Sigma$ be a monoidal signature, 
and let $\func{f}: \mathbf{S} \to \mathbf{T}$ 
and $\func{g}: \mathbf{T} \to \mathbf{S}$ be $\Sigma$-diagrams.

\begin{definition}
    The \emph{composite} of $\func{f}$ and $\func{g}$ is the dot diagram
    $\func{g} \circ \func{f}$
    over $\Sigma$, where:
    \begin{enumerate}
        \item The set of variables of $\func{g} \circ \func{f}$ is 
        $$X_{\func{g} \circ \func{f}} := (X_{\func{f}} \sqcup X_{\func{g}}) / \sim,$$
        where $\sqcup$ denotes disjoint union, and $\sim$ is the equivalence relation that identifies the $i$-th output variable of $\func{f}$ with the $i$-th input variable of $\func{g}$. 
        \item The list of assignments of $\func{g} \circ \func{f}$ is 
        \[
        A_{\func{g} \circ \func{f}} := (\underline{\mathsf{inl}})_*(A_{\func{f}})\,(\underline{\mathsf{inr}})_*(A_{\func{g}}),
        \]
        where $\underline{\mathsf{inl}} := p \circ \mathsf{inl}$ and $\underline{\mathsf{inr}} := p \circ \mathsf{inr}$, 
        with 
        $\mathsf{inl}\colon X_{\func{f}} \to X_{\func{f}} \sqcup X_{\func{g}}$ and $\mathsf{inr}\colon X_{\func{g}} \to X_{\func{f}} \sqcup X_{\func{g}}$
        given by the canonical inclusion maps, 
        and $p: X_{\func{f}} \sqcup X_{\func{g}} \to X_{\func{g} \circ \func{f}}$ given by the canonical projection.
        \item The list of inputs is $(\underline{\mathsf{inl}})_*(\mathsf{in}_{\func{f}})$,
        and the list of outputs is $(\underline{\mathsf{inr}})_*(\mathsf{out}_{\func{g}})$.
    \end{enumerate}
\end{definition}

\begin{definition}
    The \emph{tensor product} of $\func{f}$ and $\func{g}$ is the $\Sigma$-diagram
    $\func{f} \otimes \func{g}$, where:
    \begin{enumerate}
        \item The set of variables of $\func{f} \otimes \func{g}$ is 
        $$X_{\func{f} \otimes \func{g}} := X_{\func{f}} \sqcup X_{\func{g}}.$$
        \item The list of assignments of $\func{f} \otimes \func{g}$ is 
        \[
        A_{\func{f} \otimes \func{g}} := \mathsf{inl}_*(A_{\func{f}})\,\mathsf{inr}_*(A_{\func{g}}).
        \]
        \item The list of inputs is $\mathsf{inl}_*(\mathsf{in}_{\func{f}})\, \mathsf{inr}_*(\mathsf{in}_{\func{g}})$;
        the list of outputs is $\mathsf{inl}_*(\mathsf{out}_{\func{f}})\, \mathsf{inr}_*(\mathsf{out}_{\func{g}})$.
    \end{enumerate}
\end{definition}

\section{Proofs for \Cref{sec:algebraisation}}

\subsection{Proof of \Cref{lem:algebrise-permutation}}\label{sec:proof-of-lem-algebraise-permutation}

To construct the desired algebraisation, 
run an odd-even transposition sort on $\sigma$, 
recording where each transposition is performed.
This results in at most $2k$ layers of parallel transpositions.
Each layer consists of at most of $k$ transpositions or identities, 
which in total have at most $2k$ in- and outputs.
Hence, each layer represents a permutation that can be represented 
by a hypergraph term of both DAG-size in $O(k)$ and width at most $2k$.
Taking the composition of the resulting terms corresponding to each layer
gives us a term of DAG-size in $O(k^2)$ and of width at most $2k$. \qed

\subsection{Proof of \Cref{lem:decompose-pure-wiring}}\label{sec:proof-of-lem-decompose-pure-wiring}

Since pure wirings form the free hypergraph category over the signature $\Sigma_\emptyset$ with the same sorts $\Sigma_0$ as $\Sigma$, but no function symbols, 
there is a hypergraph term $t_{\func{f}}$ over $\Sigma_\emptyset$ that is an algebraisation of $\func{f}$.
We may then proceed by induction on the structure of $t_{\func{f}}$.
In the base case, when $t_{\func{f}}$ is one of $\mathrm{id}_\bullet$, $\mathrm{swap}_{\bullet, \bullet}$, $\mathrm{copy}_\bullet$, $\mathrm{del}_\bullet$, $\mathrm{equate}_\bullet$, and $\mathrm{new}_\bullet$, the claim is trivial.
In the case that $t_{\func{f}} = t_1 \otimes t_2$, we can obtain the desired decomposition by taking the tensor product of each layer.
Finally, if $t_{\func{f}} = t_1 \circ t_2$ and 
$$t_1 = t_{\sigma_3}^1 \circ t_\mathsf{e}^1 \circ t_{\sigma_2}^1 \circ t_\mathsf{c}^1 \circ t_{\sigma_1}^1, $$
as well as,
$$t_2 = t_{\sigma_3}^2 \circ t_\mathsf{e}^2 \circ t_{\sigma_2}^2 \circ t_\mathsf{c}^2 \circ t_{\sigma_1}^2, $$
then 
$$t_{\func{f}} =  t_{\sigma_3}^1 \circ t_\mathsf{e}^1 \circ (t_{\sigma_2}^1 \circ t_\mathsf{c}^1 \circ t_{\sigma_1}^1 \circ t_{\sigma_3}^2 \circ t_\mathsf{e}^2 \circ t_{\sigma_2}^2) \circ t_\mathsf{c}^2 \circ t_{\sigma_1}^2, $$
and the middle term is equivalent to a permutation-type term. \qed

\subsection{Proof of \Cref{cor:algbrise-pure-wiring}}\label{sec:proof-of-algbrise-pure-wiring}

Using \Cref{lem:decompose-pure-wiring}, write ${\func{f}} = \sigma_3 \circ \mathsf{e} \circ \sigma_2 \circ \mathsf{c} \circ \sigma_1$, 
and obtain algebraisations $t(\mathsf{g})$ for each $\mathsf{g}\in \{\sigma_1, \sigma_2, \sigma_3, \mathsf{e}, \mathsf{c}\}$,
using \Cref{lem:algebrise-permutation} for the permutation-type wirings $\sigma_1, \sigma_2, \sigma_3$.
Then 
    $$t := t(\sigma_3) \circ t(\mathsf{e}) \circ t(\sigma_2) \circ t(\mathsf{c}) \circ t(\sigma_1)$$
is an algebraisation of ${\func{f}}$. 
Both the DAG-size of $t$ and the time complexity of obtaining $t$ in this way are dominated by the contribution of the permutation-type terms, 
all of which have an in-/output number of less than $k$. 
Hence, by \Cref{lem:algebrise-permutation}, 
both the DAG-size of $t$ and the time complexity of obtaining $t$ are in $O(k^2)$.
Finally, the width of $t$ is the maximum of the widths of $t(\sigma_1)$, $t(\sigma_2)$, $t(\sigma_3)$, $t(\mathsf{e})$, and $t(\mathsf{c})$,
which are all bounded by $2k$.

\subsection{Proof of \Cref{lem:wiring-tensor-factorisation}}\label{sec:proof-of-lem-wiring-tensor-factorisation}

The existence of such a decomposition can be shown by passing to the side of hypergraph terms and then using structural induction; 
see \cite[Proposition 5.4]{wilson2023data} for the proof of an essentially equivalent statement.
Once we know existence, the pure wirings $\mathsf{w}_1, \mathsf{w}_2$ can be read off from ${\func{f}}$: 
the inputs of $\mathsf{w}_1$ can be taken to be the inputs of ${\func{f}}$,
i.e.~$\mathsf{in}_{\mathsf{w}_1} := \mathsf{in}_{\func{f}}$.
The outputs of $\mathsf{w}_1$ can be formed by concatenating the inputs $\mathbf{x}_a\mathbf{x}_a$, taken twice, of all $a\in A_{\func{f}}$, i.e.
$$ \mathsf{out}_{\mathsf{w}_1} := \prod_{a \in A_{\func{f}}} \mathbf{x}_a\mathbf{x}_a\mathbf{y}_a, $$
where $\prod$ denotes the concatenation of lists.
The wiring $\mathsf{w}_2$ can be constructed analogously. 
To obtain $\mathsf{w}_1$ and $\mathsf{w}_2$, we need to concatenate $6\cdot |A_{\func{f}}|$ many lists in total, 
each of length less or equal to $\mathsf{width}({\func{f}})$, which indeed requires an amount of time that is linear in $|A_{\func{f}}| \cdot \mathsf{width}({\func{f}})$. \qed

\subsection{Proof of \Cref{cor:algebrise-small-decls}}\label{proof-of-cor-algebrise-small-decls}

To obtain an algebraisation $t$ of ${\func{f}}$, we first decompose ${\func{f}}$ as in \Cref{lem:wiring-tensor-factorisation}, 
and algebrise the resulting two pure wirings $\mathsf{w}_1$ and $\mathsf{w}_2$ using \Cref{cor:algbrise-pure-wiring}. 
Since $\max \{|\mathsf{in}_{\mathsf{w}_i}|, |\mathsf{out}_{\mathsf{w}_i}|\}$ is bounded by $3nw$,
we can compute $\mathsf{w}_1$ and $\mathsf{w}_2$, and hence $t$, in $O(n^2w^2)$ time, again using \Cref{cor:algbrise-pure-wiring}.
The resulting DAG-size of $t$ is in $O(n^2w^2)$ for the same reason. 
Finally, the width of $t$ is the maximum over the widths of the algebraisations of $\mathsf{w}_1$, $\mathsf{w}_2$, 
and the middle tensor product over function symbols and identities, 
and all of these widths are bounded by $6nw$. 
(Here we use the assumption that $n\geq 1$, to be able to conclude that $6nw\geq w$.) \qed

\subsection{Proof of \Cref{lem:algebrise-hierarchical-function-declarations}}\label{sec:proof-of-lem-algebrise-hierarchical-function-declarations}

For any dot diagram ${\func{f}}$, let $\mathsf{alg}({\func{f}}) \in \mathrm{Term}(\Sigma)$ be 
the result algebraisation algorithm from \Cref{cor:algebrise-small-decls} if $|A_{\func{f}}|>1$, 
or that from \Cref{cor:algbrise-pure-wiring} for pure wirings (with $|A_{\func{f}}|=0$) otherwise. 
For $v\in V({\func{f}})$, define 
$$
\theta(v) := 
\begin{cases}
        \mathsf{alg}({\func{f}}_v) \qquad\qquad\qquad\qquad\qquad\: \text{ if } v \text{ is a leaf,} \\
        \mathsf{alg}({\func{f}}_v)[\mathsf{defn}_{\func{f}}^{-1}(u) \mapsto \theta(u)]_{u \in \mathrm{ch}(v)} \text{ otherwise.}
\end{cases}
$$
Let $t := \theta(R_{\func{f}})$. 
Then $t$ is an algebraisation of $\mathsf{unfold}({\func{f}})$, 
using \Cref{lem:algebraisation-commutes-with-substitution} and induction.
The estimates of the width and DAG-size of $t$
similarly follow by \Cref{cor:algebrise-small-decls,cor:algbrise-pure-wiring} and induction. 
Finally, computing $t$ requires traversing the call graph of ${\func{f}}$ bottom up, 
calculating an algebraisation using the algorithms from \Cref{cor:algebrise-small-decls,cor:algbrise-pure-wiring} 
on a dot diagram with at most $N$ assignments and width at most $K$ at each step, 
along with a substitution of at most $N$ hypergraph terms into a term of size at most $N^2 K^2$ at each step. 
By using a not-necessarily-canonicalised DAG-representation of hypergraph terms in the intermediate steps, 
and only canonicalising at the end, each substitution of a hypergraph term $s_1$ into $s_2$ takes $O(|s_2|)$ time, independently of $|s_1|$. 
This leads to a total time complexity of $O(|V({\func{f}})| \cdot N^3K^2)$. \qed

\subsection{Proof of \Cref{lem:decomp-to-hierarchical-decl}}\label{sec:proof-of-lem-decomp-to-hierarchical-decl}

We proceed by strong induction on the number of assignments $|A_{\func{f}}|$.
Since $\mathrm{Dep}({\func{f}})$ has a branch decomposition (by assumption), 
and this branch decomposition must have at least two leaves, 
we know that $|A_{\func{f}}| \geq 2$.
In the base case of the induction, when $|A_{\func{f}}| = 2$, the claim is trivial, 
as we may take ${\func{f}}^* := {\func{f}}$.
    
Now, suppose that $|A_{\func{f}}| > 2$ and that 
the claim holds for all $\mathsf{g}$ with $|A_\mathsf{g}| < |A_{\func{f}}|$.
Let $e\in E(\mathcal{B})$.
Then removing $e$ from $\mathcal{B}$ partitions $\mathcal{B}$
into two subtrees $\mathcal{B}_1, \mathcal{B}_2$.
The leaves of $\mathcal{B}_1, \mathcal{B}_2$ are in bijection with two non-empty sub-lists $A_1, A_2$ of the list of assignments $A_{\func{f}}$ of ${\func{f}}$.
By specifying their lists of in- and outputs, 
we now define two $\Sigma$-diagrams ${\func{f}}_1, {\func{f}}_2$ whose lists of assignments are $A_1, A_2$, respectively:
the list of inputs of ${\func{f}}_1$ is the list of all variables appearing in $A_1$ that also appear either in $\mathsf{in}_{{\func{f}}}$ or somewhere in $A_2$ (in the order in which the appear in $A_1$). 
Similarly, the list of outputs of ${\func{f}}_1$ is the list of all variables appearing in $A_1$ that also appear either in $\mathsf{out}_{{\func{f}}}$ or somewhere in $A_2$. 
The list of in- and outputs of ${\func{f}}_2$ are defined analogously. 

Now, let $\mathcal{B}_1', \mathcal{B}_2'$ be the unrooted binary trees arising from deleting the unique roots of $\mathcal{B}_1, \mathcal{B}_2$ and connecting its two neighbours through an edge. 
Both $A_1$ and $A_2$ are strictly smaller in length than $A_{\func{f}}$, 
and $\mathcal{B}_1', \mathcal{B}_2'$ yield branch decompositions of ${\func{f}}_1, {\func{f}}_2$, respectively.
Hence, by the induction hypothesis, we obtain two hierarchical $\Sigma$-diagrams ${\func{f}}^*_1, {\func{f}}^*_2$ satisfying properties (1)--(4) with respect to ${\func{f}}_1$ and ${\func{f}}_2$.
Furthermore, ${\func{f}}^*_i$ can be computed in $O(|V(\mathcal{B}_i')| \cdot \mathsf{width}(\mathcal{B}_i'))$ for each $i\in \{1,2\}$. 

Let ${\func{f}}^*$ be the following hierarchical $\Sigma$-diagram,
\begin{align*}
        &{\func{f}}^*(\mathbf{x}_0) \; := \; 
        \lett \mathbf{y}_0 = {\func{f}}_1^*(\mathbf{x}_1);\: 
        \lett \mathbf{y}_1 = {\func{f}}_2^*(\mathbf{x}_2);\:
        \mathbf{y}_{2}
\end{align*}
where $\mathbf{x}_0:= \mathsf{in}_{\func{f}}$, 
$\mathbf{x}_1 := \mathsf{in}_{{\func{f}}_1}$,
$\mathbf{y}_0 := \mathsf{out}_{{\func{f}}_1}$,
$\mathbf{x}_2 := \mathsf{in}_{{\func{f}}_2}$,
$\mathbf{y}_1 := \mathsf{out}_{{\func{f}}_2}$,
and $\mathbf{y}_2 := \mathsf{out}_{{\func{f}}}$.
Properties (1)--(4) now hold by construction.
To construct ${\func{f}}^*$ in this way, 
we need to construct $|V(\mathcal{B})|$ many intermediate hierarchical $\Sigma$-diagrams with at most $\mathsf{width}(\mathcal{B})$ many in- or outputs and two assignments, 
each also with at most $\mathsf{width}(\mathcal{B})$ in- or outputs.
This leads to a total time complexity of $O(|V(\mathcal{B})|\cdot \mathsf{width}(\mathcal{B}))$. \qed

\subsection{Proof of \Cref{thm:algebraisation-general}}\label{sec:proof-of-thm-algebraisation-general}

We proceed analogously to the proof of \Cref{lem:algebrise-hierarchical-function-declarations}, 
but using the improved algebraisation algorithm from \Cref{thm:algebraisation} instead of the one from \Cref{lem:algebrise-hierarchical-function-declarations}. 
In this way, 
we obtain an algebraisation $t_v$ for every dot diagram ${\func{f}}_v$ appearing in ${\func{f}}$ (where $v\in V({\func{f}}$) 
in $O(|A_{{\func{f}}_v}| \cdot 2^{O(\mathsf{tw}({\func{f}}_v'))})$ time, 
and this algebraisation $t_v$ has width in $O(\mathsf{tw}({\func{f}}_v'))$ and DAG-size in $O(|A_{{\func{f}}_v}|\cdot \mathsf{tw}({\func{f}}_v')^2)$. 
By traversing the call graph of ${\func{f}}$ bottom-up, 
substituting the algebraisations of child dot diagrams into those of their parents, 
we obtain an algebraisation $t$ of $\mathsf{unfold}({\func{f}})$ 
of width in $O(k)$ and DAG-size in $O(M N k^2)$.
Here we use the elementary fact that treewidth of a disjoint union of graphs is the maximum over their individual treewidths. 
Finally, each substitution into some $t_v$ takes $O(|t_v|) = O(|A_{{\func{f}}_v}|\cdot \mathsf{tw}({\func{f}}_v')^2)$ time (by the same argument as in the proof of \Cref{lem:algebrise-hierarchical-function-declarations}), 
and at each vertex $v$, we need to perform at most $L$ many substitutions. 
Hence, the total time needed to compute $t$ is in $O(L M N \cdot 2^{O(k)})$. \qed

\section{Proofs for \Cref{sec:applications}}

\subsection{{Proof of \Cref{thm:semantics-in-matrix-categories}}}\label{sec:proof-of-thm-semantics-in-matrix-categories}

By \Cref{thm:algebraisation-general}, 
we can compute an algebraisation $t$ of $\mathsf{unfold}({\func{f}})$
in $O(L M N \cdot 2^{O(k)})$ time such that the width of $t$ is in $O(k)$
and the DAG-size of $t$ is in $O(M N k^2)$. 
In the DAG-representation of $t$, 
every non-leaf node now corresponds to a matrix multiplication 
or Kronecker product over $R$, 
between matrices with at most $d^k$ rows or columns,
where $d$ is the maximum number of rows or columns of any $\mathcal{I}(g)$ with $g\in \Sigma$.
Any individual matrix multiplication requires 
$O((d^k)^3)= O(2^{O(k)})$ arithmetic operations.
Similarly, each individual Kronecker products can be performed 
using $O((d^k)^4) = O(2^{O(k)})$ arithmetic operations. 
Substituting corresponding arithmetic sub-circuits for the non-leaf vertices of the 
DAG-representation of $t$,
and substituting the corresponding scalar values from $R$ at leaf vertices, 
we obtain a suitable arithmetic circuit $C$ with $O(M N \cdot 2^{O(k)})$ vertices
in $O(L M N \cdot 2^{O(k)})$ time.

\subsection{Proof of \Cref{thm:fixed-parameter-tractable-inference}}\label{sec:proof-of-thm-fixed-parameter-tractable-inference}

Since the hypergraph category $\mathsf{BoolSubStoch}$ embeds into $\mathsf{Mat}_\mathbb{Q}$, 
we can apply \Cref{thm:semantics-in-matrix-categories} to obtain an arithmetic circuit $C$ in time $O(LMN\cdot 2^{O(k)})$ over $\mathbb{Q}$ such that:
\begin{enumerate}
    \item $C$ has two roots $R_0, R_1$ and 
        $ [\![{\func{f}}]\!]_\mathcal{S} = \begin{pmatrix}
                [\![C]\!]_{R_0} &
                [\![C]\!]_{R_1}
        \end{pmatrix}^\intercal.$
    \item The size $|V(C)|$ of $C$ is in $O(MN\cdot 2^{O(k)})$.
\end{enumerate}
In addition, since each vertex $v$ in $C$ represents an intermediate result of computing a matrix multiplication or Kronecker product of substochastic matrices, 
we know that $0 \leq [\![C]\!]_v \leq 1$ for all $v\in V(C)$. 
Using this fact, 
standard error estimates show that if we truncate each intermediate result of evaluating $C$ to the first $2|V(C)| + b$ fractional binary digits  (for some $b\in \mathbb{N}$), 
then the absolute error will be at most $2^{-b}$. 
Let $q:= [\![C]\!]_{R_0}$ and $p := [\![C]\!]_{R_1}$.
    
While we now know how to approximate $p$ and $q$ to any desired precision, 
we are really interested in approximating $p_{\func{f}} = p / p_{\mathrm{acc}}= p / (p + q)$.
Let $b:= d + 2 + \lceil\log_2(p_{\mathrm{acc}}^{-1})\rceil$.
Then it follows from a simple calculation that if $\tilde{p}$ and $\tilde{q}$ are such that $|p-\tilde{p}|\leq 2^{-b}$ and $|p-\tilde{p}|\leq 2^{-b}$, 
then $|p_{{\func{f}}} - \tilde{p}/(\tilde{q} + \tilde{p})| \leq 2^{-(d+1)}$.
Hence, to calculate $p_{{\func{f}}}$ to the desired precision, 
we need to perform $O(|V(C)|)$ arithmetic operations between dyadic rationals truncated to at most the first 
    $$2|V(C)| + b = 2|V(C)| + d + 2 + \lceil\log_2(p_{\mathrm{acc}}^{-1})\rceil$$
fractional binary digits.
To do so, we need to know a non-zero lower bound on $p_{\mathrm{acc}} = p + q$,
which we can find using a linear search terminating after at most $\log_2(p_{\mathrm{acc}}^{-1})$ steps, 
each requiring $O(|V(C)|)$ truncated binary arithmetic operations.
Each arithmetic operation takes at most $O((2|V(C)| + b)^2)$ time (using the standard algorithm for multiplication for simplicity).
All in all, computing $p_{{\func{f}}}$ in this way takes $O(|V(C)| \cdot (2|V(C)| + b)^2 \cdot \log_2(p_{\mathrm{acc}}^{-1}))$ time.
Using the bounds on $|V(C)|$ from \Cref{thm:semantics-in-matrix-categories}, 
we obtain a time complexity of 
$$O\left(LMN\cdot 2^{O(k)} + MN\cdot 2^{O(k)} \cdot (2MN\cdot 2^{O(k)} + b)^2 \cdot \log_2(p_{\mathrm{acc}}^{-1})\right). $$
Simplifying this expression (using that $L, M, N, d$ are all positive) completes the proof. \qed

\section{Monoidal width and hypergraph terms}\label{sec:relation-to-monoidal-width}

The idea of finding `good' decompositions into composites and tensor products is conceptually very similar to the \emph{monoidal decompositions} and \emph{monoidal width} of \cite{di2023monoidal}. 
However, their notion of width also substantially differs from ours.  
Intuitively, the width of a monoidal decomposition does not penalise tensor products, 
whereas tensor products may generally increase the width of a hypergraph term in our sense.

More precisely,  let $C=\underline{\mathrm{Term}}(\Sigma)$
be the monoidal category of hypergraph terms over a monoidal signature $\Sigma$
(see \Cref{sec:setup})
and let $\mathcal{A}\subseteq \mathrm{Mor}(C)$
be the set of all basic hypergraph terms, i.e.~those hypergraph terms $t$ with $|t| = 1$.
We define a \emph{width function} $\mathsf{w}$ in the sense of \cite[Definition 2.2]{di2023monoidal} 
as $\mathsf{w}(f) = |\mathrm{dom}(f)| + |\mathrm{codom}(f)|$ for `atomic morphisms' $f\in \mathcal{A}$
and $\mathsf{w}(X) = |X|$ on objects, where $|X|$ is the length of the list $X$.
With these definitions in place, it follows directly from the definition of monoidal width that for all hypergraph terms $t \in \mathrm{Term}(\Sigma)$,
$$\mathsf{wd}(\,\underline{t}\,) \leq \mathsf{width}(t), $$ 
where $\mathsf{wd}(\underline{t})$ is the width of the obvious monoidal decomposition of $\underline{t}$ (according to the structure of $t$). 
However, the converse inequality does not generally hold, as can be seen from the term $t= \mathrm{id}_{\mathbb{B}} \otimes \mathrm{id}_{\mathbb{B}}$ which has a width of $4$, 
but the corresponding monoidal decomposition of $\underline{t}$ has a width of only $2$. 

We do not know if a different connection of our ideas 
to those of \cite{di2023monoidal} could be found, 
or if this connection could be used to leverage the theory of monoidal width 
in our algorithmic setting; 
exploring this possibility would require further research.

\section{Database queries as dot diagrams}\label{sec:queries-as-dot-diagrams}

To make the translation from conjunctive queries to dot diagrams precise, 
we first recall some standard definitions from relational database theory;
see, for example, \cite{grohe2001evaluation}.

\begin{definition}
    Let $\sigma$ be a relational signature in the sense of first-order logic, 
    i.e.~a tuple 
    $$\sigma = \left((R_i^\sigma)_{i\in I}, \mathrm{arity}\colon I\to \mathbb{N}\right)$$ 
    consisting of a finite family of relation symbols
    and a function assigning to each relation symbol its arity.
    In the context of databases, $\sigma$ is also called the \emph{schema}.

    A \emph{relational database}, or \emph{instance}, over $\sigma$ is a $\sigma$-structure, 
    i.e.~a pair $\mathcal{A} = (D, (R_i)_{i\in I})$ consisting of a finite set $D$, the \emph{domain}, 
    and for each index $i\in I$, a relation $R_i \subseteq D^{\mathrm{arity}(i)}$.

    A \emph{conjunctive query}, or \emph{select-join-project} query,
    with $k$ free variables is a first-order formula over $\sigma$ of the form
        $$ \phi(x_1,\dots,x_k) =\exists y_1 \dots \exists y_n \;\left(R_{i_1}^\sigma(z^{1}_1, \dots, z^1_{\mathrm{arity}(i_1)}) \,\land \,\dots \,\land\, R_{i_m}^\sigma(z^{m}_1, \dots, z^m_{\mathrm{arity}(i_m)})\right), $$
    for some $n,m\in \mathbb{N}$, where each $z^j_l \in \{x_1,\dots,x_k,y_1,\dots,y_n\}$,
    i.e.~$\phi$ is an existentially quantified conjunction of atoms.
    The \emph{evaluation} of a conjunctive query $\phi(x_1,\dots,x_k)$ on a relational database $\mathcal{A} = (D, (R_i)_{i\in I})$ is the relation
    $$ \phi(\mathcal{A}) := \{(x_1, \dots, x_k) \in D^k\mid \mathcal{A} \models \phi(x_1, \dots, x_k) \}. $$
\end{definition}

\begin{example}\label{example:booking-query}
    Let $\sigma$ be the relational signature with three relation symbols
    $\mathsf{Bookings}, \mathsf{Hotels}, \mathsf{Cities}$ of arity $3$, $2$, and $3$, respectively.
    These relation symbols are interpreted as follows:
    \begin{itemize}
        \item $\mathsf{Bookings}(u, h, d)$: user $u$ has a booking at hotel $h$ on date $d$;
        \item $\mathsf{Hotels}(h, p)$: hotel $h$ has postcode $p$;
        \item $\mathsf{Cities}(c, p, k)$: postcode $p$ is located in city $c$ in country $k$.
    \end{itemize}
    In practice, a database instance over $\sigma$ consists of one table for each relation symbol, 
    recording those tuples satisfying the relation.
    The following conjunctive query asks \emph{for which pairs $(u,c)$ of a user and a city does the user have a booking at some hotel located in that city}:
\[
\phi(u,c) \;=\; \exists h \exists p \exists d \exists k \;\bigl(
\mathsf{Bookings}(u,h,d)
\;\land\;
\mathsf{Hotels}(h,p)
\;\land\;
\mathsf{Cities}(c,p,k)
\bigr).
\]
    Equivalently, as a SQL query:
\begin{center}
\begin{minipage}{0.72\linewidth}
\begin{lstlisting}[style=sql]
SELECT u, c
FROM   Bookings, Hotels, Cities
WHERE  Bookings.h = Hotels.h
  AND  Hotels.p = Cities.p
\end{lstlisting}
\end{minipage}
\end{center}
\end{example}

The next step is to associate a dot diagram to every conjunctive query.
In the following, $\mathsf{FinRel}$ denotes the hypergraph category of finite sets and relations between them.
For our purposes, $\mathsf{FinRel}$ can be defined as the matrix hypergraph category $\mathsf{Mat}_{\mathbb{B}}$, 
where $(\mathbb{B}, \lor, \land)$ is the Boolean semiring. 
We identify matrices over the Boolean semiring with finite relations in the obvious way.

\begin{definition}
    Let $\sigma = \left((R_i^\sigma)_{i\in I}, \mathrm{arity}\colon I\to \mathbb{N}\right)$ be a relational signature, 
    and let 
    $$ \phi(x_1,\dots,x_k) =\exists y_1 \dots \exists y_n \;\left(R_{i_1}^\sigma(z^{1}_1, \dots, z^1_{\mathrm{arity}(i_1)}) \,\land \,\dots \,\land\, R_{i_m}^\sigma(z^{m}_1, \dots, z^m_{\mathrm{arity}(i_m)})\right) $$
    be a conjunctive query over $\sigma$ with $k$ free variables
    $x_1, \dots, x_k$. 
    
    The \emph{monoidal signature $\Sigma$ associated to $\sigma$}
    has a single sort $\mathbb{D}$ and for each relation symbol $R_i$ 
    one function symbol $\func{R}_i: (\mathbb{D}) \to (\,)$.
    
    The \emph{dot diagram $\func{\phi}^{\bullet}$ associated to $\phi$} 
    is the following dot diagram:
    \begin{align*}
        &\func{\phi}^{\bullet}(x_1, \dots, x_k) := \\
        &\quad \func{R}_{i_1}^\sigma(x^{1}_1, \dots, x^1_{\mathrm{arity}(i_1)}); \\
        &\quad \dots \\
        &\quad \func{R}_{i_m}^\sigma(x^{m}_1, \dots, x^m_{\mathrm{arity}(i_m)}); 
    \end{align*}
    Finally, the \emph{interpretation $\mathcal{A}^{\bullet}\colon \Sigma \to \mathsf{FinRel}$  
    associated to a relational database $\mathcal{A} = (D, (R_i)_{i\in I})$ over $\sigma$} associates to the single sort $\mathbb{D}$ the domain $D$ and to each function symbol $\func{R}_{i}$ in the monoidal signature $\Sigma$
    the relation $R_i$, which is a morphism $D^{\mathrm{arity}} \to \varnothing$ in the category $\mathsf{FinRel}$ of finite sets and relations between them. 
\end{definition}

\begin{example}
    Continuing \Cref{example:booking-query},
    the monoidal signature $\Sigma$ associated to $\sigma$ has a single sort $\mathbb{D}$ and three function symbols:
    \begin{align*}
        \func{Bookings}&\colon (\mathbb{D}, \mathbb{D}, \mathbb{D}) \to (\,), &
        \func{Hotels}&\colon (\mathbb{D}, \mathbb{D}) \to (\,), &
        \func{Cities}&\colon (\mathbb{D}, \mathbb{D}, \mathbb{D}) \to (\,).
    \end{align*}
    The dot diagram $\func{\phi}^\bullet$ associated to $\phi$ is:
    \begin{align*}
        &\func{\phi}^\bullet(u, c) := \\
        &\quad \func{Bookings}(u, h, d); \\
        &\quad \func{Hotels}(h, p); \\
        &\quad \func{Cities}(c, p, k);
    \end{align*}
    The variable set of $\func{\phi}^\bullet$ is $\{u, c, h, d, p, k\}$, all of sort $\mathbb{D}$.
    The input list is $(u, c)$ (the free variables of $\phi$), the output list is empty,
    and the variables $h, d, p, k$ are internal (existentially quantified).
    Under the interpretation $\mathcal{A}^\bullet$,
    the semantics $[\![\func{\phi}^\bullet]\!]_{\mathcal{A}^\bullet}$
    is the Boolean matrix (relation) $\phi(\mathcal{A}) \subseteq D^2$
    recording precisely those pairs $(u, c)$ satisfying the query.
\end{example}

We can now show that evaluating a conjunctive query is indeed the same as computing the semantics of the corresponding dot diagram.

\begin{lemma}
    Let $\sigma$ be a relational signature, 
    let $\phi$ be a conjunctive query over $\sigma$, 
    and let $\mathcal{A}$ be a relational database.

    Then the evaluation of $\phi$ on $\mathcal{A}$ is the semantics of the dot diagram $\func{\phi}^{\bullet}$ under the interpretation $\mathcal{A}^{\bullet}$.
        $$ \phi(\mathcal{A}) = [\![\func{\phi}^{\bullet}]\!]_{\mathcal{A}^{\bullet}}. $$
\end{lemma}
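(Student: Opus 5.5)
We prove that $\phi(\mathcal{A}) = [\![\func{\phi}^{\bullet}]\!]_{\mathcal{A}^{\bullet}}$ by giving an explicit formula for the semantics of an arbitrary dot diagram in $\mathsf{Mat}_R$ as a sum over variable assignments. We then specialise this formula to the Boolean semiring.

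\textbf{Step 1: the state-sum formula.} Let $\func{g}$ be any dot diagram and $\mathcal{I}\colon\Sigma\to\mathsf{Mat}_R$ an interpretation. Identify the object $\mathcal{I}(S)$ with a finite index set. Define
\[
Z(\func{g})_{\mathbf{j},\mathbf{i}} := \sum_{\alpha} \prod_{(\mathbf{y},h,\mathbf{x})\in A_{\func{g}}} \mathcal{I}(h)_{\alpha(\mathbf{y}),\alpha(\mathbf{x})},
\]
where $\alpha$ ranges over sort-respecting valuations of $X_{\func{g}}$ such that $\alpha(\mathsf{in}_{\func{g}})=\mathbf{i}$ and $\alpha(\mathsf{out}_{\func{g}})=\mathbf{j}$.

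\textbf{Step 2: $Z$ is the semantics.} We show $Z$ is a strict hypergraph functor $\underline{\mathrm{Dot}}(\Sigma)\to\mathsf{Mat}_R$ with $Z(\langle f\rangle)=\mathcal{I}(f)$. Uniqueness in \Cref{thm:dot-diagrams-free-hypergraph-category} then gives $Z=[\![\cdot]\!]_{\mathcal{I}}$. The checks are as follows.
\begin{enumerate}
\item $Z$ is invariant under equivalence of dot diagrams, since an isomorphism of variable sets just reindexes valuations. One caveat: equivalence compares sets of assignments, so we should assume $A$ has no repeated entries. Duplicates do not affect the query case because $\mathbb{B}$ is idempotent.
\item On $\langle f\rangle$, $Z$ returns $\mathcal{I}(f)$.
\item On $\func{copy}$, $\func{del}$, $\func{equate}$, $\func{new}$, $\func{id}$ and $\func{swap}$, the single shared variable forces the Kronecker deltas that define these maps in $\mathsf{Mat}_R$.
\item Tensor is disjoint union, so valuations factor and $Z$ gives the Kronecker product.
\item For composition, a valuation of $X_{\func{h}\circ\func{g}}$ is exactly a compatible pair of valuations of $X_{\func{g}}$ and $X_{\func{h}}$ that agree on the glued boundary. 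Summing first over the glued boundary value $\mathbf{k}$ yields $\sum_{\mathbf{k}} Z(\func{h})_{\mathbf{j}\mathbf{k}}Z(\func{g})_{\mathbf{k}\mathbf{i}}$, which is the matrix product.
\end{enumerate}
The composition case is the only real obstacle, and it is delicate only when boundary lists contain repeated variables. We handle this by noting that the quotient by $\sim$ still gives a bijection between valuations of the composite and pairs of valuations agreeing on identified variables. Repeated variables merely impose equality constraints on both sides alike.

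\textbf{Step 3: specialise to the query.} Take $R=\mathbb{B}$, where sum is $\lor$ and product is $\land$, and take the interpretation $\mathcal{A}^{\bullet}$, under which $\func{R}_i$ is the relation $R_i$ viewed as a morphism with no outputs. Then $Z(\func{\phi}^{\bullet})$ at input $(x_1,\dots,x_k)$ equals $\bigvee_{\alpha}\bigwedge_j [\alpha(\mathbf{z}^j)\in R_{i_j}]$. Here $\alpha$ extends the given free-variable values to the bound variables $y_1,\dots,y_n$, which are the internal variables of $\func{\phi}^{\bullet}$. The disjunction over extensions is exactly the existential quantification, and the conjunction is the matrix of $\phi$. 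Hence this entry is true iff $\mathcal{A}\models\phi(x_1,\dots,x_k)$, which proves the lemma.

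A last subtlety concerns the inputs $x_1,\dots,x_k$. They should be pairwise distinct, so that the boundary matches a tuple in $D^k$. If a free variable does not occur in any atom, it is still a variable of $\func{\phi}^{\bullet}$ by the input condition, and both sides are unconstrained in that coordinate, so they agree.
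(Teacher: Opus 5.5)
Your proof is correct and takes the same route as the paper: the paper invokes the state-sum (tensor-contraction) formula of Kissinger's Theorem~5.1 and specialises it to the Boolean semiring, so that the sum over valuations of dots becomes the chain of existential quantifiers and the product over boxes becomes the conjunction of atoms. The only difference is that you derive that formula yourself from the uniqueness clause of \Cref{thm:dot-diagrams-free-hypergraph-category} instead of citing it. Your remark that idempotence of $\wedge$ in $\mathbb{B}$ makes $Z$ well defined on the paper's set-based equivalence classes is what makes this work, since a `no repeated assignments' restriction alone is not preserved by composition.
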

\begin{proof}
    This follows from (a slight generalisation) of \cite[Theorem 5.1]{kissinger2014finite} applied to $\underline{\mathrm{Dot}}(\Sigma)$, 
    where $\Sigma$ is the monoidal signature associated to $\sigma$, 
    and the Boolean semiring.
    The sum in the formula given there becomes the chain of existential quantifiers 
    and the product becomes the disjunction of atoms in $\phi$.
\end{proof}

\end{document}